\newcommand{\vb}{\vspace{3mm}}
\renewcommand{\bar}{\overline}
\newcommand{\avarbar}{{{\overline{\rm AVaR}_i^{\alpha}}(S,T)}}
\newcommand{\avarbare}{{{\overline{\rm AVaR}_1^{\alpha}}(S,T)}}
\newtheorem{theorem}{Theorem}
\newtheorem{lemma}{Lemma}
\newtheorem{property}{Property}
\theoremstyle{definition}
\newtheorem{remark}{Remark}
\begin{document}

\title{On Capital Allocation for a Risk Measure\\ Derived from Ruin Theory}

\author[1,2]{G.A. Delsing
\footnote{Corresponding author. E-mail address: G.A.Delsing@uva.nl}}
\author[1,3]{M.R.H. Mandjes}
\author[1,4]{P.J.C. Spreij}
\author[1,2]{E.M.M. Winands}

\affil[1]{Korteweg-de Vries Institute 
%for Mathematics, 
University of Amsterdam, Science Park 107, 1098 XH Amsterdam, the Netherlands}

\affil[2]{Rabobank, Croeselaan 18, 3521 CB Utrecht, the Netherlands}

\affil[3]{CWI, Science Park 123, 1098 XG Amsterdam, the Netherlands}

\affil[4]{Radboud University, Heyendaalseweg 135, 6525 AJ Nijmegen, the Netherlands}
\maketitle
%Abstract
\begin{abstract} \noindent This paper addresses allocation methodologies for a risk measure inherited from ruin theory. Specifically, we consider a dynamic value-at-risk (VaR) measure defined as the smallest initial capital needed to ensure that the ultimate ruin probability is less than a given threshold. We introduce an intuitively appealing, novel allocation method, with a focus on its application to capital reserves which are determined through the dynamic value-at-risk (VaR) measure. Various desirable properties of the presented approach are derived including a limit result when considering a large time horizon and the comparison with the frequently used gradient allocation method. In passing, we introduce a second allocation method and discuss its relation to the other allocation approaches. A number of examples illustrate the applicability and performance of the allocation approaches.

\vb

%Keywords
\noindent
\begin{keyword}
\noindent risk capital allocation; gradient allocation method; value-at-risk (VaR); ruin probability; insurance risk; 
\end{keyword}
\end{abstract}

\section{Introduction}\label{sec_intro}
Ruin theory (or risk theory) focuses on analyzing models that describe a company's vulnerability to ruin by studying the riskiness of a firm's reserves. The probability of ruin, i.e., the probability that the capital reserve level of a firm drops below zero, is often used as an insolvency measure. Starting from the seminal works by \cite{Cramer1930} and \cite{Lundberg1903}, a substantial research effort has been spent on determining the ruin probability in a broad range of risk models. In the basic model, the evolution of the capital reserves of a firm over time experiences fluctuations due to losses incurred (amounts claimed in the insurance context) and premiums earned. Initially, the focus of ruin theory has been on the probability of ultimate ruin, i.e., the probability that the capital reserve level {\it ever} drops below zero given the initial capital reserve \(u\). Later these results have been extended in many ways, most notably (i) ruin in finite time, (ii) more general loss/claim arrival processes, and (iii) asymptotics of the ruin probability for \(u\) large. We refer to e.g., \cite{Asmussen2010} for a detailed account.
Whereas most of the existing literature primarily considers the univariate risk setting describing a single capital reserve process, in practice firms often have multiple lines of business. This warrants the study into multivariate risk processes, see e.g., the overview in Chapter XIII.9 of \cite{Asmussen2010}. Various dependence structures have been considered, such as the introduction of common environmental factors, see e.g., \cite{Loisel2007}, or a shared claims process, see e.g., \cite{Picard2003}. 

Ruin theory originating in the actuarial sciences, has become a commonly used tool in the insurance industry. It also has applications in operational risk (see e.g., \cite{Kaishev2008}), credit risk (see e.g., \cite{Chen2009}), and various related fields.

\vb

Traditionally, a firm's risk of insolvency is managed through the control of the initial capital reserve. For instance, companies tune this initial capital reserve level, say $u$, such that their loss within a certain period does not exceed $u$ with a given (low) probability. This risk measure is often referred to as the {\it value-at-risk} (VaR). In many branches of industry, such as insurance and banking, regulation imposes restrictions on the capital reserves: these have to be at least equal to some appropriate VaR over a 1-year horizon. In this paper we work with a risk measure which is defined as the smallest amount of initial capital needed to guarantee a certain probability of solvency over a specified period. As this risk measure can also be seen as the VaR of the maximal aggregate loss encountered over the period, we refer to this as the dynamic VaR measure. This dynamic VaR risk measure, derived from actuarial ruin theory, was first mentioned by \cite{Cheridito2006} and is based on the infinite time ruin probability. Later, in \cite{Trufin2011} \& \cite{Trufin2016}, various properties of this risk measure were examined. 

\vb

As mentioned above, in practice there are good motives to consider the multivariate counterpart of the conventional univariate risk model. Indeed, many firms want their total initial capital to be allocated over multiple business lines. A first reason for this is that it allows them to transfer the cost of holding capital to clients. In the second place, the allocation of expenses across business lines is a necessary activity for financial reporting purposes. Finally, capital allocation provides a useful device for assessing and comparing the performance of the different lines of business including the quantification of risk. 

A variety of capital allocation principles have been proposed in literature. The study of capital allocation can be traced back to the work of \cite{LeMaire1984} discussing capital allocations in a game-theoretic framework. \cite{Cummins2000} provides an overview of several methods for capital allocation in the insurance industry. One of the most important and intensively studied capital allocation methods is the {\it gradient allocation method}, also sometimes referred to as the Euler allocation method, which has been proposed by several authors, see \cite{Tasche2007} for an overview. It is based on the idea of allocating capital according to the infinitesimal marginal impact of each individual risk. In \cite{Tasche1999}, the author argues that allocation based on the gradient principle is the only allocation method that provides the right signals for performance measurement.

\vb

In most of the existing literature on allocation methods the risk is modeled via the terminal value of the risk process at a given time horizon \(T\). A challenge, however, lies in incorporating {\it path-dependent} information in the allocation method (e.g., considering the event of capital reserves dropping below \(0\) before \(T\)), being of interest specifically when focusing on ruin-based risk measures. As mentioned by \cite{Assa2016}, it is particularly difficult to apply existing allocation methods to ruin-based risk measures such as the dynamic VaR measure. Although scarce, ruin-based allocation methods have been proposed in literature, see e.g., \cite{Dhaene2003}, \cite{Frostig2009}, \cite{Li2015} and \cite{Cai2017}. In these works the authors minimize certain multivariate ruin probabilities to obtain the optimal capital allocation across the risk processes or business lines. Another approach was taken by \cite{Assa2016} who reverse-engineer a risk measure with the purpose of addressing the non-trivial problem of capital allocation in a ruin theory context.

\vb

In this paper we first propose a novel capital allocation method when the underlying risk process is of multivariate L\'{e}vy type. This capital allocation is based on the contribution (at the time of ruin, that is) of each of the individual risk processes to the change in the total aggregated capital reserve level. It thus yields an intuitive way of allocating capital taking into account the path-dependent information, and does not require any optimization to be performed. Several other desirable properties of the method will be highlighted. The special case of a multivariate Brownian motion, for which an explicit allocation is found, is dealt with separately. Furthermore, we provide, under certain conditions, an asymptotic result for the allocation in the specific case that ruin over an infinitely long time interval is considered. 

Secondly, we show that for some particular cases (including multivariate scaled Brownian motions with drift) that our new allocation method gives the same capital allocations as the well-known gradient capital allocation method applied to the dynamic VaR measure. To our knowledge, this is the first time in literature that the gradient allocation method is applied to the dynamic version of the VaR measure. In passing we present a second new allocation method, which, when properly defined, is shown to give the same allocations as the first new allocation method when considering an infinite time horizon. The second allocation method is based on the contribution, at the time the supremum of the total aggregated risk process is attained, of each of the individual risk processes to the total aggregated capital reserve level (conditional on the level of the supremum of the total aggregated risk process). We conclude this paper by a series of numerical experiments highlighting some of the differences between the new allocation methods.

\vb

This paper is organized as follows. Section 2 provides a formal model description and some preliminaries including the dynamic VaR measure. Then in Section 3 we propose our novel capital allocation approach, including treatments of (i) the special case of a multivariate Brownian motion, and (ii) the setting in which the time horizon \(T\) is infinite. Section 4 establishes the relation with the gradient capital allocation method and presents an alternative new allocation method.
Numerical examples are provided in Section 5.

\section{Risk model and risk measure}\label{sec_riskmodel}
In this section we introduce our risk model and the risk measure which we focus on throughout this paper. In our setup the risk process is a multidimensional process, whereas the risk measure is defined in terms of the sum of these processes. 
\subsection{Risk model}\label{subsec_riskmodel}
We start by constructing the aggregated risk (or loss) process \(S(\cdot)\). To this end, consider the \(d\)-dimensional real-valued {\it L\'{e}vy process} \(\{\boldsymbol{S}(t),t\geq 0\}\) on the probability space \((\Omega,\mathcal{F},\mathbb{P})\), where \(\boldsymbol{S}(t)=(S_1(t),\ldots,S_d(t))^\top\). We take \(d\geq 2\) and assume \(S_i(0)=0\) for all \(i\).
The process can be characterized by its L\'{e}vy exponent \(\kappa_{\boldsymbol{S}}(\cdot)\), which is given for \(\boldsymbol{\vartheta}:=(\vartheta_1,\ldots,\vartheta_d)^\top\in\mathbb{R}^d\) by
\[\mathbb{E}\left[e^{ \langle \boldsymbol{\vartheta}, \boldsymbol{S}(t)\rangle}\right]=e^{t\kappa_{\boldsymbol{S}}({\boldsymbol \vartheta})};\]
see e.g., \cite{Sato1999} or \cite{Bertoin1996}. The notation \(\langle\cdot,\cdot\rangle\) denotes the usual inner product and the domain of the L\'{e}vy exponents includes imaginary numbers.
The L\'{e}vy exponent of the process \(\boldsymbol{S}(\cdot)\) is necessarily of the form
\[\kappa_{\boldsymbol{S}}(\boldsymbol{\vartheta})=\langle \boldsymbol{c} ,\boldsymbol{\vartheta}\rangle+\frac{1}{2}\langle \boldsymbol{\vartheta},\boldsymbol{\Sigma\vartheta}\rangle +\int_{\mathbb{R}^d}\left(e^{\langle\boldsymbol{\vartheta},\boldsymbol{y}\rangle}-1-\langle\boldsymbol{\vartheta},\boldsymbol{ y}\rangle \mathbbm{1}_{D(\boldsymbol{y})}\right)\Pi({\rm d}\boldsymbol{y}),\]
where \(\boldsymbol{c}:=(c_1,\ldots,c_d)^\top\in\mathbb{R}^d\), \(\boldsymbol{\Sigma}:=(\Sigma_{ij}))_{i,j\in \{1,\ldots,d}\) a symmetric non-negative definite matrix on \(\mathbb{R}^d\) and \(D:=\{\boldsymbol{y}:|\boldsymbol{y}|\leq 1\}\) the closed unit ball. One refers to \((\boldsymbol{c},\boldsymbol{\Sigma},\Pi)\) as the characteristic triplet. The first term corresponds to a deterministic drift, the second term to a diffusion part, and the third term to the process' 
%(inherently positive) 
jumps. Regarding this third part, \(\Pi(\cdot)\) is often referred to as the L\'{e}vy measure on \(\mathbb{R}^d-\{0\}\), and satisfies \(\Pi((0,\ldots,0))=0\) and \(\int_{\mathbb{R}^d}\left(|\boldsymbol{y}|^2\wedge 1\right)\Pi({\rm d}\boldsymbol{y})<\infty\).
{In this paper we often introduce assumptions of the form \(\mathbb{E}\left[\big|\boldsymbol{S}(t)\big|\right]<\infty\) for all \(t\). As a consequence of Theorem 25.3 in \cite{Sato1999} (as highlighted in Example 25.12 in \cite{Sato1999}) this condition holds true if and only if \(\int_{|\boldsymbol{x}|>1}|\boldsymbol{x}|\Pi({\rm d}\boldsymbol{x})<\infty\). As a result, the condition \(\mathbb{E}\left[\big|\boldsymbol{S}(t)\big|\right]<\infty\) for all \(t\) is also equivalent to \(\mathbb{E}\left[\big|\boldsymbol{S}(1)\big|\right]<\infty\).}

We define the aggregated risk process by
\[S(\cdot):=\sum_{i=1}^dS_i(\cdot),\]
which by Proposition 11.10 of \cite{Sato1999} is again a L\'{e}vy process, whose L\'{e}vy exponent is denoted by \(\kappa(\cdot)\). In a ruin context, we let \(u-S(t)\) represent the capital surplus at time \(t\) of the entire firm (defined as the sum of its separate lines of business), given that the initial capital reserve level was \(u>0\). This means that the probability of ruin over some time horizon can be expressed as the probability that the risk/loss process \(S(\cdot)\) exceeds the level \(u\) at some point over the time horizon, i.e.,
\[\psi(u,T):=\mathbb{P}\left(\tau(u)\leq T\right)=\mathbb{P}\left(\sup_{t\in[0,T]}S(t)\geq u\right), \ \ \ \ \psi(u,\infty):=\mathbb{P}\left(\tau(u)<\infty\right)=\mathbb{P}\left(\sup_{t\in[0,\infty)}S(t)\geq u\right),\]
where the time of ruin is then defined as
\[\tau(u):=\inf\{t\geq 0:S(t)\geq u\}.\]

In this paper we focus on techniques pertaining to the determination of an appropriate initial capital reserve level \(u\) for the aggregated risk process \(S(\cdot)\) by imposing a bound on the probability of ruin, and subsequently allocating the capital $u$ over the individual risk processes. 

\subsection{Risk Measure Derived from Ruin Theory}\label{subsec_riskmeasure}
In this subsection we present the {\it dynamic value-at-risk} (VaR) measure for the aggregated risk process \(\{S(t)\}_{t\in[0,T)}\) to determine capital reserves. This risk measure has been derived from ruin theory and was introduced by \cite{Trufin2011}. The dynamic VaR measure is defined as the minimum initial capital reserve level \(u\) such that the probability of ruin is below a given threshold. In other words, for (typically small) \(\alpha\in[0,1]\), aggregated risk process $S(\cdot)$, and time horizon $T>0$,
\begin{equation}\label{eq_VaR}
{\rm VaR}^{\alpha}(S,T):=\inf\{u\geq 0\,|\, \psi(u,T)\leq \alpha\}.
\end{equation}
In the remainder of this paper special attention will be paid to the allocation of capital when the capital reserve level has been determined by this dynamic VaR measure.
 
\vb 

Desirable properties of risk measures have been extensively analyzed in literature. For a more extensive account of risk measures and their properties we refer to, e.g., the original works by \citet*{Artzner1997, Artzner1999}. These works coin the concept of a {\it coherent} risk measure by introducing a list of four \textit{axioms}.
In the context of the infinite time ruin probability, \({\rm VaR}^{\alpha}(S,\infty)\) and its properties have been studied by \cite{Trufin2011}. In line with those results, the dynamic VaR measure with a finite time horizon satisfies the following axioms:
\begin{description}
\item[Axiom 1] (Translation Invariance) For \(\gamma \in \mathbb{R}\), \({\rm VaR}^{\alpha}(S+\gamma,T)={\rm VaR}^{\alpha}(S,T)+\gamma\).
\item[Axiom 2] (Positive Homogeneity) For every \(\gamma>0\), \({\rm VaR}^{\alpha}(\gamma S,T)=\gamma{\rm VaR}^{\alpha}(S,T)\).
\item[Axiom 3] (Monotonicity) For \(S_1(t)\leq S_2(t)\) a.s. for all \(t\in[0,T)\) then \({\rm VaR}^{\alpha}(S_1,T)\leq {\rm VaR}^{\alpha}(S_2,T)\).
\end{description}

Note that in this context \(S(\cdot)\) is referred to as a `loss'. The remaining axiom (generally not satisfied by VaR-type risk measures) for a coherent risk measure is sub-additivity, i.e.,

\begin{description}
\item[Axiom 4] (Sub Additivity) \({\rm VaR}^{\alpha}(S_1+S_2,T)\leq {\rm VaR}^{\alpha}(S_1,T) +{\rm VaR}^{\alpha}(S_2,T)\).
\end{description}

In case the risk processes are scaled Brownian motions with drift, the risk measure \({\rm VaR}^{\alpha}(S,\infty)\) is also sub-additive. This is explicitly shown in Section~\ref{subsec_BM_GVAR}.

\section{A capital allocation approach}\label{sec_Aallocation}
In this section we focus on the allocation of capital of the aggregated risk process \(S(\cdot)\) over the individual risk processes \(S_i(\cdot)\). More specifically, we introduce an intuitively appealing novel method to allocate the (initial) capital reserve level of the aggregated risk process \(u\) to capital reserves for the individual risk processes \(u_i\) based on the risk contribution of the processes at the time of ruin. Special focus is given to its application to capital reserves which are determined through the dynamic VaR measure. In addition to presenting some nice properties of the proposed allocation method, in particular over an infinite time horizon, we also find an explicit expression for the special case of scaled Brownian motions with drift.

\subsection{The capital allocation method and its properties}\label{subsec_AVaR}
As mentioned in Section~\ref{subsec_riskmeasure}, an intuitive way to determine the (initial) capital level \(u\) for the aggregated risk process \(S(\cdot)=\sum_{i=1}^d S_i(\cdot)\) is by use of the dynamic VaR measure. In other words, by determining the minimum reserve level needed to ensure that the probability of ruin is below some threshold \(\alpha\), i.e., \(\mathbb{P}\left(S(\tau(u))\geq u\right)\leq \alpha\). A natural way to determine the contribution of risk process \(S_i(\cdot)\) (or business line \(i\)) to \(u\), is to consider its contribution to \(S(\tau(u))\). Given the capital reserve level \(u\) for the aggregated process \(\{S(t)\}_{t\in [0,T]}\), we propose to allocation \(K_i(u,S,T)\) to the \(i\)\textsuperscript{th} risk process \(S_i(\cdot)\), where
\begin{equation}\label{eq_Kallocation}
K_i(u,S,T):=c_i(u,S,T)\,u, \ \ \ c_i(u,S,T):=\frac{\mathbb{E}[S_i(\tau(u))\,|\,\tau(u)\leq T]}{\mathbb{E}[S(\tau(u))\,|\,\tau(u)\leq T]}.
\end{equation}
For the infinite time horizon we consider the allocation \(K_i(u,S,\infty):=c_i(u,S,\infty)\,u\), with \[c_i(u,S,\infty):=\frac{\mathbb{E}[S_i(\tau(u))|\tau(u)<\infty]}{\mathbb{E}[S(\tau(u))|\tau(u)< \infty]}.\] The \(c_i(u,S,T)\) add up to 1 (but are not necessarily positive), entailing that for \(u_i:=K_i(u,S,T)\) we have \(\sum_{i=1}^du_i=u\). This is known as the {\it full allocation principle}, which evidently is a desirable property for an allocation method. More properties of this allocation method are discussed below. To ensure that the allocation method is properly defined we assume from this point onward that \(\mathbb{P}(\tau(u)\leq T)>0\).

\begin{remark}\label{rem_AVAR}
When the capital level \(u\) is determined using the dynamic VaR measure presented in Equation \eqref{eq_VaR}, the proposed capital allocation method gives, for risk process \(i\),
\[{\rm AVaR}_i^{\alpha}(S,T):=K_i({\rm VaR}^{\alpha}(S,T),S,T).\]
\end{remark}

%\subsubsection{Properties}\label{subsec_properties_AVaR}
Various desirable properties of allocation methods can be found in literature, see for example, \cite{Denault2001} for an introduction into coherent allocation principles. We now present some properties of the allocation \(K_i(u,S,T)\). 

\begin{property}\label{property1}
The allocation \(K_i(u,S,T)\) possesses the following properties:
\begin{description}
\item[(i)] The allocated risk measure \(K_i(u,S,T)\) is positively homogeneous, that is, \(K_i\left(u,\gamma S_i,T\right)=\gamma K_i(u,S,T)\) for any constant \(\gamma>0\).
\item[(ii)] The allocation satisfies the full allocation principle, that is, \(\sum_{i=1}^d K_i(u,S,T)=u\).
\item[(iii)] If \(\mathbb{E}[S(\tau(u))|\tau(u)< T]=u\), then the allocated risk is deterministic when the marginal risk is deterministic, i.e., \(K_i(u,S,T)=s_i\) if \(S_i(\cdot)\equiv s_i\) is deterministic.
\end{description}
\end{property}
\begin{proof}
All properties follow directly from definition \eqref{eq_Kallocation}.
\end{proof}

\subsection{Infinite horizon}\label{subsec_asymptotics}
In this subsection we are interested in the properties of the allocation method as given by \eqref{eq_Kallocation} when considering an infinite time horizon, i.e., the case \(T= \infty\). To avoid trivialities, we assume throughout that the process \(S(\cdot)\) is a L\'{e}vy process with negative drift, i.e., \(\kappa'(0)<0\) and that it is not the negative of a subordinator (in which case \(\psi(u,\infty)=0\) for all \(u>0\)). Suppose we are in the light-tailed regime, in the sense that the equation \(\kappa(\vartheta)=0\) has a real positive solution. This root, typically referred to as the `Cram\'{e}r root', we denote by \(\vartheta^*\in(0,\infty)\).

%By an exponential change of measure we introduce the alternative measure \(\mathbb{Q}\) defined on \(\mathcal{F}_t\) such that the likelihood ratio process on the finite time interval \([0,t]\) (a martingale) for finite takes the form
%\[\frac{{\rm d}\mathbb{Q}}{{\rm d}\mathbb{P}}\Bigg|_{\mathcal{F}_t}=e^{\vartheta^* S(t)}.\]
%This means that, with \({\boldsymbol S}(t):=(S_1(t),\ldots, S_d(t))^\top\) having a density function \(f_{{\boldsymbol S}(t)}^{\mathbb{P}}({\boldsymbol x})\) for \({\boldsymbol x}=(x_1,\ldots x_d)^\top\in\mathbb{R}^d\), under \(\mathbb{Q}\) this becomes
%\[f_{{\boldsymbol S}(t)}^{\mathbb{Q}}({\boldsymbol x})=f_{{\boldsymbol S}(t)}^{\mathbb{P}}({\boldsymbol x})\,e^{\vartheta^*\sum_i x_i}.\]
%For a more comprehensive overview of this exponential change of measure, often referred to as exponential tilting, we refer the reader to section 3.3 in \cite{Kyprianou2006}.
We apply now an exponential change of measure. Following \cite[Section~3.3]{Kyprianou2006} we introduce the alternative measure \(\mathbb{Q}\) defined on \(\mathcal{F}:=\sigma(S(t),t\geq 0)\). The restrictions of $\mathbb{Q}$ and $\mathbb{P}$ to the $\sigma$-algebras $\mathcal{F}_t:= \sigma(S(u),u\geq t)$, conveniently denoted $\mathbb{Q}_t$ and $\mathbb{P}_t$, are assumed to be mutually absolutely continuous with likelihood ratio process $Z=\{Z_t,\,t\geq 0\}$, a martingale under $\mathbb{P}$, taking the form (recall $\kappa(\vartheta^*)=0$)
\[
Z_t=e^{\vartheta^* S(t)-\kappa(\vartheta^*)t}=e^{\vartheta^* S(t)}.
\]
For a stopping time $\tau$, we will use that the measures $\mathbb{Q}$ and $\mathbb{P}$ restricted to the $\sigma$-algebra $\mathcal{F}_\tau$, these restrictions denoted $\mathbb{Q}_\tau$ and $\mathbb{P}_\tau$, are also mutually absolutely continuous, with likelihood ratio $Z_\tau$ on the set $\{\tau<\infty\}$, see~\cite[Section~III.3]{jacodshiryaev} for further details on measure transformations. Below we apply this to $\tau=\tau(u)$.

Furthermore, with \({\boldsymbol S}(t):=(S_1(t),\ldots, S_d(t))^\top\) having a density function \(f_{{\boldsymbol S}(t)}^{\mathbb{P}}({\boldsymbol x})\) under $\mathbb{P}$ for \({\boldsymbol x}=(x_1,\ldots x_d)^\top\in\mathbb{R}^d\), under \(\mathbb{Q}\) the density becomes
\[f_{{\boldsymbol S}(t)}^{\mathbb{Q}}({\boldsymbol x})=f_{{\boldsymbol S}(t)}^{\mathbb{P}}({\boldsymbol x})\,e^{\vartheta^*\sum_i x_i}.\]
For a more comprehensive overview of this exponential change of measure, often referred to as exponential tilting, we refer the reader to section 3.3 in \cite{Kyprianou2006}.

\vb

Under the exponential change of measure, the process \(\boldsymbol{S}(\cdot)\) is still a L\'{e}vy process (see Theorem 3.9 in \cite{Kyprianou2006}). Concretely, under the alternative measure \(\mathbb{Q}\) the L\'{e}vy exponent of the multivariate process \(\boldsymbol{S}\) is given by \[\kappa_{\boldsymbol{S}}^{\mathbb{Q}}(\boldsymbol{\vartheta}):=\kappa_{\boldsymbol{S}}(\boldsymbol{\vartheta}+\vartheta^*{\boldsymbol 1})-\kappa_{\boldsymbol{S}}(\vartheta^*{\boldsymbol 1}),\] where \(\boldsymbol 1\) denotes the unit vector of dimension \(d\). Similarly, \(\kappa^{\mathbb{Q}}(\vartheta)=\kappa(\vartheta+\vartheta^*)-\kappa(\vartheta^*)\) denotes the L\'{e}vy exponents of the process \(S(\cdot)\) under the new measure. As a result, we find \(\mathbb{E}^{\mathbb{Q}}[S(1)]=(\kappa^{\mathbb{Q}}) '(0)=\kappa'(\vartheta^*)>0\) (which follows from the fact that \(\kappa(\vartheta)\rightarrow\infty\) as \(\vartheta\rightarrow\infty\), the convexity of \(\kappa(\cdot)\) and \(\kappa(0)=0\)), which gives \(\mathbb{Q}(\tau(u)<\infty)=1\).
We find,
\begin{equation}\label{eq_mi}
\mathbb{E}^{\mathbb{Q}}[S_i(t)]=\mathbb{E}\left[S_i(t)e^{\vartheta^* S(t)}\right]=\frac{\partial}{\partial \vartheta_i}\mathbb{E}\left[e^{\langle{\boldsymbol\vartheta},{\boldsymbol S}(t)\rangle}\right]\Bigg|_{{\boldsymbol\vartheta}=\vartheta^*{\boldsymbol 1}}=t\,m_i,
\end{equation}
with
\[m:=\sum_{i=1}^d m_i, \ \ \ \ \ m_i:=\frac{\partial}{\partial \vartheta_i}\mathbb{E}\left[e^{\langle{\boldsymbol\vartheta},{\boldsymbol S}(1)\rangle}\right]\Bigg|_{{\boldsymbol\vartheta}=\vartheta^*{\boldsymbol 1}}=e^{\kappa_{\boldsymbol{S}}(\vartheta^*{\boldsymbol 1})}\frac{\partial\kappa_{\boldsymbol{S}}(\boldsymbol{\vartheta})}{\partial \vartheta_i}\Bigg|_{{\boldsymbol\vartheta}=\vartheta^*{\boldsymbol 1}}.\]
Here it has been used that, in case it exists, the mean of the position of a L\'{e}vy process at time \(t\) is linear in \(t\) (and that the exponentially twisted version of a L\'{e}vy process is again a L\'{e}vy process).

\begin{remark}
Note that in case there is a positive drift, i.e., \(\mathbb{E}[S(1)]=\kappa'(0)>0\), then we find that \(\mathbb{P}(\tau(u)<\infty)=1\) and a change of measure is not necessary to ensure that the time of ruin is finite. 
\end{remark}

We first present a Lemma that is needed in the proof of the main result of this section, Theorem~\ref{th_asymptotic}.

\begin{lemma}\label{lem_UI}
Suppose that \(\mathbb{E}^{\mathbb{Q}}\left[\left|S_i(1)\right|^p\right]<\infty\) for all \(i\) and \(p\geq 1\). Then under the probability measure \(\mathbb{Q}\), for \(i=1,\ldots,d\),
\begin{enumerate}[(i)]
\item \[\frac{S_i(\tau(u))}{u}-\frac{m_i}{m}\xrightarrow{a.s.} 0 \ as \ u\rightarrow\infty;\]
\item \[\left\{\left|\frac{S_i(\tau(u))}{u}-\frac{m_i}{m}\right|^p, u\geq 1\right\} \mbox{ is  uniformly  integrable};\]
\item \[\mathbb{E}^{\mathbb{Q}}\left|\frac{S_i(\tau(u))}{u}-\frac{m_i}{m}\right|^p\rightarrow 0 \ as \ u\rightarrow\infty.\]
\end{enumerate}
\end{lemma}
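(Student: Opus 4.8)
The plan is to deduce (iii) from (i) and (ii) via Vitali's convergence theorem — almost sure convergence together with uniform integrability of the $p$-th powers yields convergence in $L^1(\mathbb{Q})$ — so that the substantive work lies in (i), a pathwise law of large numbers, and (ii), a uniform moment estimate, which I would treat by quite different means. Everything below is under $\mathbb{Q}$.

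For (i), I would first record that $\tau(u)\to\infty$ $\mathbb{Q}$-a.s.\ as $u\to\infty$ (it is nondecreasing in $u$, and a finite limit $T_0$ would force $\sup_{[0,T_0]}S=\infty$, impossible for the c\`adl\`ag process $S$), while $\tau(u)<\infty$ $\mathbb{Q}$-a.s.\ since $\mathbb{E}^{\mathbb{Q}}[S(1)]=m>0$. The strong law of large numbers for L\'evy processes — applicable since $\mathbb{E}^{\mathbb{Q}}|S_i(1)|<\infty$ by hypothesis — gives $S_i(t)/t\to m_i$ and $S(t)/t\to m$ $\mathbb{Q}$-a.s.\ as $t\to\infty$, hence, composing with $\tau(u)\to\infty$, $S_i(\tau(u))/\tau(u)\to m_i$ and $S(\tau(u))/\tau(u)\to m$ a.s. It then remains to show $\tau(u)/u\to 1/m$ a.s. For this I would use the sandwich $S(\tau(u)-)\le u\le S(\tau(u))$ (the left inequality because $S(t)<u$ for $t<\tau(u)$) together with the fact that the largest jump of $S$ on $[0,t]$ is $o(t)$ a.s.\ as $t\to\infty$ — a Borel--Cantelli argument using that all moments of $S(1)$ are finite under $\mathbb{Q}$, so that $\mathbb{Q}(|\boldsymbol S(1)|>a)$ decays faster than any power of $a$. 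Then $0\le S(\tau(u))-u\le S(\tau(u))-S(\tau(u)-)=\Delta S(\tau(u))\le\sup_{s\le\tau(u)}|\Delta S(s)|=o(\tau(u))$, so $u/\tau(u)$ shares the a.s.\ limit $m$ with $S(\tau(u))/\tau(u)$, and dividing the two displays yields (i).

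For (ii), by the de la Vall\'ee-Poussin criterion (applied with a power function) it suffices to find one even integer $q>p$ with $\sup_{u\ge1}\mathbb{E}^{\mathbb{Q}}|S_i(\tau(u))/u-m_i/m|^{q}<\infty$, and since $m_i/m$ is constant this reduces to $\sup_{u\ge1}u^{-q}\,\mathbb{E}^{\mathbb{Q}}[|S_i(\tau(u))|^{q}]<\infty$. To obtain this I would split $\mathbb{E}^{\mathbb{Q}}[|S_i(\tau(u))|^{q}]$ over the events $A_k:=\{(k-1)u/m<\tau(u)\le ku/m\}$, $k\ge1$. On $A_k$ one has $|S_i(\tau(u))|\le M_k:=\sup_{t\le ku/m}|S_i(t)|$, and Doob's $L^{q'}$-maximal inequality for the martingale $t\mapsto S_i(t)-m_it$ combined with the polynomial moment growth of a L\'evy process (using again that $S_i(1)$ has all moments under $\mathbb{Q}$) gives $\mathbb{E}^{\mathbb{Q}}[M_k^{q'}]\le C_{q'}(ku)^{q'}$ for every $q'$ and all $k\ge1$, $u\ge1$. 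The terms $k=1,2$ are then each $\le Cu^{q}$. For $k\ge3$ one has $A_k\subseteq\{S((k-1)u/m)<u\}$ while $\mathbb{E}^{\mathbb{Q}}[S((k-1)u/m)]=(k-1)u$, so Chebyshev's inequality with a high exponent $r$ and the bound $\mathbb{E}^{\mathbb{Q}}|S(T)-Tm|^{r}\le C_rT^{r/2}$ ($T\ge1$, $r\ge2$) gives $\mathbb{Q}(A_k)\le C_r(k-2)^{-r/2}u^{-r/2}$; H\"older's inequality, $\mathbb{E}^{\mathbb{Q}}[M_k^{q}\mathbbm{1}_{A_k}]\le(\mathbb{E}^{\mathbb{Q}}[M_k^{2q}])^{1/2}(\mathbb{Q}(A_k))^{1/2}$, then bounds the $k$-th term by $C\,u^{q-r/4}k^{q}(k-2)^{-r/4}$. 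Choosing $r>4(q+1)$ makes $\sum_{k\ge3}k^{q}(k-2)^{-r/4}$ finite, and since $q-r/4<0$ this tail sum is $\le C$ for $u\ge1$; adding everything yields $\mathbb{E}^{\mathbb{Q}}[|S_i(\tau(u))|^{q}]\le Cu^{q}$ for $u\ge1$, which is exactly what is needed.

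The main obstacle is (ii): estimating moments of $S_i$ evaluated at the \emph{random} time $\tau(u)$, where $S_i$ and $\tau(u)$ are dependent. Splitting over the ranges $A_k$ of $\tau(u)$ is what decouples the maximal inequality (growing polynomially in $k$) from the tail estimate on $\{\tau(u)>(k-1)u/m\}$ (decaying polynomially in $k$ and in $u$), and the hypothesis that \emph{all} moments of $S_i(1)$ are finite under $\mathbb{Q}$ is precisely what lets one take the Chebyshev exponent $r$ large enough to dominate that growth. Part (i), by contrast, is comparatively routine once the almost-sure negligibility of the overshoot — equivalently, of the largest jump up to time $\tau(u)$ — has been established.
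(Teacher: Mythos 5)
Your proposal is correct, and for the substantive part (ii) it takes a genuinely different route from the paper. The paper handles (i) by citing Theorem 5.1 of Gut (1996); you instead reprove that renewal-type result from scratch (SLLN for L\'evy processes plus almost-sure negligibility of the overshoot, the latter controlled by the largest jump), which is sound, though the step bounding the largest jump should pass through the oscillation $\sup_{n-1\le s\le n}|S(s)-S(n-1)|$ rather than the tail of $S(1)$ directly. For (ii) the paper discretizes time via $\tau'(u)=\lceil\tau(u)\rceil$, reduces to a stopped random walk, and invokes a chain of results from Gut (2009) --- uniform integrability of $(\tau'(u)/u)^p$, then of the stopped sums, then a closure lemma --- whereas you prove a single uniform moment bound $\mathbb{E}^{\mathbb{Q}}|S_i(\tau(u))|^{q}\le Cu^{q}$ for one even $q>p$ by partitioning over the range of $\tau(u)$, playing Doob's maximal inequality against a high-order Chebyshev tail estimate, and conclude by de la Vall\'ee-Poussin; your estimates check out (the constants $(k-1)^{r/2}(k-2)^{-r}\le 2^{r/2}(k-2)^{-r/2}$ for $k\ge 3$ make the tail sum behave as claimed). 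The trade-off is that your argument is self-contained but genuinely consumes the full ``all moments finite'' hypothesis (moments of order $2q$ and arbitrarily large Chebyshev exponents $r$), while the paper's citation-based route would run under only a $p$-th moment assumption for each fixed $p$. Part (iii) is identical in both: a.s.\ convergence plus uniform integrability yields $L^1$ convergence.
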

\begin{proof}
(i) Follows directly from Theorem 5.1 from \cite{Gut1996}.

(ii) We use a similar approach as in \cite{Gut1975}. Note that by the definition of a L\'{e}vy process for integer \(n\geq 1\), \(S(n)=\sum_{k=1}^nS(k)-S(k-1)\) is a sum of i.i.d. random variables. For the finite stopping time \(\tau(u)\) under the filtration \(\mathcal{F}^{\mathbb{Q}}\) we define the positive, integer valued stopping time \(\tau'(u)\) (under the same filtration) as follows:
\begin{equation}
\tau'(u)=1 \ \text{for} \ 0\leq\tau(u)\leq 1 \ \ \ \text{and} \ \ \ \tau'(u)=n \ \text{for} \ n-1<\tau(u)\leq n.
\end{equation}
The integer valued stopping time \(\tau'(u)\) can be seen as the stopping of the random walk \(S(n)\) with i.i.d. random variables.
%By assumption we have \(\mathbb{E}^{\mathbb{Q}}\left|S(1)\right|<\infty\) which gives \(\mathbb{E}^{\mathbb{Q}}\left[\tau(u)\right]<\infty\) by Theorem 4.1 of \cite{Gut1974}.
Next, let 
\[U_i(n):=\sup_{n-1\leq s\leq n}|S_i(s)-S_i(n-1)|.\] Then \(\{U_i(n),n\geq 1\}\) is a i.i.d. sequence of random variables with finite expectation under \(\mathbb{Q}\) due to \(\mathbb{E}^{\mathbb{Q}}\left|S_i(1)\right|<\infty\), i.e.,\
\begin{equation}\label{eq_Usup}
\mathbb{E}^{\mathbb{Q}}\left[U_i(n)^p\right]=\mathbb{E}^{\mathbb{Q}}\left[U_i(1)^p\right]=\mathbb{E}^{\mathbb{Q}}\left[\sup_{0\leq s\leq 1}\left|S_i(s)\right|^p\right]<\infty.
\end{equation}
The last inequality follows from Lemma 2.3 of \cite{Gut1975} (see also Section 3 in \cite{Gut1975} for a similar application).
Furthermore, 
\[\left|S_i(\tau(u))-S_i(\tau'(u))\right|\leq 2U_i(\tau'(u)).\]
For all \(i=1,\ldots,d\) we then get for \(u>0\),
\begin{align}
\left|\frac{S_i(\tau(u))}{u}\right|^p&\leq 2^{p-1}\left|\frac{S_i(\tau(u))-S_i(\tau'(u))}{u}\right|^p+2^{p-1}\left|\frac{S_i(\tau'(u))}{u}\right|^p\nonumber\\
&\leq 2^{p-1}\left(\frac{2U_i(\tau'(u))}{u}\right)^p+2^{p-1}\left|\frac{S_i(\tau'(u))}{u}\right|^p\label{eq_Sineq},
\end{align}
where the first inequality follows from a combination of triangle inequality and Jenssen's inequality.

As a result of Theorem 3.7.1 of \cite{Gut2009} we know that
\[\left\{\left(\frac{\tau'(u)}{u}\right)^p,u\geq 1\right\} \mbox{ is  uniformly  integrable.}\]
By Theorem 1.6.1 of \cite{Gut2009} we then get that \(\left\{ \left|\frac{S_i(\tau'(u))}{u}\right|^p,u\geq 1\right\} \) is uniformly integrable and

 \( \left\{ \left(\frac{U_i(\tau'(u))}{u}\right)^p,u\geq 1\right\} \) is uniformly integrable using \eqref{eq_Usup}. As a result of Lemma A.1.3 of \cite{Gut2009} together with \eqref{eq_Sineq} we get that \(\left\{ \left|\frac{S_i(\tau(u))}{u}\right|^p,u\geq 1\right\} \) is uniformly integrable and the final result follows.

(iii) This follows from Theorem A.1.1 of \cite{Gut2009} together using the results of (i) and (ii).
\end{proof}

\begin{theorem}\label{th_asymptotic}
Let $\mathbb{E}^{\mathbb{Q}}\left[|S_i(1)|\right]<\infty$ for all $i=1,\ldots,m$. Then the limiting allocated proportion of capital is given by
\begin{equation}
\lim_{u\rightarrow\infty}c^i(u,S,\infty)=\frac{m_i}{m}, \ \ \ {\rm with} \ m:=\sum_{i=1}^d m_i \ {\rm and} \ m_i:=\frac{\partial}{\partial \vartheta_i}\mathbb{E}\left[e^{\langle{\boldsymbol\vartheta},S(1)\rangle}\right]\Bigg|_{{\boldsymbol\vartheta}=\vartheta^*{\boldsymbol 1}}
\end{equation}
\end{theorem}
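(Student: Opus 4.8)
The plan is to push the ratio defining $c_i(u,S,\infty)$ through the exponential change of measure $\mathbb{Q}$ set up above, so that its limit is governed by the strong law for the first passage time contained in Lemma~\ref{lem_UI}. First I would observe that the normalising constant $\mathbb{P}(\tau(u)<\infty)$ cancels, so $c_i(u,S,\infty)=\mathbb{E}\big[S_i(\tau(u))\mathbbm{1}_{\tau(u)<\infty}\big]\big/\mathbb{E}\big[S(\tau(u))\mathbbm{1}_{\tau(u)<\infty}\big]$. Since the restrictions of $\mathbb{Q}$ and $\mathbb{P}$ to $\mathcal{F}_{\tau(u)}$ are mutually absolutely continuous on $\{\tau(u)<\infty\}$ with likelihood ratio $Z_{\tau(u)}=e^{\vartheta^* S(\tau(u))}$ there, since $\mathbb{Q}(\tau(u)<\infty)=1$, and since $S_i(\tau(u))e^{-\vartheta^* S(\tau(u))}$ is $\mathbb{Q}$-integrable (its modulus is $\le|S_i(\tau(u))|e^{-\vartheta^* u}$, and $\mathbb{E}^{\mathbb{Q}}|S_i(\tau(u))|<\infty$ by Lemma~\ref{lem_UI}), one obtains $\mathbb{E}[S_i(\tau(u))\mathbbm{1}_{\tau(u)<\infty}]=\mathbb{E}^{\mathbb{Q}}[S_i(\tau(u))e^{-\vartheta^* S(\tau(u))}]$. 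Writing $R(u):=S(\tau(u))-u\ge0$ for the overshoot and cancelling the common factor $e^{-\vartheta^* u}$ from numerator and denominator then gives
\[
c_i(u,S,\infty)=\frac{\mathbb{E}^{\mathbb{Q}}\big[\tfrac1u S_i(\tau(u))\,e^{-\vartheta^* R(u)}\big]}{\mathbb{E}^{\mathbb{Q}}\big[\tfrac1u S(\tau(u))\,e^{-\vartheta^* R(u)}\big]}.
\]

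Next I would use that $e^{-\vartheta^* R(u)}\in(0,1]$ is bounded, together with Lemma~\ref{lem_UI}(iii) applied with $p=1$ (whose hypothesis is then exactly $\mathbb{E}^{\mathbb{Q}}|S_i(1)|<\infty$, the assumption of the theorem): $\tfrac1u S_i(\tau(u))\to m_i/m$ in $L^1(\mathbb{Q})$ for each $i$, and hence $\tfrac1u S(\tau(u))\to1$ in $L^1(\mathbb{Q})$ because $\sum_i m_i/m=1$. Setting $B(u):=\mathbb{E}^{\mathbb{Q}}[e^{-\vartheta^* R(u)}]$, the numerator satisfies
\[
\Big|\mathbb{E}^{\mathbb{Q}}\big[\tfrac1u S_i(\tau(u))e^{-\vartheta^* R(u)}\big]-\tfrac{m_i}{m}B(u)\Big|\le\mathbb{E}^{\mathbb{Q}}\Big|\tfrac1u S_i(\tau(u))-\tfrac{m_i}{m}\Big|\longrightarrow0,
\]
and the denominator, which is $\ge B(u)$ since $S(\tau(u))\ge u$, likewise equals $B(u)+o(1)$. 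Consequently $\big|c_i(u,S,\infty)-m_i/m\big|\le \delta(u)/B(u)$ for some $\delta(u)\to0$, so it remains only to show $\liminf_{u\to\infty}B(u)>0$.

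For the last step I would identify $B(u)=e^{\vartheta^* u}\,\mathbb{E}^{\mathbb{Q}}[e^{-\vartheta^* S(\tau(u))}]=e^{\vartheta^* u}\,\psi(u,\infty)$. Under $\mathbb{Q}$ the process $S(\cdot)$ is a L\'{e}vy process drifting to $+\infty$ with $0<\mathbb{E}^{\mathbb{Q}}[S(1)]=\kappa'(\vartheta^*)<\infty$, so the overshoot $R(u)$ over high levels is tight — by renewal theory it converges in distribution to the stationary excess law of the ascending ladder-height subordinator — whence $B(u)$ is bounded away from $0$; equivalently, this is the Cram\'{e}r--Lundberg estimate $e^{\vartheta^* u}\psi(u,\infty)\to C\in(0,\infty)$, with $\liminf_u e^{\vartheta^* u}\psi(u,\infty)>0$ still holding in the lattice case. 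Combining with the previous paragraph yields $c_i(u,S,\infty)\to m_i/m$.

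The genuinely non-routine ingredient is this last one: establishing $\liminf_{u}e^{\vartheta^* u}\psi(u,\infty)>0$, i.e.\ tightness of the $\mathbb{Q}$-overshoot. Everything preceding it is just the change of measure combined with the $L^1$-convergence already delivered by Lemma~\ref{lem_UI}. If one prefers not to quote the Cram\'{e}r--Lundberg estimate as a black box, the lower bound can be obtained by bounding $\psi(u,\infty)$ below through the first ascending ladder epoch of the $\mathbb{Q}$-process and the elementary renewal theorem — but that essentially amounts to re-deriving the estimate.
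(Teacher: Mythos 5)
Your proposal is correct and follows essentially the same route as the paper: the exponential change of measure to $\mathbb{Q}$, the $L^1(\mathbb{Q})$ convergence of $S_i(\tau(u))/u$ to $m_i/m$ supplied by Lemma~\ref{lem_UI} with $p=1$, and the Cram\'er--Lundberg estimate $e^{\vartheta^* u}\psi(u,\infty)\to C\in(0,1]$ to keep the normalising quantity bounded away from zero. The only differences are organisational --- you factor out $e^{-\vartheta^* u}$ and work with the overshoot functional $B(u)=\mathbb{E}^{\mathbb{Q}}[e^{-\vartheta^* R(u)}]$ and derive the denominator asymptotics from Lemma~\ref{lem_UI} itself, whereas the paper divides by $\psi(u,\infty)$ and quotes Lemma~2.3 of \cite{Gut1996} for the denominator --- and these are immaterial.
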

%Begin proof of theorem
\begin{proof}
Using expression \eqref{eq_mi} and Lemma 2.3 \cite{Gut1996}, which gives $\mathbb{E}[S(\tau(u))|\tau(u)<\infty]=u+o(u)$ as $u\rightarrow\infty$, we need to prove that
\begin{align}\label{eq_th_toprove}
\lim_{u\rightarrow\infty}\frac{\mathbb{E}[S_i(\tau(u))|\tau(u)<\infty]}{u}= \frac{\mathbb{E}^{\mathbb{Q}}[S_i(1)]}{\mathbb{E}^{\mathbb{Q}}[S(1)]}.
\end{align}

First we note that
\begin{align}\label{eq_PQ}
\frac{\mathbb{E}[S_i(\tau(u))|\tau(u)<\infty]}{u}&=\frac{\mathbb{E}^{\mathbb{Q}}\left[S_i(\tau(u))e^{-\vartheta^*S(\tau(u))}\right]}{u\psi(u,\infty)}\\
&=\mathbb{E}^{\mathbb{Q}}\left[\left(\frac{S_i(\tau(u))}{u}-\frac{m_i}{m}\right)\frac{e^{-\vartheta^*S(\tau(u))}}{\psi(u,\infty)}\right]+\frac{m_i}{m}.\label{eq:overstap}
\end{align}
Here it should be borne in mind that $\mathbb{Q}(\tau(u)<\infty)=1$ 
and that consequently the likelihood ratio $\frac{d\mathbb{P}_\tau}{d\mathbb{Q}_\tau}$ on $\mathcal{F}_{\tau(u)}$ 
equals $Z_{\tau(u)}^{-1}=e^{-\vartheta^*S(\tau(u))}$, $\mathbb{Q}$-a.s.
The  equality \eqref{eq:overstap} follows from the fact that $\mathbb{E}^{\mathbb{Q}}\left[\frac{e^{-\vartheta^*S(\tau(u))}}{\psi(u,\infty)}\right]= 1$.

We are left to prove 
\begin{equation}\label{eq_meanconvergence}
\lim_{u\rightarrow\infty}\mathbb{E}^{\mathbb{Q}}\left|\left(\frac{S_i(\tau(u))}{u}-\frac{m_i}{m}\right)\frac{e^{-\vartheta^*S(\tau(u))}}{\psi(u,\infty)}\right|=0.
\end{equation}

due to the 'Cram\'{e}r-Lundberg' result for L\'{e}vy process (see \cite{Bertoin1994}) there is a positive constant $C$ such that $\psi(u,\infty)e^{\vartheta^*u}\rightarrow C$ as $u\rightarrow\infty$; actually, $C=\lim_{u\rightarrow\infty}e^{\vartheta^*u}\mathbb{E}^{\mathbb{Q}}\left[e^{-\vartheta^*S(\tau(u))}\right]$. Furthermore, $S(\tau(u))\geq u\geq 0$ gives $0\leq \frac{e^{-\vartheta^*S(\tau(u))}}{\psi(u,\infty)}\leq\frac{e^{-\vartheta^*u}}{\psi(u,\infty)}<\infty$ and implies that $C\in(0,1]$. Notice that for $\delta\in(0,C)$ and $u$ sufficiently large,

\begin{align*}
\mathbb{E}^{\mathbb{Q}}\left|\left(\frac{S_i(\tau(u))}{u}-\frac{m_i}{m}\right)\frac{e^{-\vartheta^*S(\tau(u))}}{\psi(u,\infty)}\right|&\leq\frac{e^{-\vartheta^*u}}{\psi(u,\infty)}\mathbb{E}^{\mathbb{Q}}\left|\frac{S_i(\tau(u))}{u}-\frac{m_i}{m}\right|\\
&\leq\frac{e^{-\vartheta^*u}}{(C-\delta)e^{-\vartheta^*u}}\mathbb{E}^{\mathbb{Q}}\left|\frac{S_i(\tau(u))}{u}-\frac{m_i}{m}\right|=\frac{1}{(C-\delta)}\mathbb{E}^{\mathbb{Q}}\left|\frac{S_i(\tau(u))}{u}-\frac{m_i}{m}\right|
\end{align*}

We obtain by Lemma~\ref{lem_UI}, for $u\rightarrow\infty$ 
 $$\mathbb{E}^{\mathbb{Q}}\left|\frac{S_i(\tau(u))}{u}-\frac{m_i}{m}\right|\rightarrow 0.$$
This proves the result.
\end{proof}

In the special case that the aggregated process \(S(\cdot)\) is a spectrally negative L\'{e}vy process, i.e.\ a process which does not contain positive jumps and the L\'{e}vy measure has support on \((-\infty,0]\) only, Theorem~\ref{th_asymptotic} holds true for all \(u>0\). This statement is formalized in Theorem~\ref{th_wald}. It is a result of the fact that the lack of positive jumps of the spectrally negative L\'{e}vy process allows us to write \(S(\tau(u))=u\) on \(\{\tau(u)<\infty\}\) (see Section 8.1 in \cite{Kyprianou2006}). For a more extensive overview of spectrally negative L\'{e}vy processes, we refer the reader to Chapter VII in \cite{Bertoin1996} or Chapter 8 in \cite{Kyprianou2006}. 

\begin{theorem}\label{th_wald}
For a spectrally negative L\'{e}vy process \(S(\cdot)\), suppose that for every \(t\geq 0\) and for all \(i=1,\ldots,d\), \(\mathbb{E}^{\mathbb{Q}}\left[\big|S_i(t)\big|\right]<\infty\). Then, if $m>0$ and any $u>0$,
\begin{equation}\label{eq_th_wald}
c_i(u,S,\infty)=\frac{m_i}{m}.
\end{equation}
\end{theorem}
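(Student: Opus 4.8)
The plan is to reduce everything to a Wald-type identity under the tilted measure $\mathbb{Q}$, using the distinctive feature of a spectrally negative Lévy process: it creeps upward and cannot overshoot, so on $\{\tau(u)<\infty\}$ one has $S(\tau(u))=u$ exactly (Section~8.1 in \cite{Kyprianou2006}). First I would write $c_i(u,S,\infty)$ as a ratio of unconditional expectations, $c_i(u,S,\infty)=\mathbb{E}[S_i(\tau(u))\mathbbm{1}_{\{\tau(u)<\infty\}}]/\mathbb{E}[S(\tau(u))\mathbbm{1}_{\{\tau(u)<\infty\}}]$ (the factor $1/\psi(u,\infty)$ cancels), and then change measure exactly as in the proof of Theorem~\ref{th_asymptotic}, using $\mathbb{Q}(\tau(u)<\infty)=1$ and the likelihood ratio $e^{-\vartheta^* S(\tau(u))}$ on $\mathcal{F}_{\tau(u)}$. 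Since $S(\tau(u))=u$ is deterministic on this event, the factor $e^{-\vartheta^* S(\tau(u))}=e^{-\vartheta^* u}$ pulls out of both expectations and cancels, leaving
\[
c_i(u,S,\infty)=\frac{\mathbb{E}^{\mathbb{Q}}[S_i(\tau(u))]}{\mathbb{E}^{\mathbb{Q}}[S(\tau(u))]}.
\]

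Next I would evaluate this ratio by a Wald-type identity (whence the name of the theorem). By \eqref{eq_mi}, for each $i$ the process $\{S_i(t)-m_i t\}_{t\ge 0}$ is a martingale under $\mathbb{Q}$ with respect to the filtration generated by $\boldsymbol{S}(\cdot)$, and $\tau(u)$ is a stopping time for that filtration. Provided $\mathbb{E}^{\mathbb{Q}}[\tau(u)]<\infty$, the continuous-time analogue of Wald's identity (legitimate because $\mathbb{E}^{\mathbb{Q}}[|S_i(1)|]<\infty$) gives $\mathbb{E}^{\mathbb{Q}}[S_i(\tau(u))]=m_i\,\mathbb{E}^{\mathbb{Q}}[\tau(u)]$ for every $i$; summing over $i$ yields $\mathbb{E}^{\mathbb{Q}}[S(\tau(u))]=m\,\mathbb{E}^{\mathbb{Q}}[\tau(u)]$, which moreover identifies $\mathbb{E}^{\mathbb{Q}}[\tau(u)]=u/m$ because $\mathbb{E}^{\mathbb{Q}}[S(\tau(u))]=u$. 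Either way the common factor $\mathbb{E}^{\mathbb{Q}}[\tau(u)]$ cancels in the displayed ratio, and $c_i(u,S,\infty)=m_i/m$ follows.

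The hard part will be justifying the Wald step rigorously, i.e.\ establishing $\mathbb{E}^{\mathbb{Q}}[\tau(u)]<\infty$ together with the integrability needed to obtain the \emph{exact} optional-stopping identity rather than merely the $u+o(u)$ asymptotics used in Theorem~\ref{th_asymptotic}. For the first passage of a spectrally negative Lévy process with positive mean $m=(\kappa^{\mathbb{Q}})'(0)=\mathbb{E}^{\mathbb{Q}}[S(1)]>0$ this is classical: the first-passage process $\{\tau(u)\}_{u\ge 0}$ is a subordinator whose Laplace exponent is $\Phi$, the right inverse of $\kappa^{\mathbb{Q}}$, so $\mathbb{E}^{\mathbb{Q}}[e^{-q\tau(u)}]=e^{-\Phi(q)u}$ and hence $\mathbb{E}^{\mathbb{Q}}[\tau(u)]=u\,\Phi'(0{+})=u/m<\infty$ (Chapter~VII in \cite{Bertoin1996} or Chapter~8 in \cite{Kyprianou2006}). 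The uniform integrability needed to pass from $\tau(u)\wedge n$ to $\tau(u)$ in the optional-stopping argument can be extracted from the supremum bounds of the type \eqref{eq_Usup} already established in the proof of Lemma~\ref{lem_UI}, which rest precisely on the standing hypothesis $\mathbb{E}^{\mathbb{Q}}[|S_i(t)|]<\infty$ for all $t$.
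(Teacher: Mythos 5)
Your proposal is correct and follows essentially the same route as the paper's proof: change of measure to $\mathbb{Q}$, cancellation of the likelihood ratio $e^{-\vartheta^* S(\tau(u))}=e^{-\vartheta^* u}$ via spectral negativity, and then Wald's identity in continuous time to reduce the ratio to $m_i/m$. The only (immaterial) difference is in how $\mathbb{E}^{\mathbb{Q}}[\tau(u)]<\infty$ is justified --- you invoke the first-passage subordinator structure of spectrally negative L\'{e}vy processes, whereas the paper cites Theorem 3.1 of \cite{Gut1975}; both are valid.
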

\begin{proof}
By the definition of $c_i(u,S,\infty)$ and the relation between ${\mathbb P}$ and ${\mathbb Q}$, 
\begin{equation}\label{eq_th_inf_time1}
c_i(u,S,\infty)=\frac{\mathbb{E}[S_i(\tau(u))\,|\,\tau(u)<\infty]}{\mathbb{E}[S(\tau(u))\,|\,\tau(u)<\infty]}=\frac{\mathbb{E}^{\mathbb{Q}}\left[S_i(\tau(u))e^{-\vartheta^* S(\tau(u))}\right]}{\mathbb{E}^{\mathbb{Q}}\left[S(\tau(u))e^{-\vartheta^* S(\tau(u))}\right]}=\frac{\mathbb{E}^{\mathbb{Q}}\left[S_i(\tau(u))\right]}{\mathbb{E}^{\mathbb{Q}}\left[S(\tau(u))\right]},
\end{equation}
where the last equality follows from the spectral negativity of \(S(\cdot)\) which allows us to write \(S(\tau(u))=u\) on \(\{\tau(u)<\infty\}\) (see Section 8.1 in \cite{Kyprianou2006}).
As a result of Theorem 3.1 of \cite{Gut1975} and by the assumption that \(\mathbb{E}^{\mathbb{Q}}[|S(1)|]<\infty\) we have \(\mathbb{E}^{\mathbb{Q}}[\tau(u)]<\infty\). 
By Wald's equation in continuous time, as stated on page 380 of \cite{Doob1990}, we thus have that \(\mathbb{E}^{\mathbb{Q}}[S_i(\tau(u))]\) and \(\mathbb{E}^{\mathbb{Q}}[S(\tau(u))]\) exist and are given by 
\[\mathbb{E}^{\mathbb{Q}}[S_i(\tau(u))]=\mathbb{E}^{\mathbb{Q}}[S_i(1)]\,\mathbb{E}^{\mathbb{Q}}[\tau(u)]<\infty, \ \ \ \ \mathbb{E}^{\mathbb{Q}}[S(\tau(u))]=\mathbb{E}^{\mathbb{Q}}[S(1)]\,\mathbb{E}^{\mathbb{Q}}[\tau(u)]<\infty.\]
Substituting this into Equation \eqref{eq_th_inf_time1} gives the final result. Note that this is properly defined due to the fact that \(\mathbb{E}^{\mathbb{Q}}[S(1)]=m>0\) as derived before.
\end{proof}

\begin{remark}\label{rem_Hall}
In Corollary 1 of \cite{Hall1970} more general conditions are given to guarantee the existence of \(\mathbb{E}^{\mathbb{Q}}\left[S_i(\tau(u))\right]\) and \(\mathbb{E}^{\mathbb{Q}}\left[S(\tau(u))\right]\). We will now briefly mention them. If either (i) there exists a real \(\vartheta_i>0\) such that \(\mathbb{E}^{\mathbb{Q}}\left[e^{-\vartheta_i S_i(1)}\right]\leq 1\), or (ii) \(\mathbb{E}^{\mathbb{Q}}\left[S_i(1)\right]=0\), then \(\mathbb{E}^{\mathbb{Q}}\left[S_i(\tau(u))\right]\) exists and is given by \(\mathbb{E}^{\mathbb{Q}}[S_i(\tau(u))]=\mathbb{E}^{\mathbb{Q}}[S_i(1)]\,\mathbb{E}^{\mathbb{Q}}[\tau(u)]\). To derive the latter, we have made use of the fact that \(\lim_{t\rightarrow\infty}{S_i(t)}/{t}=\mathbb{E}^{\mathbb{Q}}[S_i(1)]\) almost surely under the \(\mathbb{Q}\)-measure by Theorem 36.5 of \cite{Sato1999}. With \(\kappa^{\mathbb{Q}}_i(\cdot)\) denoting the L\'{e}vy exponent of the process \(S_i(\cdot)\), condition (i) above is equivalent to the existence of a real \(\vartheta_i<0\) such that \(\kappa^{\mathbb{Q}}_i(\vartheta_i)\leq 0\). The same reasoning also applies to \(\mathbb{E}^{\mathbb{Q}}\left[S(\tau(u))\right]\).
\end{remark}

Under the conditions of Theorems \ref{th_asymptotic} \& \ref{th_wald}, the allocation fraction over an infinite time horizon does not depend on the capital level \(u\). This is an attractive property as it reflects robustness of the allocation with respect to \(u\).

\subsection{Special case: Brownian motion}\label{subsec_BM_AVAR}
In this subsection we simplify the risk model of Section~\ref{sec_riskmodel} by considering only the diffusion part of the L\'{e}vy process. We assume that the processes \(S_i(\cdot)\) are scaled Brownian motions with drift. Due to the fact that these processes are continuous and obey various convenient properties, the proposed capital allocation method \eqref{eq_Kallocation} can be made explicit.

Concretely, the risk processes in this subsection are assumed to be of the form \(S_i(\cdot):=r_it+B_i(t)\), where \(\boldsymbol{B}(t):=(B_1(t),\ldots,B_d(t))^\top\) is a \(d\)-dimensional Brownian motion with zero drift and covariance matrix \[\mathbb{E}[\boldsymbol{B}(t)\boldsymbol{B}(t)^{\top}]=\boldsymbol{\Sigma}\, t,\] such that \(\boldsymbol{\Sigma}:=(\sigma_{i,j}^2)_{i,j\in\{1,\ldots,d\}}\) with \(\sigma_{i,j}^2=\sigma_{j,i}^2=\rho_{i,j}\sigma_{i,i}\sigma_{j,j}\) for correlations \(\rho_{i,j}\in[-1,1]\). In this case \(S(t):=\sum_{i=1}^dS_i(t)\) is also a Brownian motion with drift \(r:=\sum_{i=1}^d r_i\) and variance coefficient \(\sigma^2:=\sum_{i,j=1}^d\sigma_{i,j}^2\) (assumed to be strictly positive, to rule out trivial cases). Note that this is a spectrally negative L\'{e}vy process and due to the continuous sample paths of Brownian processes, there is no overshoot at the first passage time, i.e., \(S\left(\tau(u)\right)=u\), and its supremum equals its maximum. As a result, Theorem~\ref{th_wald} can be applied to scaled Brownian motions with drift.

\vb 

In the next theorem we present an explicit expression for the proposed allocation method \eqref{eq_Kallocation}. It considers the case of a finite time horizon. Below, \(\Phi(\cdot)\) denotes the cumulative distribution function of a standard normal random variable.

\begin{theorem}\label{th_BM_alloc}
For the multivariate scaled Brownian motion with drift the allocated initial reserve of component \(i \in \{1,\ldots,d\}\) is given by
\begin{equation}\label{eq_th_BM_alloc}
K_i(u,S,T)=\mathbb{E}\left[S_i(\tau(u))\,|\,\tau(u)\leq T\right]=\mathbb{E}\left[\tau(u)\,|\,\tau(u)\leq T\right]\left(r_i-\frac{\sum_{j=1}^d \sigma^2_{i,j}}{\sigma^2} r\right)+\frac{\sum_{j=1}^d \sigma^2_{i,j}}{\sigma^2} u,
\end{equation}
where 
\[\mathbb{E}\left[\tau(u)\,|\,\tau(u)\leq T\right]=\frac{u}{r}\times\frac{\Phi\left(\frac{-u+rT}{\sigma\sqrt{T}}\right)-e^{\frac{2ur}{\sigma^2}}\Phi\left(\frac{-u-rT}{\sigma\sqrt{T}}\right)}{\Phi\left(\frac{-u+rT}{\sigma\sqrt{T}}\right)+e^{\frac{2ur}{\sigma^2}}\Phi\left(\frac{-u-rT}{\sigma\sqrt{T}}\right)}.\]
\end{theorem}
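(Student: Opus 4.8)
The plan is to compute $\mathbb{E}[S_i(\tau(u))\,|\,\tau(u)\leq T]$ directly, exploiting the joint Gaussian structure. First I would observe that, conditionally on the value of $S(\tau(u))$ and on $\tau(u)$ itself, the random variable $S_i(\tau(u))$ has a tractable conditional law: since $\boldsymbol{S}(t)$ is a multivariate Brownian motion with drift, for a \emph{fixed} time $t$ the vector $(S_i(t),S(t))$ is bivariate normal, so the conditional expectation $\mathbb{E}[S_i(t)\,|\,S(t)=s]$ is the usual linear regression formula $\mathbb{E}[S_i(t)] + \tfrac{\mathrm{Cov}(S_i(t),S(t))}{\mathrm{Var}(S(t))}(s-\mathbb{E}[S(t)])$, which with the stated parametrisation reads $r_i t + \tfrac{\sum_j\sigma^2_{i,j}}{\sigma^2}(s-rt)$. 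The subtle point is that $\tau(u)$ is a stopping time, not a fixed time, so I need this identity to survive conditioning on $\{\tau(u)=t\}$. Because $\tau(u)$ is a stopping time for the (one-dimensional) filtration generated by $S(\cdot)$, and because the increments of the $d$-dimensional process after $\tau(u)$ are independent of $\mathcal{F}_{\tau(u)}$ by the strong Markov property, the conditional law of $S_i(\tau(u))$ given $\tau(u)$ and $S(\tau(u))$ agrees with the fixed-time regression at time $t=\tau(u)$; this can be made rigorous via the strong Markov property / optional sampling applied to the martingale $S_i(t) - r_i t - \tfrac{\sum_j\sigma^2_{i,j}}{\sigma^2}(S(t)-rt)$, which is a martingale orthogonal to $S(\cdot)$.

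Granting this, I would write
\[
\mathbb{E}[S_i(\tau(u))\,|\,\tau(u)\leq T]
= \mathbb{E}\!\left[\, r_i\tau(u) + \frac{\sum_j\sigma^2_{i,j}}{\sigma^2}\bigl(S(\tau(u))-r\tau(u)\bigr)\,\Big|\,\tau(u)\leq T\right].
\]
On the event $\{\tau(u)\leq T\}\subseteq\{\tau(u)<\infty\}$ the spectral negativity (continuity of paths) gives $S(\tau(u))=u$ exactly, so the bracket collapses to $r_i\tau(u) + \tfrac{\sum_j\sigma^2_{i,j}}{\sigma^2}(u - r\tau(u)) = \bigl(r_i - \tfrac{\sum_j\sigma^2_{i,j}}{\sigma^2}r\bigr)\tau(u) + \tfrac{\sum_j\sigma^2_{i,j}}{\sigma^2}u$. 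Taking the conditional expectation over $\tau(u)$ then yields exactly the claimed formula \eqref{eq_th_BM_alloc}, modulo the value of $\mathbb{E}[\tau(u)\,|\,\tau(u)\leq T]$. That the first equality in the theorem, $K_i(u,S,T)=\mathbb{E}[S_i(\tau(u))\,|\,\tau(u)\leq T]$, holds is immediate from the definition \eqref{eq_Kallocation} together with $\mathbb{E}[S(\tau(u))\,|\,\tau(u)\leq T]=u$, again by path continuity.

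It remains to establish the stated closed form for $\mathbb{E}[\tau(u)\,|\,\tau(u)\leq T]$. Here I would use the classical density of the first passage time of a Brownian motion with drift $r$ and variance $\sigma^2$ to level $u>0$ (the inverse Gaussian density), namely $f_{\tau(u)}(t)=\tfrac{u}{\sigma\sqrt{2\pi t^3}}\exp\!\bigl(-\tfrac{(u-rt)^2}{2\sigma^2 t}\bigr)$, and compute $\mathbb{E}[\tau(u)\mathbbm{1}_{\{\tau(u)\leq T\}}] = \int_0^T t\, f_{\tau(u)}(t)\,\mathrm{d}t$ and $\mathbb{P}(\tau(u)\leq T)=\int_0^T f_{\tau(u)}(t)\,\mathrm{d}t$. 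The probability is the standard formula $\Phi\bigl(\tfrac{-u+rT}{\sigma\sqrt{T}}\bigr)+e^{2ur/\sigma^2}\Phi\bigl(\tfrac{-u-rT}{\sigma\sqrt{T}}\bigr)$ (reflection principle / Girsanov). For the numerator, the cleanest route is a change of measure removing the drift: under the measure making $S(\cdot)$ driftless, $\tau(u)$ has density $\tfrac{u}{\sigma\sqrt{2\pi t^3}}e^{-u^2/(2\sigma^2 t)}$, and the likelihood ratio restores the factor $e^{(ur/\sigma^2) - r^2 t/(2\sigma^2)}$ on $\{\tau(u)=t\}$; alternatively one differentiates the Laplace transform $\mathbb{E}[e^{-\lambda\tau(u)}\mathbbm{1}_{\{\tau(u)\le T\}}]$, or integrates by parts against the explicit density. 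Either way the integral $\int_0^T t\,f_{\tau(u)}(t)\,\mathrm{d}t$ reduces, after the substitution and completing the square, to a combination of the same two $\Phi$-terms but with opposite signs, giving $\tfrac{u}{r}\bigl[\Phi(\tfrac{-u+rT}{\sigma\sqrt T}) - e^{2ur/\sigma^2}\Phi(\tfrac{-u-rT}{\sigma\sqrt T})\bigr]$; dividing by $\mathbb{P}(\tau(u)\le T)$ produces the displayed ratio. The main obstacle I anticipate is not the passage-time integral (which is a known, if fiddly, computation) but the justification of the orthogonal-decomposition step — i.e.\ that the linear regression identity for $S_i$ in terms of $S$ remains valid when evaluated at the random time $\tau(u)$ rather than a deterministic one; I would handle this carefully via the strong Markov property of the $d$-dimensional Lévy process together with the fact that $\tau(u)$ is measurable with respect to the filtration of $S(\cdot)$ alone.
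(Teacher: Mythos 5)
Your proposal is correct and follows the same overall route as the paper's proof: reduce $K_i(u,S,T)$ to $\mathbb{E}[S_i(\tau(u))\,|\,\tau(u)\leq T]$ via path continuity, condition on $\tau(u)$, apply the bivariate-normal regression $\mathbb{E}[S_i(t)\,|\,S(t)=u]=r_it+\frac{\sum_j\sigma^2_{i,j}}{\sigma^2}(u-rt)$, and then evaluate $\mathbb{E}[\tau(u)\,|\,\tau(u)\leq T]$ from the inverse-Gaussian first-passage density and the standard expression for $\mathbb{P}(\tau(u)\leq T)$. The one place where you genuinely diverge is the justification of the crux step — that the fixed-time regression identity survives evaluation at the stopping time $\tau(u)$. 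The paper handles this by invoking Proposition 5.1 and Theorem 4.1 of Chuang (1996), i.e.\ the factorization $f_{\tau(u),S_i(\tau(u))}(\theta,s)=f_{\tau(u)}(\theta)f_{S_i(\theta)|S(\theta)}(s|u)$ obtained by conditioning on the path of $S$ up to time $\theta$. Your route via the process $M(t):=S_i(t)-r_it-\frac{\sum_j\sigma^2_{i,j}}{\sigma^2}(S(t)-rt)=B_i(t)-\frac{\sum_j\sigma^2_{i,j}}{\sigma^2}B(t)$ is self-contained and arguably cleaner: a direct covariance computation shows $\mathrm{Cov}(M(s),B(t))=0$ for all $s,t$, so by joint Gaussianity $M(\cdot)$ is independent of $S(\cdot)$, and since $\tau(u)$ is $\sigma(S)$-measurable one gets $\mathbb{E}[M(\tau(u))\mathbbm{1}_{\{\tau(u)\leq T\}}]=0$ without any external citation (the "strong Markov / optional sampling" language is not even needed once independence is established). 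Both approaches buy the same identity; yours avoids the reliance on Chuang's results at the cost of spelling out the independence argument, which you should make explicit rather than leaving at the level of "orthogonal martingale". The passage-time integral $\int_0^T\theta f_{\tau(u)}(\theta)\,\mathrm{d}\theta=\frac{u}{r}\bigl(\Phi(\frac{-u+rT}{\sigma\sqrt{T}})-e^{2ur/\sigma^2}\Phi(\frac{-u-rT}{\sigma\sqrt{T}})\bigr)$ is computed identically in both.
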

\begin{proof}
First note that the denominator in \(c_i(u,S,T)\) equals \(u\) due to the fact that \(S(\cdot)\) has continuous sample paths. This gives \(K_i(u,S,T)=\mathbb{E}\left[S_i (\tau(u) )\,|\,\tau(u)\leq T\right]\). By introducing the notation \(f_{\tau(u)}(\theta)\) as the probability density function of the first passage time \(\tau(u)\) and \(f_{\tau(u),S_i(\tau(u))}\left(\theta,s\right)\) as the joint density of the first passage time and \(S_i(\tau(u))\), we can write
\begin{align*}
&\mathbb{E}\left[S_i(\tau(u))\,|\,\tau(u)\leq T\right]=\frac{1}{\mathbb{P}\left(\tau(u)\leq T\right)}\int_{0}^T \int_{-\infty}^\infty sf_{\tau(u),S_i(\tau(u))}\left(\theta,s\right){\rm d} s\,{\rm d}\theta.
\end{align*}
The next step is to obtain the density \(f_{\tau(u),S_i(\tau(u))}\left(\theta,s\right)\) from Proposition 5.1 of \cite{Chuang1996} as 
\[f_{\tau(u),S_i(\tau(u))}\left(\theta,s\right)=f_{\tau(u)}\left(\theta\right)f_{S_i(\theta)\,|\,S(\theta)}\left(s\,|\,u\right),\] 
where \(f_{S_i(\theta)\,|\,S(\theta)}\left(s\,|\,u\right)\) denotes the conditional density of \(S_i(\theta)\) given \(S(\theta)\). This is the result of first conditioning on  \(\sigma(S(t),t\leq \theta)\) (i.e.\ the path of \(S(t)_{t\in[0,\theta]}\)) and then making use of the Markov property for correlated Brownian motions as given in Theorem 4.1 of \cite{Chuang1996}. Substituting this into the equation above and using the distributional properties of bivariate Brownian motions we obtain,
\begin{align*}
&\mathbb{E}\left[S_i(\tau(u))\,|\,\tau(u)\leq T\right]=\int_{0}^T \mathbb{E}\left[S_i(\theta)\,|\,S(\theta)=u\right]\frac{f_{\tau(u)}\left(\theta\right)}{\mathbb{P}\left(\tau(u)\leq T\right)}{\rm d}\theta\\
&=\int_{0}^T \left(\left(r_i-\frac{{\mathbb C}{\rm ov}\,(S_i(\theta),S(\theta))}{{\mathbb C}{\rm ov}\,(S(\theta),S(\theta))}r\right)\theta+\frac{{\mathbb C}{\rm ov}\,(S_i(\theta),S(\theta))}{{\mathbb C}{\rm ov}\,(S(\theta),S(\theta))} u\right)\frac{f_{\tau(u)}\left(\theta\right)}{\mathbb{P}\left(\tau(u)\leq T\right)}{\rm d}\theta\\
&=\left(r_i-\frac{\sum_{j=1}^d \sigma^2_{i,j}}{\sigma^2}r\right)\int_{0}^T \frac{\theta f_{\tau(u)}\left(\theta\right)}{\mathbb{P}\left(\tau(u)\leq T\right)}{\rm d}\theta+\frac{\sum_{j=1}^d \sigma^2_{i,j}}{\sigma^2} u\\
&=\left(r_i-\frac{\sum_{j=1}^d \sigma^2_{i,j}}{\sigma^2}r\right)\mathbb{E}\left[\tau(u)\,|\,\tau(u)\leq T\right]+\frac{\sum_{j=1}^d \sigma^2_{i,j}}{\sigma^2} u,
\end{align*}
where we have used that
\begin{equation}\label{eq_BM_tau_cond}
\mathbb{E}\left[\tau(u)\,|\,\tau(u)\leq T\right]= \frac{\int_{0}^T \theta f_{\tau(u)}\left(\theta\right){\rm d}\theta}{\mathbb{P}\left(\tau(u)\leq T\right)}.
\end{equation}
For the denominator in the equation above we note that (see e.g., Theorem 2.1 of \cite{He1998}),
\begin{equation}\label{eq_BM_CDF_tau}
\mathbb{P}\left(\tau(u)\leq T\right)=\mathbb{P}\left(\sup_{0\leq t\leq T } S(t)\geq u\right)=\Phi\left(\frac{-u+rT}{\sigma\sqrt{T}}\right)+e^{\frac{2ur}{\sigma^2}}\Phi\left(\frac{-u-rT}{\sigma\sqrt{T}}\right).
\end{equation}
The density \(f_{\tau(u)}\left(\theta\right)\) can be derived from the above expression by differentiation with respect to \(T\). The numerator in \eqref{eq_BM_tau_cond} is then given by 
\begin{equation}\label{eq_BM_PDF_tau}
\int_{\theta=0}^T \theta f_{\tau(u)}\left(\theta\right){\rm d}\theta=\int_{\theta=0}^T\frac{u}{\sigma\sqrt{2\pi \theta}}e^{-\frac{(u-r\theta)^2}{2\sigma^2\theta}}{\rm d}\theta=\frac{u}{r}\left(\Phi\left(\frac{-u+rT}{\sigma\sqrt{T}}\right)-e^{\frac{2ur}{\sigma^2}}\Phi\left(\frac{-u-rT}{\sigma\sqrt{T}}\right)\right).
 \end{equation}

The final result follows by substituting Equations \eqref{eq_BM_CDF_tau} \& \eqref{eq_BM_PDF_tau} into \eqref{eq_BM_tau_cond}.
\end{proof}

To see that Theorem~\ref{th_BM_alloc} gives the same result as Theorem~\ref{th_wald} in the infinite time horizon regime, we consider the cases \(r<0\) and \(r> 0\) separately. 
\begin{itemize}
\item When \(r<0\), letting \(T\rightarrow\infty\) in \eqref{eq_th_BM_alloc} gives 
\begin{equation}\label{eq_K_BM_inf}
K_i(u,S,\infty)=-\frac{u}{r}\left(r_i-\frac{\sum_{j=1}^d \sigma^2_{i,j}}{\sigma^2}r \right)+\frac{\sum_{j=1}^d \sigma^2_{i,j}}{\sigma^2} u=u\left(\frac{2\sum_{j=1}^d \sigma^2_{i,j}}{\sigma^2} - \frac{r_i}{r}\right).
\end{equation}
For \(r<0\), the supremum of a Brownian motion over an infinite time horizon is exponentially distributed with rate \(-{2r}/{\sigma^2}\) (see Section 6.8 of \cite{Resnick2002}), and thus \[\mathbb{P}\left(\tau(u)<\infty\right)=\mathbb{P}\left(\sup_{t\in[0,\infty)}S(t)>u\right)=e^{\frac{2r}{\sigma^2}u}<1.\]
In this case the existence of \(\vartheta^*\), as the positive solution to \(\ln \mathbb{E}\left[e^{\vartheta S(t)}\right]=(\vartheta r+\frac{1}{2}\vartheta^2\sigma^2)t=0\), is guaranteed and given by \(\vartheta^*=-{2r}/{\sigma^2}\). By taking the derivative, we obtain 
\[m_i=\frac{\partial}{\partial \vartheta_i}e^{\sum_{j=1}^d\vartheta_j r_j+\frac{1}{2}\sum_{j,k=1}^d\vartheta_j\vartheta_k \sigma_{j,k}^2}\bigg|_{{\boldsymbol\vartheta}=\vartheta^*{\boldsymbol 1}}=r_i-\frac{2r}{\sigma^2}\sum_{j=1}^d\sigma_{j,i}^2, \ \ \ \ \ m=\sum_{i=1}^d m_i=-r.\]
This also gives \[K_i(u,S,\infty)=u\left(\frac{2\sum_{j=1}^d \sigma^2_{i,j}}{\sigma^2} - \frac{r_i}{r}\right).\] 
\item In case \(r> 0\), letting \(T\rightarrow\infty\) in \eqref{eq_th_BM_alloc} gives \[K_i(u,S,\infty)=\frac{u}{r}\left(r_i-\frac{\sum_{j=1}^d \sigma^2_{i,j}}{\sigma^2}r \right)+\frac{\sum_{j=1}^d \sigma^2_{i,j}}{\sigma^2} u=u\frac{r_i}{r}.\]
When \(r>0\), we are in the trivial case where \(\mathbb{P}\left(\tau(u)<\infty\right)=1\) and we do not have to do a change of measure (effectively implying that \(\vartheta^*>0\) does not exist). In this case we can apply Wald's identity under the \(\mathbb{P}\)-measure. The existence of the Wald identity for the Brownian case is explicitly mentioned in \cite{Hall1970}. As a result, we find 
\[u=\mathbb{E}\left[S(\tau(u))\,|\,\tau(u)<\infty\right]=\mathbb{E}\left[S(\tau(u))\right]=\mathbb{E}\left[S(1)\right]\mathbb{E}\left[\tau(u)\right]=r\mathbb{E}\left[\tau(u)\right],\] 
which gives 
\(\mathbb{E}\left[\tau(u)\right]=\frac{u}{r}\). By similar reasoning we find
\[K_i(u,S,\infty)=\mathbb{E}\left[S_i(\tau(u))\,|\,\tau(u)<\infty\right]=r_i\mathbb{E}\left[\tau(u)\right]=u\frac{r_i}{r},\]
which coincides with limiting result of \eqref{eq_th_BM_alloc} when \(T\rightarrow\infty\). 
\end{itemize}
Note that in both cases, the allocation fractions over an infinite time horizon \(c_i(u,S,\infty)\) do not depend on the capital level \(u\). This gives rise to a stable, robust allocation method. Note however that this property only holds when considering an infinite time horizon: for finite $T$ there evidently is a dependence on $u$.

\section{Comparison with gradient allocation}\label{sec_GVaR}
In this section we analyze a well-known allocation method, namely the gradient allocation. We start by presenting a number of general results, and then provide explicit results for the Brownian case. 

\subsection{General results}
The gradient allocation method is based on the idea of allocating capital according to the infinitesimal marginal impact of each risk process to the risk measure. In this subsection we will present the gradient allocation for capital determined by the dynamic VaR measure defined in \eqref{eq_VaR}. Under certain conditions, this gradient capital allocation can be expressed as a function of the location of the supremum of the aggregated risk process. This gives rise to a new allocation method. We show that for some risk processes, the \({\rm AVaR}^{\alpha}\) capital allocation method, as defined in the previous section, coincides with the gradient capital allocation for capital determined by the dynamic VaR measure.

\vb 

We proceed with a few words on the existing literature in relation to our work. The gradient allocation approach is also referred to as the Euler allocation method due to its relation with Euler's theorem on homogeneous functions of degree 1. As mentioned, it is based on the idea of allocating risk according to the infinitesimal marginal impact of each individual risk. Several papers have been written on the topic, highlighting the importance and practical use of the gradient allocation method; see e.g., \cite{Tasche1999,Tasche2007}. In \cite{Tasche1999} the author derives an expression for the gradient allocation method applied to the quantile-based risk measure value-at-risk (VaR) for random variables under some smoothness conditions. This VaR risk measure considers the distribution of the sum of random variables, i.e., \[\inf\left\{x\geq 0\,\left|\,\mathbb{P}\left(\sum_{i=1}^dX_i\geq x\right)\leq \alpha\right.\right\},\] and should not be confused with the dynamic VaR presented in Section~\ref{subsec_riskmeasure}. The dynamic VaR considers the probability of ruin over time of an aggregated risk process. The additional time component inherent in the ruin probability complicates matters significantly. We will follow a similar logic as used in \cite{Tasche1999} to find an expression for the gradient allocation applied to the dynamic VaR measure under some smoothness conditions. 

\vb

To properly define the gradient allocation method for the dynamic VaR measure, it is useful to introduce the weight variables \(\boldsymbol{x}=(x_1,\ldots,x_d)^\top\in\mathbb{R}^d\) with \(Z_{i,T}(x_i):=\sup_{t\in[0,T]}\sum_{j\neq i}^dS_j(t)+x_i S_i(t)\) and the function 
%\[q_{{\rm VaR}^{\alpha}_{i,T}}(x_i):={\rm VaR}^{\alpha}\left(\sum_{i=1}^d x_iS_i,T\right),\]
\[q_{{\rm VaR}^{\alpha}_{i,T}}(x_i):={\rm VaR}^{\alpha}\left(\sum_{j\neq i}^d S_j+x_iS_i,T\right)={\rm VaR}^{\alpha}\left(Z_{i,T}(x_i),T\right),\]
where \({\rm VaR}^{\alpha}\) is as defined in Equation \eqref{eq_VaR}. When \(q_{{\rm VaR}^{\alpha}_{i,T}}(x_i)\) is differentiable (in \(x_i\)), the gradient allocation method applied to the dynamic VaR type measure is defined by
\[{\rm GVaR}^{\alpha}_i(S,T):=\frac{\partial q_{{\rm VaR}^{\alpha}_{i,T}}(x_i)}{\partial x_i}\Bigg|_{x_i=1}.\]
However, in general the quantile function \(q_{{\rm VaR}^{\alpha}_{i,T}}(x_i)\) will not be differentiable in \(x_i\). By the implicit function theorem, as stated in Appendix \ref{app_A}, the quantile function is differentiable in \(x_i=1\) when the following three conditions hold:
\begin{enumerate}
\item \(\mathbb{P}\left(Z_{i,T}(x_i)\leq q_{{\rm VaR}^{\alpha}_{i,T}}(x_i)\right)=\alpha\) in a neighboorhood of \(x_i=1\).
\item \(\mathbb{P}\left(Z_{i,T}(x_i)\leq y\right)\) is continuously differentiable in \(y\) in an open interval around \(q_{{\rm VaR}^{\alpha}_{i,T}}(x_i)\).
\item \(\mathbb{P}\left(Z_{i,T}(x_i)\leq y\right)\) differentiable in \(x_i\) in an open interval around \(x_i=1\).
\end{enumerate}
Under these conditions the gradient allocation method applied the dynamic VaR measure can be computed as 
\begin{equation}\label{eq_gvar_der}
{\rm GVaR}^{\alpha}_i(S,T)=-\frac{\partial \mathbb{P}\left(Z_{i,T}(x_i)\leq y\right)/\partial x_i}{\partial \mathbb{P}\left(Z_{i,T}(x_i)\leq y\right)/\partial y}\Bigg|_{x_i=1,y=q_{{\rm VaR}^{\alpha}_{i,T}}(x_i)}.
\end{equation}

In Theorem~\ref{th_Euler} we show that under some smoothness assumptions on the multivariate stochastic process \(\boldsymbol{S}(\cdot)=(S_1(\cdot),\ldots,S_d(\cdot))^\top\) , this gradient capital allocation exists and can be expressed in terms of the location of the supremum of aggregated the risk process \(S(\cdot)\).  For this purpose we introduce the set of times at which the supremum is reached:
\[A^*_{i,T}(x_i):=\left\{t\in[0,T]: \sum_{j\neq i}^d S_j(t)+x_iS_i(t)=\sup_{s\in[0,T]}\sum_{j\neq i}^d S_j(s)+x_iS_i(s)\right\},\]
with \(t^*_{i,T}(x_i)\) as the minimum of \(A^*_{i,T}(x_i)\) in case it is non-empty. When \(A^*_{i,T}(x_i)\) is empty we set \(t^*_{i,T}(x_i)=\infty\). Over an infinite horizon \((T=\infty)\) we use similar notation: \(A^*_{i,\infty}(x_i)\) and \(t^*_{i,\infty}(x_i)\), respectively. We often omit the dependence on \(i\) and \(x_i\) when \(x_i=1\), i.e.\ 
\[A^*_{T}:=\left\{t\in[0,T]: \sum_{i=1}^d S_i(t)=\sup_{s\in[0,T]}\sum_{i=1}^d S_j(s)\right\},\]
with \(t^*_{T}\) as the minimum of \(A^*_{T}\). Over an infinite time horizon we adopt the notation \(A^*_{\infty}\) and \(t^*_{\infty}\).
In the case of L\'{e}vy processes excluding compound Poisson processes, the supremum over a finite horizon is obtained at a unique point in time almost surely, i.e., when \(A_{i,T}^*(x_i,\omega)\) is non-empty it is a singleton a.s.\ (see page 171 in \cite{Kyprianou2006}).  Before proceeding with the stochastic case we first introduce a lemma concerning the differentiability of functions.  In view of the applications in Section~\ref{sec_examples}, we content ourselves with functions and processes that have continuous paths (as in the Brownian case) and those that have upward jumps with drift (as in the Compound Poisson case with drift).

\begin{lemma}\label{lem_der_fun}
Let \(f:[0,T]\times \mathbb{R}_{>0}\mapsto \mathbb{R}\) with \(f(t,x):=p(t)+xq(t)\) and assume that for all positive \(x\) the maximum of \(t\mapsto f(t,x)\) over \(0\leq t\leq T\) equals the supremum and is uniquely attained in \(t_x\) with value \(\bar f(x):=f(t_x,x)\).  If one of the following two assumptions is satisfied:
\begin{enumerate}
\item \(p(t)\) and \(q(t)\) are continuous functions of \(t\),
\item \(p(t)\) and \(q(t)\) are functions for which the location of the maximum lies within a finite set of points for all positive \(x\),  i.e.\ for all positive \(x\) we have \(t_x\in\mathcal{T}\) for a finite set \(\mathcal{T}:=\{T_1,\ldots,T_n\}\). The time points \(T_i\in[0,T]\) are independent of \(x\). 
\end{enumerate}
Then 
\[\frac{\partial}{\partial x}\bar f(x)=q(t_x).\]
\end{lemma}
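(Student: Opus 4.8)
The plan is to argue via the envelope theorem. First I would record the two optimality inequalities coming from $t_x$ maximizing $f(\cdot,x)$ and $t_{x+h}$ maximizing $f(\cdot,x+h)$. Since $x>0$, both $t_x$ and $t_{x+h}$ are well defined for $|h|$ small enough that $x+h>0$. From $\bar f(x+h)=\sup_{t}f(t,x+h)\ge f(t_x,x+h)=\bar f(x)+h\,q(t_x)$ and, symmetrically, $\bar f(x)=\sup_t f(t,x)\ge f(t_{x+h},x)=\bar f(x+h)-h\,q(t_{x+h})$, one obtains the sandwich
\[
h\,q(t_x)\ \le\ \bar f(x+h)-\bar f(x)\ \le\ h\,q(t_{x+h}).
\]
Dividing by $h$ (and reversing the inequalities when $h<0$) shows that the difference quotient $(\bar f(x+h)-\bar f(x))/h$ always lies between $q(t_x)$ and $q(t_{x+h})$. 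Hence the lemma reduces to showing $q(t_{x+h})\to q(t_x)$ as $h\to0$, after which a squeeze gives $\bar f'(x)=q(t_x)$.

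Under assumption (1) I would prove this by an argmax-continuity argument. Note $f(t,x)=p(t)+x\,q(t)$ is jointly continuous on $[0,T]\times K$ for every compact $K\subset\mathbb{R}_{>0}$. Take any sequence $h_n\to0$; by compactness of $[0,T]$ a subsequence $t_{x+h_{n_k}}$ converges to some $t^\ast\in[0,T]$. Passing to the limit in the inequality $f(t_{x+h_{n_k}},x+h_{n_k})\ge f(t,x+h_{n_k})$ (which holds for every fixed $t$) and using joint continuity gives $f(t^\ast,x)\ge f(t,x)$ for all $t$, so $t^\ast$ maximizes $f(\cdot,x)$; by the assumed uniqueness of the maximizer, $t^\ast=t_x$. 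Since every convergent subsequence has limit $t_x$, the whole family satisfies $t_{x+h}\to t_x$, and continuity of $q$ gives $q(t_{x+h})\to q(t_x)$, closing this case.

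Under assumption (2) the argument is elementary and direct. Since $t_y\in\mathcal{T}=\{T_1,\dots,T_n\}$ for every $y>0$, and since $f(T_j,y)\le\sup_t f(t,y)=\bar f(y)$ for each $j$ while $\bar f(y)=f(t_y,y)$ with $t_y\in\mathcal{T}$, we have $\bar f(y)=\max_{1\le j\le n}f(T_j,y)=\max_j\bigl(p(T_j)+y\,q(T_j)\bigr)$, a maximum of finitely many affine functions of $y$. Writing $t_x=T_{j_0}$, uniqueness of the maximizer at $x$ means $f(T_{j_0},x)>f(T_j,x)$ for all $j\neq j_0$; since each $f(T_j,\cdot)$ is affine (hence continuous), there is $\delta>0$ with $f(T_{j_0},y)>f(T_j,y)$ for all $j\neq j_0$ whenever $|y-x|<\delta$. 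On that neighbourhood $\bar f(y)=f(T_{j_0},y)=p(T_{j_0})+y\,q(T_{j_0})$ is affine with slope $q(T_{j_0})=q(t_x)$, so $\bar f'(x)=q(t_x)$; in particular $t_{x+h}=T_{j_0}$ for small $h$, consistent with the sandwich.

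The main obstacle is the argmax-continuity step in case (1): one must rule out the maximizer jumping as $x$ varies, which is precisely where uniqueness of the maximizer together with compactness of $[0,T]$ and continuity of $p,q$ enter — a Berge-type maximum-theorem argument. Everything else is the routine envelope inequality and, in case (2), elementary facts about piecewise-affine functions.
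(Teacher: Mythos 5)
Your proposal is correct. For assumption (2) it is essentially the paper's argument: the paper also derives the envelope sandwich $zq(t_x)\le \bar f(x+z)-\bar f(x)\le zq(t_{x+z})$ (following Oyama--Takenawa) and then shows the maximizer is locally constant, exhibiting an explicit $\delta$ below which $t_{x+z}=t_x$; your observation that $\bar f$ is locally the maximum of finitely many affine functions and hence locally affine with slope $q(t_{j_0})$ is the same idea, phrased slightly more cleanly. For assumption (1), however, you take a genuinely different route: the paper invokes Danskin's Min--Max Theorem to identify the one-sided derivatives of $\bar f$ with $q(t_x)$ (using uniqueness of the maximizer to make the two one-sided derivatives agree), whereas you reuse the same sandwich inequality and close the argument with a Berge-type argmax-continuity step (compactness of $[0,T]$, joint continuity of $f$, and uniqueness of the maximizer force $t_{x+h}\to t_x$, hence $q(t_{x+h})\to q(t_x)$ and a squeeze). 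Your version buys a self-contained, unified treatment of both cases in which the only case-specific work is verifying $q(t_{x+h})\to q(t_x)$; the paper's citation of Danskin is shorter but imports a black box and handles the two assumptions by entirely separate mechanisms. Both arguments are sound.
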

\begin{proof}
We will treat the two assumptions separately, starting with the first assumption. Under assumption 1, the functions \(f(t,x)\) and \(\partial f(t,x)/\partial x=q(t)\) are continuous in both \(t\) and \(x\).  We may now apply Danskin's Min-Max Theorem (see Theorem 1 in \cite{Danskin1966}).  According to this theorem, the right derivative of \(\bar f(x)\) in \(x\) is given by \(q(t_x)\), where we have used that the location of the maximum is uniquely obtained in \(t_x\). The left derivative is minus the directional derivative in the direction of the \(-x\) axis, which also gives \(q(t_x)\) by the unique location of the maximum. The final result now follows.

We now proceed with a proof of the result under assumption 2. Following the proof of Proposition 2.1 in \cite{Oyama2018} we have
\[
f(t_x,x+z)-f(t_x,x)\leq \bar f(x+z)- \bar f(x)\leq f(t_{x+z},x+z)-f(t_{x+z},x),
\]
which can be rewritten as 
\begin{align*}
\lefteqn{p(t_x)+(x+z)q(t_x)-\left(p(t_x)+xq(t_x)\right)  \leq \bar f(x+z)- \bar f(x)} \\
& \qquad \leq p(t_{x+z})+(x+z)q(t_{x+z})-\left(p(t_{x+z})+xq(t_{x+z})\right),
\end{align*}
i.e.,
\[
zq(t_x)\leq \bar f(x+z)- \bar f(x)\leq zq(t_{x+z}).
\]
This gives,  
\[\begin{cases}
q(t_x)\leq \frac{\bar f(x+z)- \bar f(x)}{z}\leq q(t_{x+z}), & {\rm for \  } z>0\\
q(t_{x+z})\leq \frac{\bar f(x+z)- \bar f(x)}{z}\leq q(t_{x}), & {\rm for \  } z<0.
\end{cases}\]
The final result follows when \(q(t_{x+z})\to q(t_x)\) for \(z\to 0\). We will now show this holds true under assumption 2. Under this assumption, the maximum of \(f(t,x)\) is obtained in one of the finite number of points \(\mathcal{T}:=\{T_1,\ldots,T_n\}\), i.e.\ \(t_x\in\mathcal{T}\). As the maximum is considered to be uniquely obtained, \(\bar{f}(x)-f(t,x)>0\) for all \(t\in \mathcal{T}\setminus \{t_x\}\). We now consider the function \(f(t,x+z)\) and its unique maximum \(t_{x+z}\in\mathcal{T}\). We set
\[\delta=\min_{t\in T_q}\bigg|\frac{\bar{f}(x)-f(t,x)}{q(t_{x})-q(t)}\bigg|, \ \ \ T_q=\{t\in\mathcal{T}\setminus \{t_x\}: q(t)\neq q(t_x)\}.\]
In case \(T_q\) is empty, we set \(\delta=1\). For small \(|z|\leq \delta\)
 and all \(t\in \mathcal{T}\setminus \{t_x\}\), we have \(\bar{f}(x)>f(t,x)\) and 
\[f(t_{x},x+z)-f(t,x+z)=\bar{f}(x)-f(t,x)+z(q(t_{x})-q(t))>0.\]
In other words, for small \(|z|\), the maximum of \(f(t,x+z)\) is obtained in \(t_{x+z}=t_x\) and thus \(q(t_{x+z})= q(t_x)\). 
\end{proof}

L\'{e}vy processes \(S_i(t)\) with sample paths of the types \(q(t\) (or similarly \(p(t)\)) as specified in Lemma~\ref{lem_der_fun} include continuous L\'{e}vy processes, i.e. scaled Brownian motions with drift, and compound Poisson processes with non-zero drift. These processes are used in the examples in Section~\ref{sec_examples}. The proof of the next theorem can be found in Appendix \ref{app_B}.

\begin{theorem}\label{th_Euler}
Consider multivariate L\'{e}vy processes \(\boldsymbol{S}(\cdot)=(S_1(\cdot),\ldots,S_d(\cdot))\) on the probability space \((\Omega,\mathcal{F},\mathbb{P})\). For small \(\delta,\bar{\delta}>0\) and \(x_i\in [1-\delta,1+\delta]\), assume that the individual processes \(S_i(\cdot)\) and the aggregated process \(\sum_{j\neq i}^dS_j(\cdot)+x_i S_i(\cdot)\) are either continuous, i.e. scaled Brownian motions with drift, or compound Poisson processes with negative drift and positive jumps. Denote \(y_i \rightarrow f_{i,x_i}(y_i)\) as the density of the random variable \(Z_{i,T}(x_i):=\sup_{t\in[0,T]}\sum_{j\neq i}^dS_j(t)+x_i S_i(t)\) at the point \(y_i\in[q_{{\rm VaR}^{\alpha}_{i,T}}(x_i)-\bar{\delta},q_{{\rm VaR}^{\alpha}_{i,T}}(x_i)+\bar{\delta}]\) which we assume exists. Furthermore, if the following conditions hold,
\begin{enumerate}[(i)]
\item The density \(f_{i,x_i}(y_i)<\infty\) is continuous in \(y_i,x_i\) for all \(x_i\in [1-\delta,1+\delta]\) and \(y_i\in[q_{{\rm VaR}^{\alpha}_{i,T}}(x_i)-\bar{\delta},q_{{\rm VaR}^{\alpha}_{i,T}}(x_i)+\bar{\delta}]\).
\item For each \(x_i\), \(f_{i,x_i}(q_{{\rm VaR}^{\alpha}_{i,T}}(x_i))>0\).
\item \(\mathbb{E}\left[S_i(t_{i,T}^*(x_i))\,\big|\,Z_{i,T}(x_i)=y_i\right]\) and \(\mathbb{E}\left[|S_i(t_{i,T}^*(x_i))|\,\big|\,Z_{i,T}(x_i)=y_i\right]\) exist and are continuous in \(y_i,x_i\) for all \(x_i\in [1-\delta,1+\delta]\) and \(y_i\in[q_{{\rm VaR}^{\alpha}_{i,T}}(x_i)-\bar{\delta},q_{{\rm VaR}^{\alpha}_{i,T}}(x_i)+\bar{\delta}]\).
\item \(\mathbb{E}\left[\sup_{t\in[0,T]}\big|S_i(t)\big|\right]<\infty\).
\end{enumerate}
then the gradient allocation method applied to the dynamic VaR type risk measure, i.e.\ gradient capital allocation, exists and is given by
\begin{equation}\label{eq_GVaR}
{\rm GVaR}^{\alpha}_i(S,T)=\mathbb{E}\left[S_i(t_T^*)\,\bigg|\,\sup_{t\in[0,T]}S(t)={\rm VaR}^{\alpha}(S,T)\right].
\end{equation}
\end{theorem}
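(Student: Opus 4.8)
The plan is to combine the formula \eqref{eq_gvar_der} for the gradient allocation with Lemma~\ref{lem_der_fun} to express the numerator derivative in terms of the location of the supremum. Recall that under the stated conditions the implicit function theorem gives
\[
{\rm GVaR}^{\alpha}_i(S,T)=-\frac{\partial \mathbb{P}\left(Z_{i,T}(x_i)\leq y\right)/\partial x_i}{\partial \mathbb{P}\left(Z_{i,T}(x_i)\leq y\right)/\partial y}\Bigg|_{x_i=1,\,y=q_{{\rm VaR}^{\alpha}_{i,T}}(x_i)},
\]
so there are two pieces to handle: the $y$-derivative of the CDF of $Z_{i,T}(x_i)$, which by condition (i)--(ii) is just the density $f_{i,x_i}(y)$ evaluated at $q_{{\rm VaR}^{\alpha}_{i,T}}(x_i)$, and the $x_i$-derivative of that same CDF, which is the delicate term.

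For the $x_i$-derivative, the first step is to write $\mathbb{P}(Z_{i,T}(x_i)\leq y)=\mathbb{E}[\mathbbm{1}\{Z_{i,T}(x_i)\leq y\}]$ and realize $Z_{i,T}(x_i)=\bar f(x_i)$ pathwise, where $f(t,x):=\sum_{j\neq i}S_j(t)+x S_i(t)=p(t)+xq(t)$ with $p(t)=\sum_{j\neq i}S_j(t)$ and $q(t)=S_i(t)$. Under the continuity/finite-set dichotomy assumed on the paths, Lemma~\ref{lem_der_fun} applies pathwise and yields $\tfrac{\partial}{\partial x_i}\bar f(x_i)=S_i(t^*_{i,T}(x_i))$ almost surely. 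The idea is then to differentiate under the integral sign: formally
\[
\frac{\partial}{\partial x_i}\mathbb{P}\left(Z_{i,T}(x_i)\leq y\right)=-\frac{\partial}{\partial y}\,\mathbb{E}\left[S_i(t^*_{i,T}(x_i))\,\mathbbm{1}\{Z_{i,T}(x_i)\leq y\}\right],
\]
which, writing the expectation by conditioning on $Z_{i,T}(x_i)$, equals $-\tfrac{\partial}{\partial y}\int_{-\infty}^{y}\mathbb{E}[S_i(t^*_{i,T}(x_i))\mid Z_{i,T}(x_i)=z]\,f_{i,x_i}(z)\,{\rm d}z$. By conditions (i) and (iii), the integrand is continuous in $z$ at $z=q_{{\rm VaR}^{\alpha}_{i,T}}(x_i)$, so this last $y$-derivative evaluates to $-\mathbb{E}[S_i(t^*_{i,T}(x_i))\mid Z_{i,T}(x_i)=q_{{\rm VaR}^{\alpha}_{i,T}}(x_i)]\,f_{i,x_i}(q_{{\rm VaR}^{\alpha}_{i,T}}(x_i))$. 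Dividing by the denominator $f_{i,x_i}(q_{{\rm VaR}^{\alpha}_{i,T}}(x_i))$, which is positive by condition (ii), the density cancels and, evaluating at $x_i=1$ (where $Z_{i,T}(1)=\sup_{t\in[0,T]}S(t)$, $t^*_{i,T}(1)=t^*_T$, and $q_{{\rm VaR}^{\alpha}_{i,T}}(1)={\rm VaR}^{\alpha}(S,T)$), we obtain exactly \eqref{eq_GVaR}.

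The main obstacle is rigorously justifying the interchange of differentiation and expectation — both the $\partial/\partial x_i$ inside $\mathbb{P}$ and the subsequent $\partial/\partial y$ — since $\mathbbm{1}\{Z_{i,T}(x_i)\leq y\}$ is discontinuous. The clean way is to avoid differentiating the indicator directly: instead, fix $h\neq 0$, use the pathwise bound from Lemma~\ref{lem_der_fun}'s proof, namely $h\,S_i(t^*_{i,T}(x_i))\leq \bar f(x_i+h)-\bar f(x_i)\leq h\,S_i(t^*_{i,T}(x_i+h))$ (with inequalities reversed for $h<0$), to sandwich $\mathbbm{1}\{Z_{i,T}(x_i+h)\leq y\}$ between indicators of events of the form $\{Z_{i,T}(x_i)\leq y - h\,S_i(t^*_{i,T}(\cdot))\}$; then take expectations, divide by $h$, and let $h\to 0$, using condition (iv) (the integrable envelope $\sup_t|S_i(t)|$) for dominated convergence together with the continuity hypotheses (i) and (iii) to identify the limit. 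The bound $\mathbb{E}[\sup_t|S_i(t)|]<\infty$ is precisely what controls the difference quotients uniformly; without it the interchange could fail. Care is also needed that $t^*_{i,T}(x_i)$ is well-defined (a.s. a singleton) — which holds for the Lévy classes considered, as noted before the statement — and that the conditional expectations in (iii) are genuinely versions that are continuous in $y_i$, so that the final $\partial/\partial y$ step picks out the correct value at $q_{{\rm VaR}^{\alpha}_{i,T}}(x_i)$.
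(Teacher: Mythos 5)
Your proposal is correct and follows the same architecture as the paper's proof: both start from \eqref{eq_gvar_der} via the Implicit Function Theorem (Theorem~\ref{th_app1}), identify the $y$-derivative with the density $f_{i,x_i}$ using (i)--(ii), obtain the pathwise derivative $\partial_{x_i}Z_{i,T}(x_i)=S_i(t^*_{i,T}(x_i))$ from Lemma~\ref{lem_der_fun}, use (iv) as the integrable envelope for dominated convergence, and condition on $Z_{i,T}(x_i)$ with (i) and (iii) to identify the limit. The one genuine divergence is the device used to differentiate $x_i\mapsto\mathbb{P}(Z_{i,T}(x_i)\leq y)$: the paper mollifies the indicator with the ramp $g_{\epsilon,y}$, differentiates under the expectation, sends $\epsilon\to 0$, and then recovers $F$ by integrating back in $x_i$ (a second dominated-convergence step plus the fundamental theorem of calculus); you instead sandwich $\mathbbm{1}\{Z_{i,T}(x_i+h)\leq y\}$ directly using the convexity inequality from Lemma~\ref{lem_der_fun}'s proof. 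Your route is conceptually cleaner (it avoids the integrate-back step entirely), but be aware that it shifts, rather than removes, the technical burden: the sandwiching events couple $Z_{i,T}(x_i)$ with $S_i(t^*_{i,T}(x_i))$ (and, on one side, with $S_i(t^*_{i,T}(x_i+h))$, whose convergence to $S_i(t^*_{i,T}(x_i))$ as $h\to 0$ must be imported from the finite-set/continuity argument inside Lemma~\ref{lem_der_fun}); to pass to the limit in the difference quotient you must split on the sign of $S_i(t^*_{i,T}(\cdot))$, condition on $Z_{i,T}(x_i)=z$, and use the joint continuity and boundedness of $z\mapsto\mathbb{E}[S_i(t^*_{i,T}(x_i))\mid Z_{i,T}(x_i)=z]f_{i,x_i}(z)$ and its absolute-value counterpart near $z=q_{{\rm VaR}^{\alpha}_{i,T}}(x_i)$ --- exactly the work that conditions (i), (iii) and (iv) are designed to do, and that the paper's $\epsilon$-mollification accomplishes in its stead. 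With that step written out, your argument is complete and yields \eqref{eq_GVaR} upon setting $x_i=1$.
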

Especially when one cannot analytically determine the gradient capital allocation by differentiation as in \eqref{eq_gvar_der}, and one has to resort to simulation, the above result provides a practical way to determine the gradient capital allocation. Numerical evaluation of the gradient capital allocation \eqref{eq_gvar_der} can be computationally expensive due to the derivatives. For the Gaussian case, without the additional time component in the determination of capital, a similar result has been derived (see \cite{Tasche1999}, and \cite{Gourieroux2000}) and has found broad application in practice.  In fact, expression \eqref{eq_GVaR} can be seen as the extension of the results found in \cite{Tasche1999} (Lemma 5.3 and Remark 5.4) and \cite{Gourieroux2000} with respect to time. Note that there is a similarity between assumptions (i)-(iii) in Theorem~\ref{th_Euler} and the assumptions imposed in \cite{Tasche1999}. 

\begin{remark}
We note that the scope of Theorem~\ref{th_Euler} could be extended to multivariate stochastic processes in a natural way whenever the sample paths of the processes are of the types specified in Lemma~\ref{lem_der_fun}. This includes for example Markov-modulated Brownian motions.  Furthermore, we note that the negative drift assumption for compound Poisson processes can be replaced by positive drift as well by noting that for positive drifts the supremum can be reached in the time points of the jumps of the Poison process or at the final time horizon \(T\).
\end{remark}

Expression \eqref{eq_GVaR} suggests yet another allocation method for the allocation of the capital reserve level \(u\):
\begin{equation}\label{eq_K_tilde_allocation}
\overline{K}_i(u,S,T):=\mathbb{E}[S_i(t_T^*)\,|\,S(t_{T}^*)=u,t_{T}^*\leq T],\:\:\:\:\overline{c}_i(u,S,T):=\frac{\overline{K}_i(u,S,T)}{u}.
\end{equation}
As before, for the infinite time horizon we define \[\overline{K}_i(u,S,\infty):=\mathbb{E}[S_i(t_{\infty}^*)\,|\,S(t_\infty^*)=u,t_{\infty}^*<\infty],\:\:\:\:\overline{c}(u,S,\infty):=\frac{\overline{K}_i(u,S,\infty)}{u}.\]
Note that this allocation method cannot be evaluated when \(t_{T}^*=\infty\) almost surely, i.e., when \(A^*_{T}\) is empty. When it can be evaluated, the allocation method also satisfies the properties highlighted in Section~\ref{subsec_AVaR}, i.e., the allocated risk measure is positively homogeneous, deterministic when the marginal risk is deterministic and admits the full allocation principle. 

\vb

When the capital level \(u\) is determined using the dynamic VaR measure, this allocation method gives for component \(i\),
\[\avarbar:=\overline{K}_i\left({\rm VaR}^{\alpha}(S,T),S,T\right).\]
Whenever the conditions of Theorem~\ref{th_Euler} are satisfied, we find \({\rm GVaR}^{\alpha}_i(S,T)=\avarbar\). 

\subsection{Special case: Brownian motion}\label{subsec_BM_GVAR}
In this subsection we show the implications of the results presented in the previous subsection for the special case of scaled Brownian motion with drift. Throughout we use the same notation as introduced in Section~\ref{subsec_BM_AVAR}. We start by deriving an explicit expression for the new allocation method \eqref{eq_K_tilde_allocation} in the theorem below. 

\begin{theorem}\label{th_BM_alloc_max}
For the multivariate scaled Brownian motion with drift and finite time horizon \(T\), the allocated capital reserve of component \(i \in \{1,\ldots,d\}\) is given by
\[\overline{K}_i(u,S,T)=\mathbb{E}\left[S_i\left(t^*_{T}\right)|\,S(t^*_{T})=u\right]=\mathbb{E}\left[t^*_T\,|\,S(t^*_{T})=u\right]\left(r_i-\frac{\sum_{j=1}^d \sigma^2_{i,j}}{\sigma^2} r\right)+\frac{\sum_{j=1}^d \sigma^2_{i,j}}{\sigma^2} u,\]
where 
\[\mathbb{E}\left[t^*_T\,|\,S(t^*_T)=u\right]=u\left(-r+\frac{\sigma e^{-\frac{(u+rT)^2}{2\sigma^2 T}}}{\sqrt{2\pi T}\Phi\left(\frac{-u-r T}{\sigma\sqrt{T}}\right)}\right)^{-1}.\]
\end{theorem}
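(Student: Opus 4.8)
The plan is to mirror the proof of Theorem~\ref{th_BM_alloc}, replacing the first passage time $\tau(u)$ by the a.s.\ unique time $t^*_T\in[0,T]$ at which $S$ attains its maximum $M_T:=\sup_{t\in[0,T]}S(t)$, and to reduce everything to a formula for $\mathbb{E}[t^*_T\mid S(t^*_T)=u]$.

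\textbf{Step 1 (Gaussian reduction).} Since $\boldsymbol S(\cdot)$ is jointly Gaussian and $\mathrm{Cov}(S_i(t),S(s))=\big(\sum_j\sigma^2_{i,j}\big)(t\wedge s)=\tfrac{\sum_j\sigma^2_{i,j}}{\sigma^2}\mathrm{Cov}(S(t),S(s))$ for all $s,t$, conditioning on the entire path $\sigma(S(t):t\in[0,T])$ leaves $S_i(t)$ with a conditional law that depends on that path only through $S(t)$ --- this is the Markov property for correlated Brownian motions (Theorem~4.1 of \cite{Chuang1996}) already used in Theorem~\ref{th_BM_alloc} --- with conditional mean $(r_i-a_i r)t+a_iS(t)$, where $a_i:=\sum_j\sigma^2_{i,j}/\sigma^2$. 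Because $t^*_T$ and $S(t^*_T)=M_T$ are both $\sigma(S)$-measurable, the tower property gives $\overline K_i(u,S,T)=(r_i-a_i r)\,\mathbb{E}[t^*_T\mid S(t^*_T)=u]+a_i u$, i.e.\ the first displayed identity. (Equivalently: decompose $S_i=a_iS+W_i$ with $W_i$ a drifted Brownian motion independent of $S(\cdot)$ and use that $t^*_T$ is a functional of $S(\cdot)$ only.) It then remains to evaluate $\mathbb{E}[t^*_T\mid M_T=u]$.

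\textbf{Step 2 (reduction to a ratio of integrals).} On $\{M_T=u\}$ the path lies strictly below $u$ off $t^*_T$ (uniqueness of the argmax), so $\tau(u)=t^*_T$ there, whence $\mathbb{E}[t^*_T\mid M_T=u]=\mathbb{E}[\tau(u)\mid M_T=u]$. By the strong Markov property at $\tau(u)$, conditionally on $\tau(u)=\theta\le T$ the post-$\tau(u)$ increment $\widetilde S$ is a fresh drifted Brownian motion, independent of $\mathcal F_{\tau(u)}$, and $M_T=u+\sup_{[0,T-\theta]}\widetilde S\ (\ge u)$, so $\{M_T\in\mathrm du\}$ corresponds to the post-process running maximum lying in $[0,\mathrm du]$. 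Writing $f_{\tau(u)}$ for the inverse-Gaussian first-passage density and $h_s(\cdot)$ for the density of the running maximum of a drifted Brownian motion over $[0,s]$, this yields $\mathbb{E}[t^*_T\mid M_T=u]=N(u)/f_{M_T}(u)$ with $N(u)=\int_0^T\theta f_{\tau(u)}(\theta)h_{T-\theta}(0)\,\mathrm d\theta$ and $f_{M_T}(u)=\int_0^T f_{\tau(u)}(\theta)h_{T-\theta}(0)\,\mathrm d\theta$. The denominator is simply the density of $M_T$, obtained in closed form by differentiating \eqref{eq_BM_CDF_tau} in $u$ (it equals $2e^{2ur/\sigma^2}\big(\tfrac1{\sigma\sqrt T}\phi(\tfrac{-u-rT}{\sigma\sqrt T})-\tfrac r{\sigma^2}\Phi(\tfrac{-u-rT}{\sigma\sqrt T})\big)$).

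\textbf{Step 3 (the numerator; the main obstacle).} The integral $N(u)$ is the crux, and I would avoid a head-on attack by using the elementary identity $\theta f_{\tau(u)}(\theta)=\tfrac u r f_{\tau(u)}(\theta)-\tfrac{\sigma^2}{r}\partial_r f_{\tau(u)}(\theta)$ (immediate from the explicit inverse-Gaussian density), which turns $N(u)$ into $\tfrac u r f_{M_T}(u)-\tfrac{\sigma^2}{r}\big(\partial_r f_{M_T}(u)-\int_0^T f_{\tau(u)}(\theta)\,\partial_r h_{T-\theta}(0)\,\mathrm d\theta\big)$. A short computation gives $\partial_r h_s(0)=-\tfrac2{\sigma^2}\mathbb{P}(S(s)\le0)$, so by the strong Markov property the residual integral equals $-\tfrac2{\sigma^2}\mathbb{P}(\tau(u)\le T,\,S(T)\le u)$, and by the reflection principle $\mathbb{P}(\tau(u)\le T,\,S(T)\le u)=\mathbb{P}(\tau(u)\le T)-\mathbb{P}(S(T)>u)=e^{2ur/\sigma^2}\Phi(\tfrac{-u-rT}{\sigma\sqrt T})$ via \eqref{eq_BM_CDF_tau}. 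Substituting the closed forms of $f_{M_T}(u)$ and $\partial_r f_{M_T}(u)$ and simplifying (the $\phi$-terms telescope, using $\phi(\tfrac{-u+rT}{\sigma\sqrt T})=e^{2ur/\sigma^2}\phi(\tfrac{-u-rT}{\sigma\sqrt T})$) collapses $N(u)$ to $\tfrac{2u}{\sigma^2}e^{2ur/\sigma^2}\Phi(\tfrac{-u-rT}{\sigma\sqrt T})$; dividing by $f_{M_T}(u)$ produces exactly the claimed expression for $\mathbb{E}[t^*_T\mid S(t^*_T)=u]$, and plugging this into Step~1 finishes the proof. (Alternatively $N(u)$ may be computed directly by splitting $h_{T-\theta}(0)$ into its two explicit terms and evaluating the two Gaussian integrals, e.g.\ via a substitution $\theta=T/(1+v^2)$; the $\partial_r$-route only organizes this bookkeeping. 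The case $r=0$ follows by continuity.)

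I expect Step~3 to be the only genuine difficulty: Steps~1 and~2 are routine adaptations of the corresponding steps in Theorem~\ref{th_BM_alloc}, whereas the evaluation of $N(u)$ is what requires the combination of the first-passage/strong-Markov decomposition with the reflection principle, and it is there that the closed form comes out.
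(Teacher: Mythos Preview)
Your proposal is correct and Step~1 coincides with the paper's argument, but your computation of $\mathbb{E}[t^*_T\mid S(t^*_T)=u]$ follows a genuinely different route. The paper quotes the joint density of $(t^*_T,M_T,S(T))$ from \cite{Shepp1979}, integrates out $S(T)$, and then evaluates $\int_0^T\theta f_{t^*_T,M_T}(\theta,u)\,\mathrm d\theta$ via a closed-form identity for $\int_0^T ab\,\pi^{-1}\theta^{-1/2}(T-\theta)^{-3/2}e^{-a^2/2\theta-b^2/2(T-\theta)}\mathrm d\theta$, checked by Laplace transform against tables in \cite{Oberhettinger1973}. You instead factor the joint density by the strong Markov property at $\tau(u)$ as $f_{\tau(u)}(\theta)h_{T-\theta}(0)$ and then avoid a direct integral by the $\partial_r$-trick together with the reflection principle; both routes land on the same numerator $\tfrac{2u}{\sigma^2}e^{2ur/\sigma^2}\Phi\!\big(\tfrac{-u-rT}{\sigma\sqrt T}\big)$. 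Your approach is more self-contained (no appeal to \cite{Shepp1979} or transform tables) and makes the probabilistic structure visible, at the cost of a few more lines of algebra and an intermediate case distinction at $r=0$; the paper's approach is shorter once the cited results are accepted. Either is fine.
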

\begin{proof}

For Brownian processes considered over a finite time horizon \(T\), \(A^*_T\) is non-empty and \(t^*_T\) is thus properly defined due to the fact that for Brownian motions the supremum is actually attained (i.e., it is equal to the maximum). In fact, \(t_T^*\) is finite and almost surely unique for finite \(T\) (see page 158 in \cite{Kyprianou2006}, or Lemma 49.4 in \cite{Sato1999}). As a result, we find 
\[\overline{K}_i(u,S,T)=\mathbb{E}\left[S_i\left(t^*_T\right)|\,S(t^*_T)=u\right].\]
We introduce the notation \(f_{S(t_T^*),t_T^*}(x,\theta)\), \(f_{t_T^*\,|\,S(t^*_T)}\left(\theta\,|\,x\right)\), and \(f_{S(t_T^*)}(x)\) to denote the joint probability density function of \(S(t_T^*)\) and \(t_T^*\), the conditional density function of \(t_T^*\) conditional on \(S(t^*_T)\), and the density function of the maximum of the process \(S(\cdot)\), respectively, which are properly defined for \(0\leq x<\infty\) and \(0\leq \theta\leq T\).
By conditioning on the location of the maximum \(t_T^*\) and using standard results for the conditional mean of a bivariate Brownian motion, we obtain
\begin{align*}
&\mathbb{E}\left[S_i(t^*_T)\,|\,S(t^*_T)=u\right]=\int_{\theta =0}^T \mathbb{E}\left[S_i(\theta)\,|\,S(\theta)=u\right]f_{t_T^*\,|\,S(t^*_T)}\left(\theta\,|\,u\right){\rm d}\theta\nonumber\\
&=\int_{\theta=0}^T \left(\left(r_i-\frac{{\mathbb C}{\rm ov}\,(S_i(\theta),S(\theta))}{{\mathbb C}{\rm ov}\,(S(\theta),S(\theta))}r\right)\theta+\frac{{\mathbb C}{\rm ov}\,(S_i(\theta),S(\theta))}{{\mathbb C}{\rm ov}\,(S(\theta),S(\theta))} u\right)f_{t_T^*|S(t^*_T)}\left(\theta \,|\,u\right){\rm d}\theta\nonumber\\
&=\left(r_i-\frac{\sum_{j=1}^d \sigma^2_{i,j}}{\sigma^2}r\right)\int_{\theta=0}^T \theta f_{t_T^*\,|\,S(t^*_T)}\left(\theta\,|\,u\right){\rm d}\theta+\frac{\sum_{j=1}^d \sigma^2_{i,j}}{\sigma^2} u\int_{\theta=0}^T f_{t_T^*\,|\,S(t^*_T)}\left(\theta\,|\,u\right){\rm d}\theta\nonumber\\
&=\left(r_i-\frac{\sum_{j=1}^d \sigma^2_{i,j}}{\sigma^2}r\right)\mathbb{E}\left[t^*_T\,|\,S(t_T^*)=u\right]+\frac{\sum_{j=1}^d \sigma^2_{i,j}}{\sigma^2} u.
\end{align*}
For the conditional expectation we have 
\begin{equation}\label{eq_E_t_star}
\mathbb{E}\left[t^*_T\,|\,S(t_T^*)=u\right]=\frac{\int_0^T \theta f_{S(t_T^*),t_T^*}(u,\theta){\rm d}\theta}{f_{S(t_T^*)}(u)}.
\end{equation}
The density function \(f_{S(t_T^*)}(u)\) can be derived from its cumulative distribution function as presented in e.g., \cite{He1998}, i.e.,
\begin{equation}\label{eq_density_maxBM}
f_{S(t_T^*)}(u)=\frac{1}{\sigma\sqrt{T}}\phi\left(\frac{u-r T}{\sigma\sqrt{T}}\right)-\frac{2r}{\sigma^2}e^{\frac{2ur}{\sigma^2}}\Phi\left(\frac{-u-r T}{\sigma\sqrt{T}}\right)+\frac{1}{\sigma\sqrt{T}}e^{\frac{2ur}{\sigma^2}}\phi\left(\frac{-u-r T}{\sigma\sqrt{T}}\right),
\end{equation}
where \(\Phi(\cdot)\) and \(\phi(\cdot)\) denote the cumulative distribution and probability density function of a standard normal random variable, respectively.
We now focus on finding an expression for \(\int_0^T \theta f_{t_T^*,S(t_T^*)}(\theta,u)\,{\rm d}\theta\). The joint density \(f_{t_T^*,S(t_T^*)}(\theta,u)\) is known: as given in \cite{Shepp1979}, \[f_{t_T^*,S(t_T^*)}(\theta,u)=\int_{-\infty}^u\frac{1}{\pi \sigma^4}\frac{u(u-x)}{\theta^{3/2}(T-\theta)^{3/2}}e^{\left(-\frac{u^2}{2\theta}-\frac{(u-x)^2}{2(T-\theta)}+rx-\frac{r^2T}{2}\right)\sigma^{-2}}dx.\]
This gives
\begin{align}
\int_0^T \theta f_{t_T^*,S(t_T^*)}(\theta,u){\rm d}\theta&=\int_{-\infty}^u\int_0^T\frac{1}{\pi \sigma^4}\frac{u(u-x)}{\theta^{1/2}(T-\theta)^{3/2}}e^{\left(-\frac{u^2}{2\theta}-\frac{(u-x)^2}{2(T-\theta)}+rx-\frac{r^2T}{2}\right)\sigma^{-2}}{\rm d}\theta dx\nonumber\\
&=\int_{-\infty}^u\frac{\sqrt{2}u}{\sigma^3\sqrt{\pi T}}e^{\left(-\frac{(2u-x)^2}{2T}+rx-\frac{r^2T}{2}\right)\sigma^{-2}} dx=\frac{2u}{\sigma^2}e^{\frac{2ur}{\sigma^2}}\int_{-\infty}^u\frac{1}{\sigma\sqrt{2\pi T}}e^{\frac{-(x-2u-rT)^2}{2\sigma^2 T}} dx\nonumber\\
&=\frac{2u}{\sigma^2}e^{\frac{2ur}{\sigma^2}}\Phi\left(\frac{-u-rT}{\sigma\sqrt{T}}\right),\label{eq_exp_MAXBM}
\end{align}
where we have made use of the identity, for \(a>0, \ b>0\),
\[\int_0^T \frac{ab}{\pi\theta^{1/2}(T-\theta)^{3/2}}e^{-\frac{a^2}{2\theta}-\frac{b^2}{2(T-\theta)}}{\rm d}\theta=\frac{2a}{\sqrt{2\pi T}}e^{-\frac{(a+b)^2}{2T}}.\]
This identity is easy to check by taking the Laplace transforms (with respect to \(T\), that is) of both sides using (5.28) and (5.30) on page 41 of \cite{Oberhettinger1973}. Substituting \eqref{eq_density_maxBM} and \eqref{eq_exp_MAXBM} into \eqref{eq_E_t_star} gives the final result.
\end{proof}

The above theorem only considers a finite time horizon as this guarantees that the maximum of the process is almost surely attained. For an infinite time horizon, we separately consider the two instances \(r\geq 0\) and \(r<0\). In the first instance we have \(\lim_{t\rightarrow\infty}S(t)=\infty\) and the supremum of the Brownian process \(S(\cdot)\) is infinite and thus not attained at a finite point in time. In other words, \(t_\infty^*=\infty\). In the second instance, as pointed out in Section~\ref{subsec_BM_AVAR}, the maximum of the process \(S(\cdot)\) is an exponentially distributed random variable. Taking the 
%(informal) 
limit of Theorem~\ref{th_BM_alloc_max} with respect to \(T\rightarrow\infty\) for \(r<0\), we find \(\mathbb{E}\left[t^*_\infty\,|\,S(t^*_\infty)=u,t_\infty^*<\infty\right]=-{u}/{r}\) and subsequently 
\[\mathbb{E}\left[S_i\left(t^*_\infty\right)\,|\,S(t^*_\infty)=u,t_\infty^*<\infty\right]=u\left(\frac{2\sum_{j=1}^d \sigma^2_{i,j}}{\sigma^2} -\frac{r_i}{r}\right).\]
Note that this result coincides with the allocation method presented in Section~\ref{sec_Aallocation}, i.e., Equation \eqref{eq_K_BM_inf}. This connection is a consequence of the fact that \(\mathbb{E}\left[t_\infty^*\,|\,S(t_\infty^*)=u,t_\infty^*<\infty\right]=\mathbb{E}\left[\tau(u)\,|\,\tau(u)<\infty\right]\). As we are in the continuous case, note that \(S(\tau(u))=u\). Furthermore, conditional on \(S(t_T^*)=u\), the location of the maximum, \(t_T^*\), is the same as the first passage time of the level \(u\), \(\tau(u)\) conditioned on \(\sup_{\tau(u)\leq t\leq T}S(t)= u\). At the first time the supremum is attained we assume \(S(t_T^*)=u\) and thus \(S(\tau(u))\ngtr u\). As a result we find
%for \(x\leq T\) we have \(\left\{t_T^*\leq x, S(t_T^*)=u\right\}=\left\{\tau(u)\leq x, sup_{\tau(u)\leq t\leq T}S(t)\leq u\right\}\). As a result we find the duality%\(\{t_T^*\in ds,S(t_T^*)\in du\}=\{\tau(u)\in ds, \sup_{\tau(u)\leq t\leq T}S(t)\leq u\}\). As a result we then find 
\[\mathbb{E}\left[t_T^*\,|\,S(t_T^*)=u,t_T^*\leq T\right]=\mathbb{E}\left[\tau(u)\,\left|\,\tau(u)\leq T,\sup_{\tau(u)\leq t\leq T}S(t)= u\right.\right].\] For finite \(T\), the random variable \(\sup_{\tau(u)\leq t\leq T}S(t)\) is dependent on \(\tau(u)\) as it impacts the length of the time horizon over which the supremum is considered. Considering an infinite time horizon \((T=\infty)\), the equality becomes 
\begin{equation}\label{eq_t^*_tau}
\mathbb{E}\left[t_\infty^*\,|\,S(t_\infty^*)=u,t_T^*<\infty\right]=\mathbb{E}\left[\tau(u)\,\left|\,\tau(u)<\infty,\sup_{\tau(u)\leq t<\infty}S(t)=u\right.\right].
\end{equation} 
The random variable \(\sup_{\tau(u)\leq t<\infty}S(t)\) is independent of \(\tau(u)\) and is therefore redundant in the latter conditional expectation.

\vb

Theorem~\ref{th_Euler} points out the relationship between the gradient capital allocation method and the allocation method \(\overline{K}(u,S,T)\) as introduced in \eqref{eq_K_tilde_allocation}. We will now show, by easy computation, that for scaled Brownian motions with drift, the conditions (i)-(iv) of Theorem~\ref{th_Euler} are satisfied. As a result, the gradient capital allocation is given by \eqref{eq_GVaR} and Theorem~\ref{th_BM_alloc_max} can be used to make this explicit.  We use the same numbering as in Theorem~\ref{th_Euler}. 

\begin{enumerate}[(i)]
\item First note that, for fixed \(T\), the drift and variance of the random variable \[Z_{i,T}(x_i):=\sup_{t\in[0,T]}\sum_{j\neq i}^dS_j(t)+x_i S_i(t)\] 
are given by \(r(x_i)=r+(x_i-1)r_i\) and \(\sigma^2(x_i)=\sigma^2+2(x_i-1)\sum_{j=1}^d\sigma^2_{j,i}+(x_i-1)^2\sigma^2_{i,i}\), respectively. Both are continuous functions in \(x_i\). The density of the supremum process of a scaled Brownian motion with drift \(r\) and variance parameter \(\sigma^2\) was given in \eqref{eq_density_maxBM}, which can be seen to be a continuous function of \(r\) and \(\sigma^2\). As a result, the density \(f_{i,x_i}(\cdot)\) is also continuous in \(x_i\). Furthermore, the density is finite for finite \(u, r(x_i)\) and finite \(\sigma(x_i),T>0\).
\item The density, as mentioned in the item above, is also greater than zero for finite \(u, r(x_i)\) and finite \(\sigma(x_i),T>0\).
\item The conditional expectation \(\mathbb{E}\left[S_i(t_{i,T}^*(x_i))\,|\,Z_{i,T}(x_i)=y_i\right]\) can be made explicit by use of Theorem~\ref{th_BM_alloc_max} and can be seen to be continuous in \(x_i\) and \(y_i\). Using a similar approach as in Theorem~\ref{th_BM_alloc_max}, the conditional expectation \(\mathbb{E}\left[|S_i(t_{i,T}^*(x_i))|\,\big|\,Z_{i,T}(x_i)=y_i\right]\) can be derived explicitly by conditioning on the location of the maximum and using Lemma~\ref{lem_bi_BM}. The resulting expression is continuous in both \(x_i\) and \(y_i\) due to the continuity of all the functions involved.
\item Note that \[\mathbb{E}\left[\sup_{t\in[0,T]}\big|S_i(t)\big|\right]\leq |r_i|T+\sigma_i\mathbb{E}\left[\sup_{t\in[0,T]}\big|B(t)\big|\right],\] where \(B(t)\) is a standard Brownian motion. For a standard Brownian motion it can be derived that \[\mathbb{E}\left[\sup_{t\in[0,T]}\big|B(t)\big|\right]=\sqrt{\frac{\pi}{2}};\] this follows by integrating the tail probabilities of $\sup_{t\in[0,T]}\big|B(t)\big|$ which can be found in \cite{Borodin2002} (Part II, Chapter 3, Formula 1.1.4). As a result, \(\sup_{t\in[0,T]}\big|S_i(t)\big|\) has a finite mean when \(|r_i|,\sigma_i<\infty\).
\end{enumerate}

It should be pointed out that in the Brownian case, the gradient allocation method applied to the dynamic VaR measure can also be derived explicitly by straightforward differentiation of the dynamic VaR measure. Considering an infinite horizon \((T=\infty)\) and \(r<0\), the supremum is exponentially distributed with parameter \({-2r}/{\sigma^2}\). More specifically, we find \({\rm VaR}^{\alpha}(S,\infty)=\frac{\sigma^2}{2r}\ln(\alpha)\). Furthermore, by differentiation the gradient capital allocation can be determined as 
\[{\rm GVaR}^{\alpha}_i(S,\infty)=\ln(\alpha)\frac{2r^2\sum_{j=1}^d\sigma_{j,i}^2-r_ir\sigma^2}{2r^3}, \ \ \ \ \ \ \ \frac{{\rm GVaR}^{\alpha}_i(S,\infty)}{{\rm VaR}^{\alpha}(S,\infty)}=\frac{2\sum_{j=1}^d\sigma_{j,i}^2}{\sigma^2}-\frac{r_i}{r},\]
which is in line with the results obtained before.

\vb

We conclude this section by briefly mentioning an appealing property of the gradient allocation method for Brownian risk processes. Consider two Brownian risk processes \(S_1(\cdot)\) and \(S_2(\cdot)\), we find, by an elementary computation, \({\rm VaR}^{\alpha}(S_1,\infty)+{\rm VaR}^{\alpha}(S_2,\infty)\geq {\rm VaR}^{\alpha}(S_1+S_2,\infty)\). This is known as the sub-additivity property, see Axiom 4 in Section~\ref{subsec_riskmeasure}. By Theorem 3.1 of \cite{Buch2008} it then follows that the so-called `no undercut' property is satisfied for the gradient allocation of this risk measure, i.e., \[\sum_{i\in N}{\rm GVaR}^{\alpha}_i(S,\infty)\leq {\rm VaR}^{\alpha}\left(\sum_{i\in N}S_i,\infty\right)\] for all subsets \(N\) of \(\{1,\ldots,d\}\).

\section{Numerical examples}
\label{sec_examples}
In this section we present a series of illustrative examples featuring the allocation methods presented in the previous sections. 

\subsection{Brownian motion}\label{ex_BM}
This example considers the multivariate Brownian motion process as in Sections \ref{subsec_BM_AVAR} and \ref{subsec_BM_GVAR}, adopting the same notation. In those sections we have derived explicit expressions for the two proposed capital allocation methods, \(K_i(u,S,T)\) and \(\overline{K}_i(u,S,T)\). In particular, we considered the instance where capital \(u\) is determined by the dynamic VaR measure and its corresponding allocations \({\rm AVaR}^{\alpha}_i(S,T)\) and \(\avarbar\). The \(\avarbar\) allocation has been shown to coincide with the gradient allocation method \({\rm GVaR}^{\alpha}_i(S,T)\) in Section~\ref{subsec_BM_GVAR}. The same section also illustrates that the two newly proposed allocation methods coincide when considering an infinite time horizon,  i.e., \({\rm AVaR}^{\alpha}_i(S,\infty)= \overline{\rm AVaR}^{\alpha}_i(S,\infty)\). These capital allocation fractions, when considering an infinite time horizon, do not depend on the capital level \(u\) (or \(\alpha\) when using the dynamic VaR measure to determine capital).
In this subsection we assess the difference between the two new allocation methods over a finite time horizon and their sensitivities towards the capital reserve level \(u\).

\vb

We consider a simplified setting with two risk processes \(S_1(\cdot)\) and \(S_2(\cdot)\). Both processes have unit variance \(\sigma_{1,1}=\sigma_{2,2}=1\) and the correlation between the Brownian motions is given by \(\rho=0.5\). We set \(r=(-2,-1)\) as the drift vector, ensuring that the negative drift assumption is satisfied. For the infinite time horizon we obtain the allocation fractions \({m_1}/{m}=c_1(u,S,\infty)=\overline{c}_1(u,S,\infty)=\frac{1}{3}\) and \(m_2/m=\frac{2}{3}\). In Figure \ref{fig_BM_allocation} we have plotted the allocation fraction of the first risk process \(S_1(\cdot)\) for the newly proposed allocation methods, i.e.\ \({\rm AVaR}^\alpha_1(S,T)/{\rm VaR}^\alpha_1(S,T)\) and \(\avarbare/{\rm VaR}^\alpha(S,T)\) as a function of the time horizon \(T\) for various values of \(\alpha\). Lower values of \(\alpha\) correspond to higher values of \({\rm VaR}^\alpha(S,T)\) and vice versa. The figure illustrates that the allocation methods align for very short time horizons (for which the risk is divided equally) and converge to the same limit, \(\frac{1}{3}\) for the first risk process (or business line), over a long time horizon. For the intermediate time horizons, the two allocation methods differ and the difference becomes more substantial when \(\alpha\) increases. 

\vb

In Figure \ref{fig_BM_allocation_2} we have again plotted, for the first risk process (or business line), the two newly proposed allocation fractions but now allocating general capital reserve level \(u\) instead of \({\rm VaR}^\alpha(S,T)\). The left panel shows \(c_1(u,S,T)\) and \(\overline{c}_1(u,S,T)\) as a function of \(u\) for various time horizons \(T\). The allocations become less sensitive to the capital level \(u\) as the time horizon \(T\) increases. The right panel of Figure \ref{fig_BM_allocation_2} illustrates the allocation fraction as a function of the time horizon for positive drift \(r=(2,1)\). In this case, the allocation fraction \(c_1(u,S,T)\) converges towards \(\frac{r_1}{r}=\frac{2}{3}\) when \(T\rightarrow\infty\), as has been pointed out in Section~\ref{subsec_BM_AVAR}. The allocation \(\overline{c}_1(u,S,T)\), however, does not. As discussed in Section~\ref{subsec_BM_GVAR}, this allocation method is not properly defined for an infinite time horizon in case of positive drifts. When \(u\) is large relative to the time horizon (and the other parameters), the two allocation methods are comparable. This is a result of the fact that for relatively large \(u\), both the expectations \(\mathbb{E}\left[t_T^*\,|\,S(t_T^*)=u\right]\) and \(\mathbb{E}\left[\tau(u)\,|\,\tau(u)\leq T\right]\) approach \(T\) in case of positive drifts. 

\begin{figure}[H]
\centering
\includegraphics[scale=0.75]{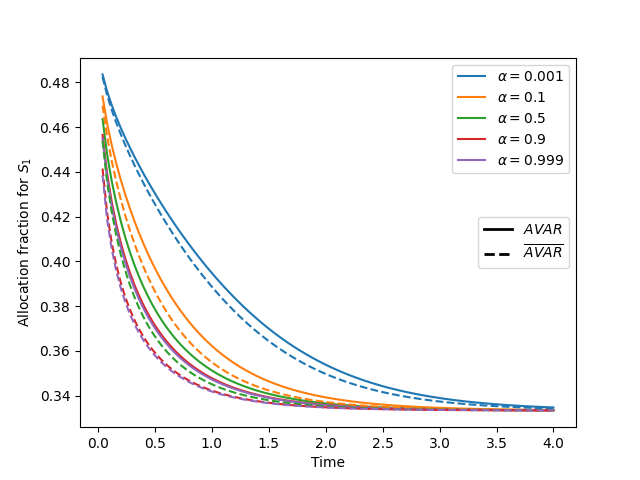}
\caption{Allocation fractions of the first risk process. The figure shows (solid lines) \({\rm AVaR}^\alpha_1(S,T)/{\rm VaR}^\alpha(S,T)=c_1({\rm VaR}^\alpha(S,T),S,T)\) and (dashed lines) \({\rm GVaR}^\alpha_1(S,T)/{\rm VaR}^\alpha(S,T)=\overline{c}_1({\rm VaR}^\alpha(S,T),S,T)\) as a function of the time horizon \(T\) for multiple values of \(\alpha\).}
\label{fig_BM_allocation}
\end{figure}

\begin{figure}[H]
\resizebox{8cm}{6cm}{
\includegraphics[scale=0.75]{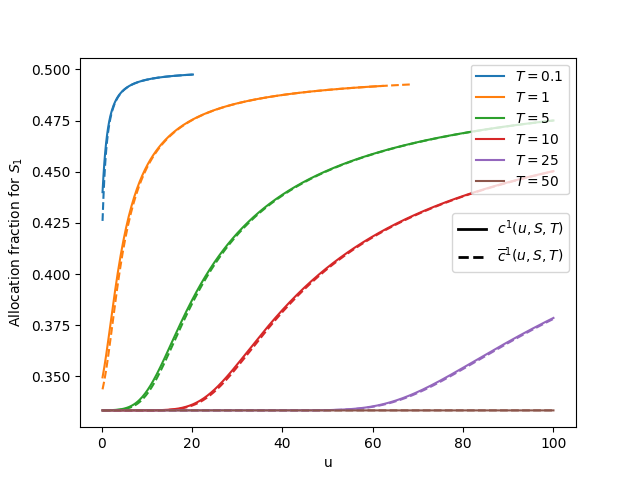}}
\resizebox{8cm}{6cm}{
\includegraphics[scale=0.75]{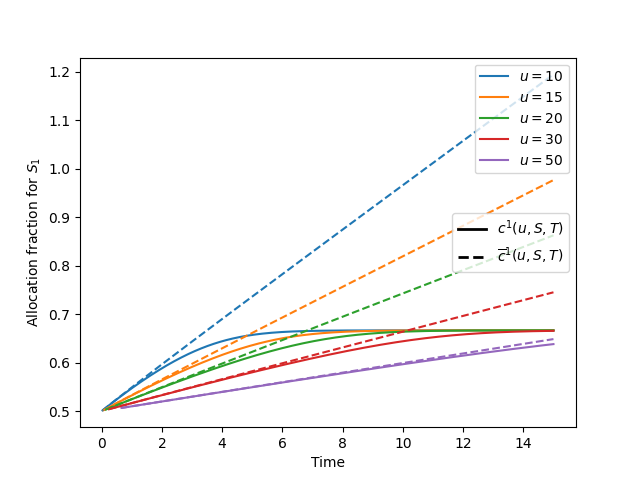}}
\caption{Left panel: Allocation fractions of the first risk process, \(c_1(u,S,T)\) and \(\overline{c}_1(u,S,T)\), as functions of \(u\) for multiple values of \(T\). Right panel: Allocation fractions of the first risk process \(c_1(u,S,T)\) and \(\overline{c}_1(u,S,T)\), as functions of \(T\) for multiple values of \(u\) in case of positive drift parameters \(r=(2,1)\).}
\label{fig_BM_allocation_2}
\end{figure}

\subsection{Spectrally negative L\'{e}vy process}\label{ex_spec_neg}
Consider a spectrally negative multivariate L\'{e}vy process \(\boldsymbol{S}(\cdot)=(S_1(\cdot),\ldots,S_d(\cdot))^\top \). This means that the process does not contain positive jumps, i.e., the L\'{e}vy measure has support on \((-\infty,0]\) only. For more general properties of spectrally negative L\'{e}vy processes we refer the reader to Chapter VII in \cite{Bertoin1996} or Chapter 8 in \cite{Kyprianou2006}. The aggregated process \(S(\cdot)\) is also a spectrally negative L\'{e}vy process. Equivalently, the paths are skip-free upwards and ruin can only be caused by the drift and diffusion parts. Due to the absence of upwards jumps, we have \(S(\tau(u))=u\) almost surely (given that \(\tau(u)<\infty\)). In this example the focus lies on the first proposed allocation method \(K_i(u,S,T)\) and its quantification according to Theorem~\ref{th_wald} and Remark~\ref{rem_Hall}. 

\vb

In the spectrally negative case, it is known that the exponential moments of the L\'{e}vy process are finite for all real \(\vartheta\geq 0\): for \(\vartheta\geq 0\) we have \(\mathbb{E}[e^{\vartheta S(t)}]<\infty\). As a result, for \(\vartheta\geq 0\), the function \(\kappa(\vartheta)\) is the Laplace exponent which is strictly convex by Holder's inequality and \(\lim_{\vartheta\rightarrow\infty}\kappa(\vartheta)=\infty\). As before, we will assume a negative drift \(\mathbb{E}[S(1)]=\kappa' (0)<0\) (i.e., default can only be caused by the diffusion part), so as to rule out the situation of almost sure ruin. It then follows that \(\kappa(\vartheta)\) has a real positive zero \(\vartheta^*>0\). See also Theorem XI.2.3.\ in \cite{Asmussen2010}.
 
By similar reasoning there exists a real \(\vartheta>0\) such that \(\kappa^{\mathbb{Q}}(-\vartheta)=\kappa(-\vartheta+\vartheta^*)\leq 0\) and thus \(\mathbb{E}^{\mathbb{Q}}\left[e^{-\theta S(1)}\right]=e^{\kappa(-\theta+\vartheta^*)}\leq 1\). By Remark~\ref{rem_Hall}, \(\mathbb{E}^{\mathbb{Q}}\left[S(\tau(u))\right]\) exists and is given by
\[\mathbb{E}^{\mathbb{Q}}\left[S(\tau(u))\right]=\mathbb{E}^{\mathbb{Q}}\left[S(1)\right]\mathbb{E}^{\mathbb{Q}}\left[\tau(u)\right].\]

The existence of \(\mathbb{E}^{\mathbb{Q}}\left[S_i(\tau(u))\right]\) follows when: 1) \(\mathbb{E}^{\mathbb{Q}}\left[|S_i(1)|\right]<\infty\) (result by \cite{Doob1990} page 380, also used in Theorem~\ref{th_wald}), or, 2) there exists a real \(\vartheta<0\) such that \(\kappa_i^{\mathbb{Q}}(\vartheta)=\kappa_{\boldsymbol{S}}(\vartheta \boldsymbol{e}_i+\vartheta^*\boldsymbol{1})-\kappa_{\boldsymbol{S}}(\vartheta^*\boldsymbol{1})\leq 0\). Here we have used the notation \(\boldsymbol{e}_i\) to denote the vector of dimension \(d\) with all entries 0 except for the \(i\)-th entry which is 1. Note that in the special case where the L\'{e}vy processes \(S_1(\cdot),\ldots, S_d(\cdot)\) are independent of each other, condition 2) is satisfied when the individual risk process has a positive drift under the \(\mathbb{Q}\)-measure, i.e.\ \(\mathbb{E}^{\mathbb{Q}}[S_i(1)]=\kappa_i' (\vartheta^*)>0\).

In case \(\mathbb{E}^{\mathbb{Q}}\left[S_i(\tau(u))\right]\) has been proven to exist, it is given by 
\[\mathbb{E}^{\mathbb{Q}}[S_i(\tau(u))]=\mathbb{E}^{\mathbb{Q}}[S_i(1)]\,\mathbb{E}^{\mathbb{Q}}[\tau(u)].\]
When both \(\mathbb{E}^{\mathbb{Q}}[S(\tau(u))]\) and \(\mathbb{E}^{\mathbb{Q}}[S_i(\tau(u))]\) exist, expression \eqref{eq_th_wald} for an infinite time horizon holds true.

\vb

In the infinite time horizon case, it is also possible to derive an expression for the dynamic VaR measure and by differentiation obtain an expression for the gradient capital allocation.
By Theorem XI.2.3 of \cite{Asmussen2010}, the infinite time ruin probability is of an exponential form: \(\psi(u,\infty)=e^{-\vartheta^*u}\). The dynamic VaR measure is then determined as \({\rm VaR}^{\alpha}(S,\infty)=-({1}/{\vartheta^*})\,\ln(\alpha)\). Whenever the gradient allocation is properly defined, it is given by \[\frac{1}{(\vartheta^*)^2}\frac{\partial\vartheta^*(x_i)}{\partial x_i}\big|_{x_i=1},\] where \(\vartheta^*(x_i)\) denotes the `Cram\'{e}r root' of the process \(\sum_{j\neq i}S_j(\cdot)+x_iS_i(\cdot)\).

\subsection{Compound Poisson with drift}\label{ex_CP}
This subsection models the risk process by a compound Poisson model with drift as is popular in insurance risk modeling (the well known Cram\'{e}r-Lundberg model). We assume that the jumps of the individual processes are independent and identically exponentially distributed. The focus lies on finding expressions for the proposed capital allocation methods \(\rm{AVaR}_i^{\alpha}(S,\infty)\) and \(\overline{\rm{AVaR}}_i^{\alpha}(S,\infty)\) over an infinite time horizon. Furthermore, we compare these new capital allocation methods to the gradient allocation method when applied to the dynamic VaR measure, i.e.\ \(\rm{GVaR}_i^{\alpha}(S,\infty)\). Finally, we present some numerical work.

\vb

We start by specifying the model in more detail. Risk process \(i\) is independent of the other risk processes and defined as
\[S_i(t):=-r_it+\sum_{k=1}^{N_i(t)}Z_{i,k},\]
where the Poisson arrival process \(N_i(\cdot)\) is independent of the jump sizes \(Z_{i,k}\) and has rate \(\beta_i\), respectively. For risk process \(i\), the jump sizes \(Z_{i,k}\) are i.i.d. and exponentially distributed with parameter \(\theta\) and moment generating function \(\hat{F}_{Z}(\vartheta)=\theta/(\theta-\vartheta)\) which are the same for all \(i\). The Poisson processes \(N_i(\cdot)\) and jump size sequences \((Z_{i,k})_{k}\) are independent across \(i\). For risk process \(i\) the L\'{e}vy exponent is given by
\[\kappa_i(\vartheta):=-\vartheta r_i+\beta_i\left(\hat{F}_{Z}(\vartheta)-1\right)=-\vartheta r_i+\beta_i\frac{\vartheta}{\theta-\vartheta}.\]

By evaluation of the moment generating function, we will show that the aggregated risk process \(S(t):=\sum_{i=1}^d S_i(t)\) has constant fees \(r:=\sum_{i=1}^dr_i\), compound Poisson jumps with arrival rate \(\lambda:=\sum_{i=1}^d\beta_i\), and i.i.d.\ exponentially distributed jump sizes \(Z\) with parameter \(\theta\) (and m.g.f. \(\hat{F}_Z(\cdot)\)), i.e.
\[\mathbb{E}\left[\exp\left(\vartheta S(t)\right)\right]=\prod_{i=1}^d \exp\left(-r_it+\beta_i t(\hat{F}_Z(\vartheta)-1)\right)=\exp\left(-\vartheta r t + \lambda t(\hat{F}_Z(\vartheta)-1)\right).\]
This gives the L\'{e}vy exponent of the aggregated risk process \(S(\cdot)\) as \(\kappa(\vartheta)=-\vartheta r +\lambda \vartheta/(\theta-\vartheta)\).
By a similar argumentation we also find 
\[\mathbb{E}\left[\exp\left(\langle{\boldsymbol\vartheta},{\boldsymbol S}(t)\rangle\right)\right]=\exp\left(t\kappa_{\boldsymbol{S}}(\boldsymbol{\vartheta})\right)=\exp\left(-\sum_{i=1}^d \vartheta_i r_it+\lambda t\left(\sum_{i=1}^d\frac{\beta_i}{\lambda}\hat{F}_{Z}\left(\vartheta_i\right)-1\right)\right).\]

\vb 

To rule out the trivial situation where the ultimate ruin probability equals 1, we assume a negative drift, i.e.\ \(r>\frac{\lambda}{\theta}\). The negative drift assumption implies \(S(t)\rightarrow-\infty\) and \(\sup_{0\leq t<\infty}S(t)<\infty\) almost surely. Under this assumption the change of measure (to the \(\mathbb{Q}\)-measure) that was presented in Section~\ref{sec_Aallocation} can be applied. To this end, we take \(\vartheta^*\) as the positive solution to \(\kappa(\vartheta)=0\) which gives \(\vartheta^*=\theta-\frac{\lambda}{r}\) and find
\[m_i=\frac{\partial}{\partial \vartheta_i}\mathbb{E}\left[e^{\langle{\boldsymbol\vartheta},\boldsymbol{S}(1)\rangle}\right]\Bigg|_{{\boldsymbol\vartheta}=\vartheta^*{\boldsymbol 1}}=-r_i+\beta_i\frac{\theta}{(\theta-\vartheta^*)^2}=-r_i+\beta_i\theta\frac{r^2}{\lambda^2},\]
such that \(m=\sum_{i=1}^d m_i=-r+\theta\frac{r^2}{\lambda}\).
By Theorem~\ref{th_asymptotic}, \(\lim_{u\rightarrow \infty} c_i(,S,\infty)=m_i/m\) whenever \(\mathbb{E}^{\mathbb{Q}}\left[|S_i(1)|\right]<\infty\) for all \(i\). 

Under the  \(\mathbb{Q}\)-measure, we find, analogous to the computations in Section IV.4 of \cite{Asmussen2010} that the jump size $Z$ is again exponentially distributed with rate \(\theta^{\mathbb{Q}}=\theta-\vartheta^*=\frac{\lambda}{r}\) and that the jump arrivals are still Poisson distributed with parameter \(\beta_i^{\mathbb{Q}}=\beta_i\frac{\theta}{\theta-\vartheta^*}=\frac{\beta_i}{\lambda}\theta r\). As a consequence we also find \(\lambda^{\mathbb{Q}}=\theta r\). 

For the aggregated risk process \(S(\cdot)\) with exponential jump sizes, the infinite time ruin probability is known and given by (see Chapter IV, Section 5 in \cite{Asmussen2010}):
\begin{equation}\label{eq_ruin_CL}
\psi(u,\infty)=\frac{\lambda}{\theta r}e^{-\vartheta^* u}=\frac{\lambda}{\theta r}e^{-(\theta-\lambda/r) u}.
\end{equation}
From this we can extract the value at ruin, i.e.
\[\rm{VaR}^{\alpha}(S,\infty)=-\frac{1}{\vartheta^*}\ln \left(\frac{\alpha \theta r}{\lambda}\right).\] 
In the remainder of this section we will consider the three allocation methods discussed in this paper, i.e.
\begin{enumerate}
\item \(\rm{AVaR}_i^{\alpha}(S,\infty)\) (through \(K_i(u,S,\infty)\)), and,
\item \(\overline{\rm{AVaR}}_i^{\alpha}(S,\infty)\) (through \(\overline{K}_i(u,S,\infty)\)),
\item \(\rm{GVaR}_i^{\alpha}(S,\infty)\).
\end{enumerate}

To derive an expression for \(c_i(u,S,\infty)\) and \(K_i(u,S,\infty)\) (and subsequently \(\rm{AVaR}_i^{\alpha}(S,\infty)\)) for general \(u\) we, unfortunately, cannot use Wald's first identity as in Theorem~\ref{th_wald}. In order to derive an expression for these allocation quantities we will condition on the deficit at ruin and the time or ruin. For the numerator in the expression of \(c_i(u,S,\infty)\), as given in \eqref{eq_Kallocation}, we then find  
\begin{align*}
\mathbb{E}\left[S_i(\tau(u))\,\big|\,\tau(u)<\infty\right]=&\frac{1}{\psi(u,\infty)}\mathbb{E}^{\mathbb{Q}}\left[e^{-\vartheta^* S(\tau(u))}S_i(\tau(u))\right]\\
=&\frac{1}{\psi(u,\infty)}\int_0^\infty \int_u^\infty e^{-\vartheta^* x}\mathbb{E}^{\mathbb{Q}}\left[S_i(t)\,|\,S(t)=x,\tau(u)=t\right]f^{\mathbb{Q}}_{\tau(u),S(\tau(u))}(t,x)\,{\rm d}x\,{\rm d}t,
\end{align*}
where \(f^{\mathbb{Q}}_{\tau(u),S(\tau(u))}(t,x)\) denotes the joint density function of \(\tau(u)\) and \(S(\tau(u))\). The value of the process at the time of ruin can be written as \(S(\tau(u))=u+\xi(u)\), with overshoot \(\xi(u)\). The overshoot is exponentially distributed with parameter \(\theta\) (or \(\theta^{\mathbb{Q}}\) under \(\mathbb{Q}\)) and independent of the time of ruin (see also Proposition V.1.1 in \cite{Asmussen2010}). As a result we find 
\begin{equation}\label{eq_K_CP}
\mathbb{E}\left[S_i(\tau(u))\,\big|\,\tau(u)<\infty\right]=\frac{1}{\psi(u,\infty)}\int_0^\infty \int_u^\infty e^{-\vartheta^* x}\mathbb{E}^{\mathbb{Q}}\left[S_i(t)\,|\,S(t)=x\right]f^{\mathbb{Q}}_{\tau(u)}(t)f^{\mathbb{Q}}_{S(\tau(u))}(x)\,{\rm d}x\,{\rm d}t,
\end{equation}
where \(f^{\mathbb{Q}}_{\tau(u)}(t)\) denotes the probability density function of the time of ruin \(\tau(u)\) and \(f^{\mathbb{Q}}_{S(\tau(u))}(x)\) denotes the probability density function of \(S(\tau(u))\).
Next, we find that the conditional expectation \(\mathbb{E}^{\mathbb{Q}}\left[S_i(t)\,|\,S(t)=x\right]\) can be derived explicitly by noting that, for fixed \(n_i\) and \(n_k\), 
\[\mathbb{E}^{\mathbb{Q}}\left[\sum_{j_1=1}^{n_i}Z_{i,j_1}\,\Bigg|\,\sum_{j_1=1}^{n_i}Z_{i,j_1}+\sum_{j_2=1}^{n_k}Z_{k,j_2}=y\right]=\frac{n_i}{n_i+n_k}y.\] After some tedious but straightforward calculations this gives 
\[\mathbb{E}^{\mathbb{Q}}\left[S_i(t)\,|\,S(t)=x\right]=\frac{\beta_i^{\mathbb{Q}}}{\sum_{j=1}^d\beta_j^{\mathbb{Q}}}\left(x+rt\right)-r_it.\] 
By substituting this result into Equation \eqref{eq_K_CP}, we find
\begin{align}
\mathbb{E}\left[S_i(\tau(u))\,\big|\,\tau(u)<\infty\right]=&\frac{1}{\psi(u,\infty)}\frac{\beta_i^{\mathbb{Q}}}{\sum_{j=1}^d\beta_j^{\mathbb{Q}}}\int_u^\infty \int_0^\infty e^{-\vartheta^*x}x f^{\mathbb{Q}}_{\tau(u)}(t)f^{\mathbb{Q}}_{S(\tau(u))}(x)\,{\rm d}t\,{\rm d}x\nonumber\\
&+\frac{1}{\psi(u,\infty)}\left(\frac{\beta_i^{\mathbb{Q}}}{\sum_{j=1}^d\beta_j^{\mathbb{Q}}}r-r_i\right)\int_u^\infty \int_0^\infty e^{-\vartheta^*x}t f^{\mathbb{Q}}_{\tau(u)}(t)f^{\mathbb{Q}}_{S(\tau(u))}(x)\,{\rm d}t\,{\rm d}x\label{eq_K_CP_int}
\end{align}
We will now discuss the two double integrals separately. 
The first double integral can be written as 
\small
\begin{equation}\label{eq_K_CP_int1}
\mathbb{E}^{\mathbb{Q}}\left[S(\tau(u))e^{-\vartheta^* S(\tau(u))}\right]=-\frac{\partial}{\partial \vartheta^*}\mathbb{E}^{\mathbb{Q}}\left[e^{-\vartheta^* S(\tau(u))}\right]=-\frac{\partial}{\partial \vartheta^*}e^{-\vartheta^* u} \frac{\theta^\mathbb{Q}}{\theta^\mathbb{Q}+\vartheta^*}=e^{-\vartheta^* u}\frac{\theta^\mathbb{Q}}{\theta^\mathbb{Q}+\vartheta^*}\left(u+\frac{1}{\theta^\mathbb{Q}+\vartheta^*}\right),
\end{equation}
\normalsize
where we have used that \(S(\tau(u))=u+\xi(u)\), with overshoot \(\xi(u)\) exponentially distributed with parameter \(\theta^{\mathbb{Q}}\) under \(\mathbb{Q}\) and Laplace transform \(\mathbb{E}^{\mathbb{Q}}\left[e^{-\vartheta^* \xi(\tau(u))}\right]=\theta^{\mathbb{Q}}/(\theta^{\mathbb{Q}}+\vartheta^*)\).

The second double integral can be written as
\begin{equation}\label{eq_K_CP_int2}
\mathbb{E}^{\mathbb{Q}}\left[\tau(u)e^{-\vartheta^* S(\tau(u))}\right]=e^{-\vartheta^* u}\mathbb{E}^{\mathbb{Q}}\left[\tau(u)\right]\mathbb{E}^{\mathbb{Q}}\left[e^{-\vartheta^* \xi(u)}\right]=\mathbb{E}^{\mathbb{Q}}\left[\tau(u)\right]e^{-\vartheta^* u} \frac{\theta^\mathbb{Q}}{\theta^\mathbb{Q}+\vartheta^*}.
\end{equation}
As a consequence of Wald's identity we furthermore have that 
\begin{equation}
\mathbb{E}^{\mathbb{Q}}\left[\tau(u)\right]=\frac{\mathbb{E}^{\mathbb{Q}}\left[S(\tau(u))\right]}{\mathbb{E}^{\mathbb{Q}}\left[S(1)\right]}=\frac{u+1/\theta^\mathbb{Q}}{m}.\label{eq_CP_tau}
\end{equation}
Substituting \eqref{eq_K_CP_int1}, \eqref{eq_K_CP_int2} and \eqref{eq_CP_tau} back into equation \eqref{eq_K_CP_int} gives
\[\mathbb{E}\left[S_i(\tau(u))\,\big|\,\tau(u)<\infty\right]=\frac{e^{-\vartheta^* u}}{\psi(u,\infty)}\frac{\theta^\mathbb{Q}}{\theta^\mathbb{Q}+\vartheta^*}\left(\frac{\beta_i^{\mathbb{Q}}}{\sum_{j=1}^d\beta_j^{\mathbb{Q}}}\left(u+\frac{1}{\theta^\mathbb{Q}+\vartheta^*}\right)+\left(\frac{\beta_i^{\mathbb{Q}}}{\sum_{j=1}^d\beta_j^{\mathbb{Q}}}r-r_i\right)\frac{u+1/\theta^\mathbb{Q}}{m} \right).\]
Substituting the known expression for the ultimate ruin probability \eqref{eq_ruin_CL}, we finally find 
\[\mathbb{E}\left[S_i(\tau(u))\,\big|\,\tau(u)<\infty\right]=\frac{\beta_i^{\mathbb{Q}}}{\sum_{j=1}^d\beta_j^{\mathbb{Q}}}\left(u+\frac{1}{\theta^\mathbb{Q}+\vartheta^*}\right)+\left(\frac{\beta_i^{\mathbb{Q}}}{\sum_{j=1}^d\beta_j^{\mathbb{Q}}}r-r_i\right)\frac{u+1/\theta^\mathbb{Q}}{m}.\]
By summation over \(i\) we get \(\mathbb{E}\left[S(\tau(u))\,\big|\,\tau(u)<\infty\right]=u+1/\theta\) and by the definition of \(c_i(u,S,\infty)\) we find 
\[c_i(u,S,\infty)=\frac{\beta_i^{\mathbb{Q}}}{\sum_{j=1}^d\beta_j^{\mathbb{Q}}}+\left(\frac{\beta_i^{\mathbb{Q}}}{\sum_{j=1}^d\beta_j^{\mathbb{Q}}}r-r_i\right)\frac{u+1/\theta^\mathbb{Q}}{m(u+1/\theta)}, \ K_i(u,S,\infty)=c_i(u,S,\infty)u.\]
For \(u\rightarrow\infty\) this expression coincides with \(m_i/m\) as has been proven in Theorem~\ref{th_asymptotic}.

\vb

By similar argumentation we can also derive an expression for the allocations \(\overline{K}_i(u,S,\infty)\) given in Equation \eqref{eq_K_tilde_allocation}. By the negative drift assumption, the supremum of the process \(S(\cdot)\) is almost surely finite. Analogue to the derivation of \eqref{eq_t^*_tau}, we can rewrite the expression of \(\overline{K}_i(u,S,\infty)\) which is dependent on \(t_\infty^*\) in terms of \(\tau(u)\), i.e.
\begin{align*}
 \overline{K}_i(u,S,\infty)=&\mathbb{E}\left[S_i(t_\infty^*)\,\big|\,S(t_\infty^*)=u, t_\infty^*<\infty\right]=\mathbb{E}\left[S_i(\tau(u))\,\bigg|\,S(\tau(u))=u,\tau(u)<\infty,\sup_{\tau(u)\leq t<\infty}S(t)=u\right]\\
=&\mathbb{E}\left[S_i(\tau(u))\,\big|\,S(\tau(u))=u,\tau(u)<\infty\right]=\frac{\int_0^\infty xe^{-\vartheta^*u}f^{\mathbb{Q}}_{S_i(\tau(u)),S(\tau(u))}(x,u){\rm d}x}{e^{-\vartheta^*u}f^{\mathbb{Q}}_{S(\tau(u))}(u)}\\
=&\mathbb{E}^{\mathbb{Q}}\left[S_i(\tau(u))\,\big|\,S(\tau(u))=u\right]=\int_{0}^\infty \mathbb{E}^{\mathbb{Q}}\left[S_i(t)\,|\,S(t)=u, \tau(u)=t\right]\,f^{\mathbb{Q}}_{\tau(u)\,|\,S(\tau(u))}(t\,|\,u)\,{\rm d}t\\
=&\int_{0}^\infty \mathbb{E}^{\mathbb{Q}}\left[S_i(t)\,|\,S(t)=u\right]\,f^{\mathbb{Q}}_{\tau(u)}(t)\,{\rm d}t.=\frac{\beta_i^{\mathbb{Q}}}{\sum_{j=1}^d\beta_j^{\mathbb{Q}}}u+\left(\frac{\beta_i^{\mathbb{Q}}}{\sum_{j=1}^d\beta_j^{\mathbb{Q}}}r-r_i\right)\int_{0}^\infty t\,f^{\mathbb{Q}}_{\tau(u)}(t)\,{\rm d}t\\
=& \frac{\beta_i^{\mathbb{Q}}}{\sum_{j=1}^d\beta_j^{\mathbb{Q}}}u+\left(\frac{\beta_i^{\mathbb{Q}}}{\sum_{j=1}^d\beta_j^{\mathbb{Q}}}r-r_i\right)\frac{u+1/\theta^{\mathbb{Q}}}{m},
\end{align*}
where \(f^{\mathbb{Q}}_{S_i(\tau(u)),S(\tau(u))}(x,y)\) denotes the joint probability density function of \(S_i(\tau(u))\) and \(S(\tau(u))\).

Unlike \(K_i(u,S,\infty)\), the allocations \(\overline{K}_i(u,S,\infty)\) do sum up to \(u\) as expected. One should also note that \(\overline{c}_i(u,S,\infty)\) converges to \(c_i(u,S,\infty)\) (or equivalently \(m_i/m\)) for \(u\rightarrow \infty\).

\vb

The gradient capital allocations can be derived by differentiation of the ruin probability as mentioned in Section~\ref{sec_GVaR} or by use of Theorem~\ref{th_Euler}. With respect to the former, note that the aggregated process \(\sum_{j\neq i}^d S_j(t)+x_iS_i(t)\) no longer has exponential claims but phase-type \({\rm PH}(\boldsymbol{\gamma} , \boldsymbol{M}(x_i))\) distributed claims, where we have used the same notation as in \cite{Drekic2004} with
\begin{align*}
\boldsymbol{\gamma} = (\gamma_1,\ldots,\gamma_{d})=\left(\frac{\beta_1}{\lambda},\ldots,\frac{\beta_d}{\lambda}\right), \ \ \ \ \ \boldsymbol{M}(x_i)={\rm diag}\{-\theta,\ldots,-\theta/x_i,\ldots,-\theta\}. 
\end{align*}
The infinite time ruin probability can be found by performing a number of matrix operations (see Chapter IX, Section 3 in \cite{Asmussen2010}), i.e.
\begin{equation}
\mathbb{P}\left(\sup_{t\in[0,\infty)}\sum_{j\neq i}^d S_j(t)+x_iS_i(t)\geq u\right)=\boldsymbol{\gamma}_+(x_i)e^{\left(\boldsymbol{M}(x_i)-\boldsymbol{M}(x_i)\boldsymbol{e}\boldsymbol{\gamma}_+\right)u}\boldsymbol{e}, \ \ \ \ \boldsymbol{\gamma}_+(x_i)=-\frac{\lambda}{r} \boldsymbol{\gamma M}(x_i)^{-1},\label{eq_psi_phase}
\end{equation}
where \(\boldsymbol{e}\) is the column vector of length \(d\) with all components equal to one.
By differentiation of \eqref{eq_psi_phase} (as mentioned in Section~\ref{sec_GVaR}), the gradient allocations \(\rm{GVaR}_i^{\alpha}(S,\infty)\) can be found. 

\vb

We note that this example is also captured under Theorem~\ref{th_Euler}.  In the next numerical section, we show that the gradient allocation method coincides with  \(\overline{\rm{AVaR}}_i^{\alpha}(S,T)\) (the result of Theorem~\ref{th_Euler}) even on an infinite time horizon.

\subsubsection{Numerical Example}\label{subsub_CP_num}
For the numerical results and comparison between the different allocation methods we consider the two-dimensional case \((d=2)\) and use a setup that aligns with the one considered in \cite{Asmussen1984}.

\begin{itemize}\itemsep0em
\item[$\circ$] We consider the case that the jump sizes are exponentially distributed with parameter \(\theta=1\).
\item[$\circ$] The drift rates are given by \(r_{1,2}\equiv r=1\).
\item[$\circ$] The individual jump intensities are given by \(\beta_1=0.85\) and \(\beta_2=0.95\).
\end{itemize}

With these parameter settings, the negative drift assumption of the aggregated risk process \(S(\cdot)\) is satisfied. In Figure \ref{fig3}(b) we present the allocation fractions \({{\rm GVaR}^\alpha_1(S,\infty)}/{{\rm VaR}^\alpha(S,\infty)}\), \({\overline{\rm VaR}^\alpha(S,\infty)}/{{\rm VaR}^\alpha(S,\infty)}\), and \({{\rm AVaR}^\alpha_1(S,\infty)}/{{\rm VaR}^\alpha(S,\infty)}\)  for the first risk process (or business line) \(S_1(\cdot)\) as a function of \(u\). The allocation fraction \(\overline{c}_1(u,S,\infty)\) can be seen to converge to \(c_1(u,S,\infty)\) (and thus also \(m_i/m\)) as \(u\) becomes large. Figure \ref{fig3}(a) shows the same convergence for \(\alpha\rightarrow 0\) when considering the allocation of the measure \({\rm VaR}^{\alpha}(S,\infty)\). This figure also illustrates that, similar to the Brownian case, \({\rm GVaR}_1^{\alpha}(S,\infty)/{\rm VaR}^{\alpha}(S,\infty)\) and \(\overline{\rm AVaR}_1^{\alpha}(S,\infty)/{\rm VaR}^\alpha(S,\infty)\) coincide.

\begin{figure}[h]
 \centering
 {\includegraphics[width=7.6cm]{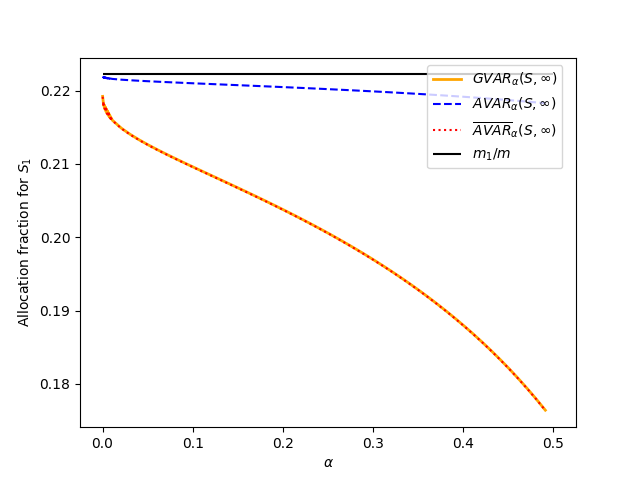}}\hspace{1cm}%
{\includegraphics[width=7.6cm]{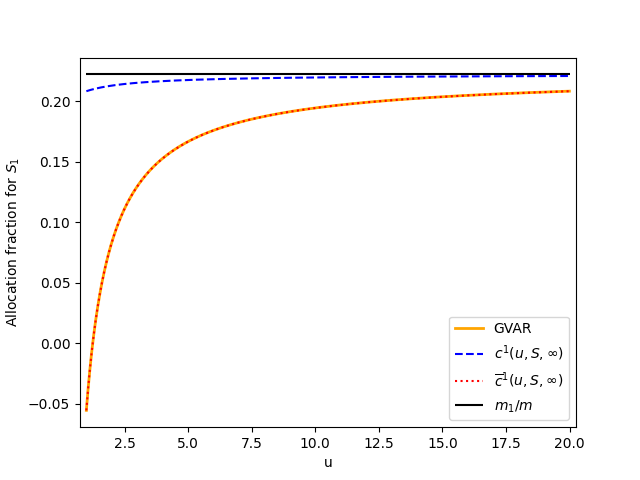}}
\caption{Allocation fractions of the first risk process as a function of \(\alpha\) and \(u\) for $ \beta_1=0.85,\beta_2=0.95, \theta=1, r_1=1,r_2=1$. Panel (a), on the left: 
 \({{\rm GVaR}^\alpha_1(S,\infty)}/{{\rm VaR}^\alpha(S,\infty)}\), \({\overline{\rm VaR}^\alpha(S,\infty)}/{{\rm VaR}^\alpha(S,\infty)}/{{\rm VaR}^\alpha(S,\infty)}\), and \({{\rm AVaR}^\alpha_1(S,\infty)}/{{\rm VaR}^\alpha(S,\infty)}\) as functions of \(\alpha\). Panel (b), on the right: \(c_1(u,S,\infty)\) and \(\overline{c}_1(u,S,\infty)\) as a function of \(u\).}
\label{fig3}
\end{figure}

Figure \ref{fig4} presents the allocation fractions \({{\rm GVaR}^\alpha_1(S,\infty)}/{{\rm VaR}^\alpha(S,\infty)}\), \({{\rm AVaR}^\alpha_1(S,\infty)}/{{\rm VaR}^\alpha(S,\infty)}\)  

and \({\overline{\rm AVaR}^\alpha_1(S,\infty)}/{{\rm VaR}^\alpha(S,\infty)}\), and their sensitivity towards some of the underlying parameters. When both risk processes become less risky (see Figure \ref{fig4}(a)), the allocation fractions move towards a more even risk distribution. The current parameter setup also shows a relatively high impact of a change in the jump intensities. In Figure \ref{fig4}(c), the jump intensities are adjusted favorably for the first risk process resulting in negative risk/capital allocations. 

\begin{figure}[h]
 \centering
 {\includegraphics[width=7.6cm]{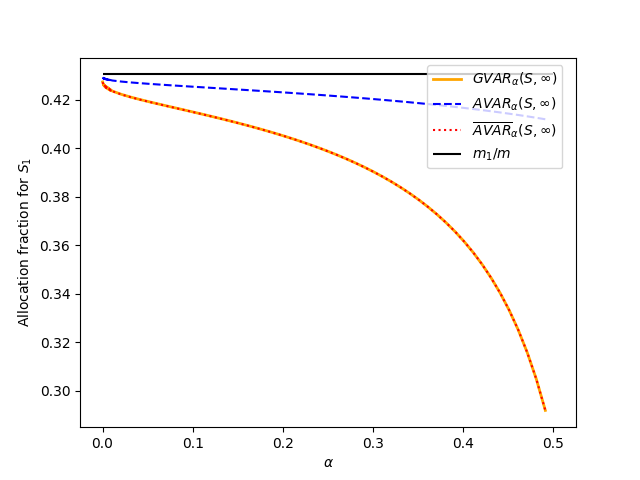}}\hspace{1cm}%
{\includegraphics[width=7.6cm]{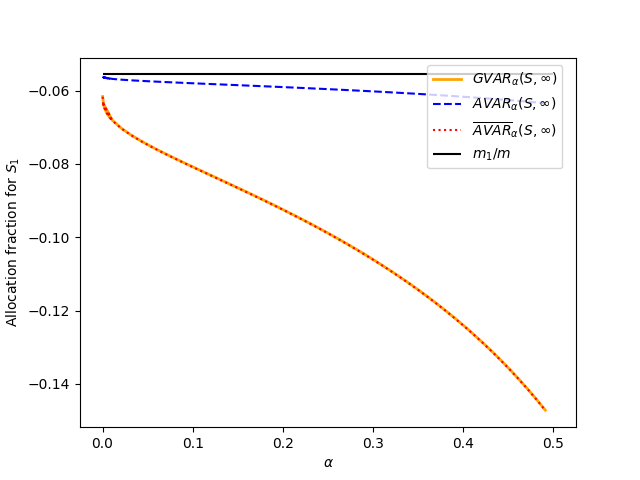}}
\caption{Allocation fractions of the first risk process \({{\rm AVaR}^\alpha_1(S,\infty)}/{{\rm VaR}^\alpha(S,\infty)}\),  \({\overline{\rm AVaR}^\alpha_1(S,\infty)}/{{\rm VaR}^\alpha(S,\infty)}\) and \({{\rm GVaR}^\alpha_1(S,\infty)}/{{\rm VaR}^\alpha(S,\infty)}\) as a function of \(\alpha\). Panel (a), on the left: $\beta_1=0.85,\beta_2=0.95, \theta=1, r_1=1.5, r_2=1.5$. Panel (b), on the right: $ \beta_1=0.8,\beta_2=1, \theta=1, r_1=1,r_2=1$.}
\label{fig4}
\end{figure}

\section{Concluding remarks}\label{sec_conclusion}
This paper has addressed methodologies to allocate capital reserves to multiple risk process (to be thought of as e.g., business lines). We introduced an intuitively appealing, novel allocation method, with a focus on its application to capital reserves which are determined through a dynamic VaR type measure. Various desirable properties of the presented approach were derived including a limit result when considering a large time horizon and the comparison with the frequently used gradient allocation method. In passing we introduced a second allocation method, and discussed its relation to the other allocation approaches. A number of examples illustrated the applicability and performance of the allocation approaches. 

\vb

Theorem~\ref{th_Euler}, featuring the gradient allocation method applied to the dynamic VaR measure, has been tailored to our needs and captures the examples given in Section~\ref{ex_BM} \& \ref{ex_CP}. One could further investigate whether an extension or adjustment of Theorem~\ref{th_Euler} can be made to include more risk processes.  This requires a different approach as the current result and proof require the maximum of the aggregated process to be obtained. Furthermore, the current proof relies on the differentiability (with respect to an individual risk process) of the sample path of the maximum aggregated process. 

Follow-up research could also relate to necessary and sufficient conditions for diversification and concentration properties of the allocated risk measures. Examples include the `no undercut' property, which has been established for the Brownian case when considering an infinite time horizon. For a finite time horizon and other risk processes these types of properties have not been dealt with in this paper. 

\bibliography{References}

\appendix
\begin{appendices}
\section{Supporting Results}\label{app_A}
The following result is Theorem 9.4 in \cite{Loomis1990}. Note that it is a slightly different version of the classical Implicit Function Theorem.
\begin{theorem}{\bf (Implicit Function Theorem)}\label{th_app1} 
Let \(X\times P\) be an open subset of \(\mathbb{R}\times \mathbb{R}\) and let \(f: X\times P \rightarrow \mathbb{R}\) be differentiable. Suppose the derivative \(D_xf\) of \(f\) with respect to \(x\) is continuous on \(X\times P\). Assume that \(D_x f(\overline{x},\overline{p})\) is invertible where the point \((\overline{x},\overline{p})\) in the interior of \(X\times P\). Let \[\overline{y}=f(\overline{x},\overline{p}).\]
Then there are neighborhoods \(U\subset X\) and \(W\subset P\) of \(\overline{x}\) and \(\overline{p}\) on which \(f(x,p)=y\) uniquely defines \(x\) as a function of \(p\). That is, there is a function \(\xi: W\rightarrow U\) such that:
\begin{enumerate}[a.]
\item \(f(\xi(p),p)=\overline{y}\) for all \(p\in W\).
\item For each \(p\in W, \ \epsilon(p)\) is the unique solution to \(f(x,p)=y\) lying in \(U\). In particular, then 
\[\xi(\overline{p})=\overline{x}.\]
\item \(\xi\) is differentiable on \(W\), and 
\[\frac{\partial\xi}{\partial p}=-\left(\frac{\partial f}{\partial x}\right)^{-1}\frac{\partial f}{\partial p}\]
\end{enumerate}
%Notes on the implicit function theorem, KC Border
\end{theorem}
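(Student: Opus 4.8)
The plan is to exploit that the statement is purely one-dimensional: $X,P\subset\mathbb{R}$, so ``$D_xf(\bar x,\bar p)$ invertible'' means simply $D_xf(\bar x,\bar p)\neq 0$, and replacing $f$ by $-f$ if necessary (which alters neither $\bar y$ nor the formula in (c)) I may assume $D_xf(\bar x,\bar p)>0$. First I would use continuity of $D_xf$ to pick a product neighbourhood $U_0\times W_0\subset X\times P$ of $(\bar x,\bar p)$ on which $D_xf>0$, so that for each fixed $p\in W_0$ the map $x\mapsto f(x,p)$ is strictly increasing on $U_0$.

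Next I would construct $\xi$ by an intermediate-value argument. Choosing $\delta>0$ with $[\bar x-\delta,\bar x+\delta]\subset U_0$, strict monotonicity in $x$ gives $f(\bar x-\delta,\bar p)<\bar y<f(\bar x+\delta,\bar p)$, and continuity of $f$ in its second argument then yields a neighbourhood $W\subset W_0$ of $\bar p$ on which $f(\bar x-\delta,p)<\bar y<f(\bar x+\delta,p)$. For each $p\in W$ the intermediate value theorem together with strict monotonicity produces a unique $\xi(p)\in U:=(\bar x-\delta,\bar x+\delta)$ with $f(\xi(p),p)=\bar y$; this gives (a) and (b), and $\xi(\bar p)=\bar x$ by uniqueness. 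Continuity of $\xi$ at an arbitrary $p_0\in W$ would then follow by rerunning the same construction centred at $(\xi(p_0),p_0)$ with $\delta$ taken arbitrarily small.

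For part (c) I would differentiate the identity $f(\xi(p),p)=\bar y$ at a point $p_0\in W$. Writing $a:=D_xf(\xi(p_0),p_0)>0$ and $b:=D_pf(\xi(p_0),p_0)$, the joint differentiability of $f$ at $(\xi(p_0),p_0)$ gives
\[
a\,(\xi(p)-\xi(p_0))+b\,(p-p_0)+o\big(|\xi(p)-\xi(p_0)|+|p-p_0|\big)=0.
\]
Since $a>0$, I would first extract a Lipschitz bound $|\xi(p)-\xi(p_0)|\le C|p-p_0|$ for $p$ near $p_0$, then feed it back so the error term becomes $o(|p-p_0|)$, divide by $p-p_0$, and let $p\to p_0$ to conclude $\xi'(p_0)=-a^{-1}b=-\big(D_xf(\xi(p_0),p_0)\big)^{-1}D_pf(\xi(p_0),p_0)$, which is (c).

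The hard part will be this last step: $f$ is assumed only (jointly) differentiable, with merely $D_xf$ --- not $D_pf$ --- known to be continuous, so a mean-value theorem in the $p$-variable is not directly available. The remedy is to work solely with the first-order expansion of $f$ at the single point $(\xi(p_0),p_0)$ displayed above, use continuity of $D_xf$ to keep the coefficient $a$ bounded away from $0$ in a neighbourhood, and bootstrap the little-$o$ term from $|\xi(p)-\xi(p_0)|+|p-p_0|$ down to $|p-p_0|$ via the Lipschitz estimate before passing to the limit. As an alternative that sidesteps the bootstrap, one could reduce the whole statement to the one-dimensional Inverse Function Theorem applied to $F(x,p):=(f(x,p),p)$, whose derivative at $(\bar x,\bar p)$ is triangular with nonzero diagonal entries.
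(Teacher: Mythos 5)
The paper does not prove this statement at all: it is quoted verbatim as Theorem~9.4 of the cited reference (Loomis), so there is no in-paper argument to compare yours against. Judged on its own, your proof is correct and is the standard elementary route for the scalar case: positivity of $D_xf$ on a product neighbourhood gives strict monotonicity in $x$, the intermediate value theorem plus monotonicity yields existence and uniqueness of $\xi(p)$ (parts (a), (b)), rerunning the construction with small $\delta$ gives continuity of $\xi$, and the differentiability of $\xi$ is obtained from the first-order expansion of $f$ at the single point $(\xi(p_0),p_0)$ with the Lipschitz bootstrap. That last step is exactly the right way to cope with the weak hypotheses here (only $D_xf$, not $D_pf$, is assumed continuous), and your use of $a>0$ to absorb the $o(|\xi(p)-\xi(p_0)|+|p-p_0|)$ term into a Lipschitz estimate before dividing by $p-p_0$ is sound, relying as it must on the previously established continuity of $\xi$. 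Two small caveats: replacing $f$ by $-f$ does flip the sign of $\bar y$ (though the solution set $\{f(x,p)=\bar y\}$ and the quotient in (c) are unchanged, so the reduction is harmless); and the closing alternative via the inverse function theorem applied to $F(x,p)=(f(x,p),p)$ is not actually available under the stated hypotheses, since the standard inverse function theorem needs $F$ to be $C^1$, i.e.\ continuity of $D_pf$ as well --- so you should keep the bootstrap argument as the main line and drop, or at least qualify, that remark.
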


\begin{lemma}\label{lem_bi_BM}
Consider the bivariate normal distribution \[
\begin{pmatrix}
X_1\\
X_2
\end{pmatrix} 
\sim \mathcal{N}\left(\begin{pmatrix}
\mu_1\\ 
\mu_2
\end{pmatrix},\begin{pmatrix}
\sigma^2_1 & \rho\sigma_1\sigma_2\\
\rho\sigma_1\sigma_2 & \sigma_2^2
\end{pmatrix}\right).\]
The conditional distribution of \(|X_1|\) given \(X_2=x_2\) is 
\[\mathbb{E}\left[|X_1| \big| X_2=x_2\right]=\left(\mu_1+\rho\frac{\sigma_1}{\sigma_2}\left(x_2-\mu_2\right)\right)\left(1-2\Phi\left(c\right)\right)+2\sigma_1\sqrt{1-\rho^2}\phi\left(c\right),\]
where \(c=\frac{-\mu_1-\rho\frac{\sigma_1}{\sigma_2}(x_2-\mu_2)}{\sigma_1\sqrt{1-\rho^2}}\).
\end{lemma}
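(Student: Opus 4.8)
The plan is to reduce the bivariate computation to a one-dimensional one by conditioning. First I would invoke the standard fact that, for the Gaussian vector $(X_1,X_2)^\top$ in the statement, the conditional law of $X_1$ given $X_2=x_2$ is again normal, namely $X_1\mid\{X_2=x_2\}\sim\mathcal N(m,s^2)$ with $m:=\mu_1+\rho\frac{\sigma_1}{\sigma_2}(x_2-\mu_2)$ and $s:=\sigma_1\sqrt{1-\rho^2}$. It then remains to evaluate $\mathbb E[|Y|]$ for $Y\sim\mathcal N(m,s^2)$, i.e.\ the mean of a folded normal, and to substitute $m,s$ back in at the end.

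Next I would write $Y=m+sZ$ with $Z\sim\mathcal N(0,1)$, so that $\mathbb E[|Y|]=\int_{\mathbb R}|m+sz|\,\phi(z)\,{\rm d}z$, and split the integral at the point $z=-m/s=:c$, which is exactly where $m+sz$ changes sign: on $\{z>c\}$ the integrand is $m+sz$, on $\{z<c\}$ it is $-(m+sz)$. Collecting terms gives $\mathbb E[|Y|]=m\big(\int_c^\infty\phi(z)\,{\rm d}z-\int_{-\infty}^c\phi(z)\,{\rm d}z\big)+s\big(\int_c^\infty z\,\phi(z)\,{\rm d}z-\int_{-\infty}^c z\,\phi(z)\,{\rm d}z\big)$. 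Using $\int_{-\infty}^c\phi(z)\,{\rm d}z=\Phi(c)$ together with the elementary antiderivative $\int z\,\phi(z)\,{\rm d}z=-\phi(z)$, the first bracket equals $1-2\Phi(c)$ and the second equals $\phi(c)-(-\phi(c))=2\phi(c)$, whence $\mathbb E[|Y|]=m\,(1-2\Phi(c))+2s\,\phi(c)$.

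Finally I would substitute $m=\mu_1+\rho\frac{\sigma_1}{\sigma_2}(x_2-\mu_2)$ and $s=\sigma_1\sqrt{1-\rho^2}$. The constant $c=-m/s$ becomes precisely $c=\frac{-\mu_1-\rho\frac{\sigma_1}{\sigma_2}(x_2-\mu_2)}{\sigma_1\sqrt{1-\rho^2}}$ as in the statement, and $m\,(1-2\Phi(c))+2s\,\phi(c)$ is exactly the claimed expression. I do not expect a genuine obstacle here: the only points needing a little care are the bookkeeping of the two half-line integrals and the sign of $c$ (noting that $\phi$ is even, so $\phi(c)=\phi(-c)$ if one prefers the opposite sign convention), while the bivariate-normal conditioning formula may simply be quoted as standard.
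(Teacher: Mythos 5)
Your proposal is correct and follows essentially the same route as the paper: both reduce to the one-dimensional representation $X_1 = m + sZ$ with $Z$ standard normal (the paper phrases the split as conditioning on $\{X_1>0\}$ versus $\{X_1\le 0\}$, which is exactly your split of the integral at $z=c=-m/s$), and both use $\int z\,\phi(z)\,{\rm d}z=-\phi(z)$ to evaluate the truncated first moments. No gaps.
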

\begin{proof}
First note that we can write \(X_1 = \mu_1+\sigma_1\left(\rho\frac{X_2-\mu_2}{\sigma_2}+\sqrt{1-\rho^2}Z\right)\), where \(Z\) is a standard normal random variable independent of \(X_2\). This gives
\[\mathbb{E}\left[|X_1| \big| X_2=x_2\right]=\mathbb{E}\left[\bigg|\mu_1+\sigma_1\left(\rho\frac{x_2-\mu_2}{\sigma_2}+\sqrt{1-\rho^2}Z\right)\bigg| \right].\]
Conditioning on the events \(X_1>0\) and \(X_1\leq 0\) we find 
\begin{align*}
\mathbb{E}\left[|X_1| \big| X_2=x_2\right]=&\left(\mu_1+\rho\frac{\sigma_1}{\sigma_2}(x_2-\mu_2)\right)\left(\mathbb{P}(Z>c)-\mathbb{P}(Z\leq c)\right)\\
&+\sigma_1\sqrt{1-\rho^2}\left(\mathbb{E}\left[Z\big|Z>c\right]\mathbb{P}(Z>c)-\mathbb{E}\left[Z\big|Z\leq c\right]\mathbb{P}(Z\leq c)\right)\\
=&\left(\mu_1+\rho\frac{\sigma_1}{\sigma_2}(x_2-\mu_2)\right)\left(1-2\Phi(c)\right)+2\sigma_1\sqrt{1-\rho^2}\phi(c)
\end{align*}
where we have used that \(\mathbb{E}\left[Z\big|Z\leq c\right]\mathbb{P}(Z\leq c)=\int_{-\infty}^cz\phi(z)dz=-\int_{-\infty}^c\phi'(z)dz=-\phi(c)\) and similarly \(\mathbb{E}\left[Z\big|Z> c\right]\mathbb{P}(Z> c)=\phi(c)\).
\end{proof}

\section{Proof of Theorem~\ref{th_Euler}}\label{app_B}
\begin{proof}
Without loss of generality we will prove the result for \(i=1\). We define the function \(F_1(y_1,x_1)\) for \(x_1\in(1-\delta,1+\delta)\) and \(y_1\in(q_{{\rm VaR}^{\alpha}_{1,T}}(x_1)-\bar{\delta}/2,q_{{\rm VaR}^{\alpha}_{1,T}}(x_1)+\bar{\delta}/2)\) by 
\[F_1(y_1,x_1):=\mathbb{P}(Z(x_1)\leq y_1)=\mathbb{E}\left[\mathbbm{1}_{Z(x_1)\leq y_1}\right],\]
where we have omitted the dependence of \(Z_{1,T}(x_1)\) on \(1,T\). 

First, we show that the function \(F_1(y_1,x_1)\) is: 1) continuously differentiable in \(y_1\), and, 2) differentiable in \(x_1\) for \(x_1\in(1-\delta,1+\delta)\) and \(y_1\in(q_{{\rm VaR}^{\alpha}_{1,T}}(x_1)-\bar{\delta}/2,q_{{\rm VaR}^{\alpha}_{1,T}}(x_1)+\bar{\delta}/2)\). 
\begin{enumerate}[1)]
\item To prove the continuous differentiability with respect to \(y_1\) we note that \(\frac{\partial F_1}{\partial y_1}(y_1,x_1)=f_{1,x_1}(y_1)\), which is assumed to be continuous in \(x_1\) and \(y_1\) in the given interval by assumption (i). Furthermore, by the same assumption, it is continuous on a closed bounded interval and thereby finite.
\item To prove the differentiability of \(F_1(y_1,x_1)\) with respect to \(x_1\) we will approximate the discontinuous indicator function with a smoother function \(g\), which for \(\epsilon\) small enough such that \(0<\epsilon\leq \bar{\delta}/2\), is given by

\[g_{\epsilon,y_1}(z)=\begin{cases} 
1, & \text{for } y_1-z>\epsilon\\
\frac{1}{2}+\frac{y_1-z}{2\epsilon}, & -\epsilon\leq y_1-z\leq \epsilon\\
0, & \text{for } y_1-z<-\epsilon
\end{cases}.
\]
Note that the derivative is given by
\[g_{\epsilon,y_1}'(z)=\begin{cases} 
0, & \text{for } |y_1-z|> \epsilon\\
-\frac{1}{2\epsilon}, & \text{for } |y_1-z|< \epsilon
\end{cases},
\]
where the derivative does not exist when \(|y_1-z|=\epsilon\). For the interval of \(z\) where then derivative exists and is non-zero we have \((y_1-\epsilon,y_1+\epsilon)\subseteq (q_{{\rm VaR}^{\alpha}_{1,T}}(x_1)-\bar{\delta},q_{{\rm VaR}^{\alpha}_{1,T}}(x_1)+\bar{\delta})\). As a result of the continuity of \(Z(x_1)\), as in (i), the probability that \(|y_1-Z(x_1)|=\epsilon\) is zero. We will focus on showing that \(\mathbb{E}[g_{\epsilon,y_1}(Z(x_1))]\) is differentiable in \(x_1\) and that its derivatives may be computed by taking the derivative inside the expectation. 
To prove this we invoke the dominated convergence theorem. To this end, we first establish the differentiability of the sample paths of \(g_{\epsilon,y_1}(Z(x_1))\). Using similar notation as in \cite{Bertoin1996} and \cite{Sato1999}, we note that \(\frac{\partial}{\partial x_1}g_{\epsilon,y_1}(Z(x_1,\omega))\) for fixed \(\omega\in\Omega\) exists almost everywhere (except when \(|Z(x_1,\omega)-y_1|= \epsilon\) but this event is of probability zero). Furthermore, when the derivative exists it is given by
\[\frac{\partial}{\partial x_1}g_{\epsilon,y_1}(Z(x_1,\omega))=g_{\epsilon,y_1}'\left(Z(x_1,\omega)\right)\frac{\partial}{\partial x_1}Z(x_1,\omega).\]
The differentiability of \(Z(x_1,\omega)\) w.r.t.\ \(x_1\) can be obtained using Lemma~\ref{lem_der_fun}. 
L\'{e}vy processes excluding compound Poisson processes (without drift), almost surely obtain the supremum over a finite time horizon at a unique point in time (see page 171 in \cite{Kyprianou2006}).  Continuous L\'{e}vy processes as well as compound Poisson processes with non-zero drift and positive jumps both attain their supremum, i.e. \ the supremum is in fact a maximum.   As a result, we have \(A_{1,T}^*(x_1,\omega)\) non-empty and a singleton. In other words, the supremum is uniquely attained in \(t_{1,T}^*(x_1,\omega)\). 
We will now show that \(Z(x_1,\omega)\) is differentiable w.r.t. \(x_1\) with \(\frac{\partial}{\partial x_1}Z(x_1,\omega)=S_1(t_{1,T}^*(x_1),\omega)\) by making use of Lemma~\ref{lem_der_fun} and considering the two instances of \(S_i(\cdot)\) separately:  1) continuous processes, and, 2) compound Poisson processes with  negative drift and positive jumps. First note that, the function \(Z(x_1,\omega)\) maximizes over is of the form \(p(t)+x_1q(t)\), where \(p(t)=\sum_{j\neq 1}^dS_j(t,\omega)\) and \(q(t)=S_1(t,\omega)\). 
\begin{enumerate}
\item In case the processes \(S_i(\cdot)\) have continuous sample paths then \(p(t)\) and \(q(t)\) are continuous functions and by Lemma~\ref{lem_der_fun} we have \(\frac{\partial}{\partial x_1}Z(x_1,\omega)=S_1(t_{1,T}^*(x_1),\omega)\).
\item For compound Poisson processes it is well-known that over a finite interval the number of jumps is also almost surely finite, this property is often referred to as finite activity. As a result,  the compound Poisson process \(\sum_{j\neq 1}^dS_j(\cdot)+x_1 S_1(\cdot)\) with negative drift and positive jumps can only attain its maximum at a finite number of time points almost surely.  These time points coincide with the jump times of the individual compound Poisson processes \(S_i(\cdot)\). As these jump times do not depend on \(x_1\),  we have \(\frac{\partial}{\partial x_1}Z(x_1,\omega)=S_1(t_{1,T}^*(x_1),\omega)\) by Lemma~\ref{lem_der_fun}.
\end{enumerate}
We conclude that \(\frac{\partial}{\partial x_1}Z(x_1,\omega)=S_1(t_{1,T}^*(x_1),\omega)\) almost surely and almost everywhere (excluding the points where \(|Z(x_1,\omega)-y_1|=\epsilon\)), 
\(g_{\epsilon,y_1}'(Z(x_1,\omega))=-\frac{1}{2\epsilon}\mathbbm{1}_{|Z(x_1,\omega)-y_1|\leq \epsilon}\). This gives almost surely,
\[\frac{\partial}{\partial x_1}g_{\epsilon,y_1}(Z(x_1))=-\frac{1}{2\epsilon}\mathbbm{1}_{|Z(x_1)-y_1|<\epsilon}S_1(t^*_{1,T}(x_1)).\] 

Note that we always have
\[\bigg|\frac{g_{\epsilon,y_1}(Z(x_1+h,\omega))-g_{\epsilon,y_1}(Z(x_1,\omega))}{h}\bigg|\leq \frac{1}{2\epsilon}\sup_{0\leq t\leq T}\big| S_1(t,\omega)\big|,\]
where the majorizing function %\(\frac{1}{2\epsilon}\sup_{0\leq t\leq T}\big| S_1(t,\omega)\big|\) 
does not depend on \(x_1\) and its expectation \(\mathbb{E}\left[\frac{1}{2\epsilon}\sup_{0\leq t\leq T}\big| S_1(t)\big|\right]\) is finite by assumption (iv) and the fact that \(\epsilon>0\).
Hence, using the dominated convergence theorem to interchange the expectation and the limit, we have
\begin{align*}
\frac{\partial}{\partial x_1}\mathbb{E}[g_{\epsilon,y_1}(Z(x_1))]&=\lim_{h\rightarrow 0}\mathbb{E}\left[\frac{g_{\epsilon,y_1}(Z(x_1+h,\omega))-g_{\epsilon,y_1}(Z(x_1,\omega))}{h}\right]\\
&=\mathbb{E}\left[\lim_{h\rightarrow 0}\frac{g_{\epsilon,y_1}(Z(x_1+h,\omega))-g_{\epsilon,y_1}(Z(x_1,\omega))}{h}\right]=\mathbb{E}\left[\frac{\partial}{\partial x_1}g_{\epsilon,y_1}(Z(x_1))\right].
\end{align*}

Conditioning on the supremum process then gives
\begin{align*}
\frac{\partial}{\partial x_1}\mathbb{E}\left[g_{\epsilon,y_1}(Z(x_1))\right]&=-\int_{y_1-\epsilon}^{y_1+\epsilon}\frac{1}{2\epsilon}\mathbb{E}\left[S_1(t_{1,T}^*(x_1))|Z(x_1)=z\right]f_{1,x_1}(z)dz\\
&\xrightarrow{\epsilon\rightarrow 0} -\mathbb{E}\left[S_1(t_{1,T}^*(x_1))|Z(x_1)=y_1\right]f_{1,x_1}(y_1),
\end{align*}
where the limit follows from the fundamental theorem of calculus by noting that the expression inside the integral is continuous in by assumptions (i) and (iii).

\vb

Introducing the notation \(l(x_1):=-\mathbb{E}\left[S_1(t_{1,T}^*(x_1))|Z(x_1)=y_1\right]f_{1,x_1}(y_1)\), we will continue to show that 
\(\frac{\partial F_1(y_1,x_1)}{\partial x_1}=l(x_1)\). Using the new notation we have already shown that \(\frac{\partial}{\partial x_1}\mathbb{E}\left[g_{\epsilon,y_1}(Z(x_1))\right]\xrightarrow{\epsilon\rightarrow 0}l(x_1)\). By integration (of \(x_1\)) we would like to retrieve an expression for \(\mathbb{E}\left[g_{\epsilon,y_1}(Z(x_1))\right]\). In order to do so, we will interchange the integral (from 0 to \(x_1\)) and the limit (\(\epsilon\rightarrow 0\)). To this end, note that
\begin{align*}
\Bigg|\frac{\partial}{\partial x_1}\mathbb{E}\left[g_{\epsilon,y_1}(Z(x_1))\right]\Bigg|&\leq \mathbb{E}\left[\bigg|\frac{\partial}{\partial x_1}g_{\epsilon,y_1}(Z(x_1))\bigg|\right]=\int_{y_1-\epsilon}^{y_1+\epsilon}\frac{1}{2\epsilon}\mathbb{E}\left[\big|S_1(t_{1,T}^*(x_1))\big|\bigg|Z(x_1)=z\right]f_{1,x_1}(z)dz\\
& \leq M_1M_2<\infty.
\end{align*}
Here we have used that by the continuity of \(f_{1,x_1}(z)\) on the bounded interval \(x_1\in [1-\delta,1+\delta]\) and \(z\in [q_{{\rm VaR}^{\alpha}_{1,T}}(x_1)-\bar{\delta},q_{{\rm VaR}^{\alpha}_{1,T}}(x_1)+\bar{\delta}]\) by assumption (i), there exists some finite \(M_1\) independent of \(x_1\) and \(z\) such that \(f_{1,x_1}(z)\leq M_1\) on the same interval. Similarly we find \(M_2\) (also independent of \(x_1\) and \(z\)) as a bound for \(\mathbb{E}\left[|S_1(t_{1,T}^*(x_1))|\big|Z(x_1)=z\right]\) by assumption (iii).
As a result, we have shown that \(\big|\frac{\partial}{\partial x_1}\mathbb{E}\left[g_{\epsilon,y_1}(Z(x_1))\right]\big|\) is dominated by some finite constant \(M_1M_2\) independent of \(\epsilon\). Invoking the dominated convergence theorem, we interchange the integral (from 0 to \(x_1\)) and the limit \((\epsilon\rightarrow 0)\) and find for some constant \(c\),
\[\mathbb{E}\left[g_{\epsilon,y_1}(Z(x_1))\right]\xrightarrow{\epsilon\rightarrow 0}c +\int_{0}^{x_1}l(x)dx.\]
As \(\mathbb{E}\left[g_{\epsilon,y_1}(Z(x_1))\right]\xrightarrow{\epsilon\rightarrow 0}F_1(y_1,x_1)\), we have 
\[F_1(y_1,x_1)=c +\int_{0}^{x_1}l(z)dz.\]
We can then consider the integrand in the point \(x_1\), \(l(x_1)\), as the derivative of \(F_1(y_1,x_1)\) w.r.t.\ \(x_1\), i.e.

\[\frac{\partial}{\partial x_1}F_1(y_1,x_1)=-\mathbb{E}\left[S_1(t_{1,T}^*(x_1))\bigg|Z(x_1)=y_1\right]f_{1,x_1}\left(y_1\right),\]
which is finite-valued by assumptions (i) and (iii) for all \(x_1\in(1-\delta,1+\delta)\) and \(y_1\in(q_{{\rm VaR}^{\alpha}_{1,T}}(x_1)-\bar{\delta}/2,q_{{\rm VaR}^{\alpha}_{1,T}}(x_1)+\bar{\delta}/2)\). 
\end{enumerate}
We have now shown that the function \(F_1(y_1,x_1)\) is: 1) continuously differentiable in \(y_1\), and, 2) differentiable in \(x_1\) for \(x_1\in(1-\delta,1+\delta)\) and \(y_1\in(q_{{\rm VaR}^{\alpha}_{1,T}}(x_1)-\bar{\delta}/2,q_{{\rm VaR}^{\alpha}_{1,T}}(x_1)+\bar{\delta}/2)\). 

From the continuity of \(f_{1,x_1}(y_1)\) w.r.t.\ \(y_1\) at \(y_1=q_{{\rm VaR}^{\alpha}_{1,T}}(x_1)\) for all \(x_1\in(1-\delta,1+\delta)\) by assumption (i), we obtain,
\[F_1(q_{{\rm VaR}^{\alpha}_{1,T}}(x_1),x_1)=1-\alpha.\]
By the Implicit Function Theorem~\ref{th_app1} and the differentiabilities derived in items 1) \& 2) above, \(q_{{\rm VaR}^{\alpha}_{1,T}}(x_1)\) is a differentiable function of \(x_1\in(1-\delta,1+\delta)\) with 
\begin{align*}
\frac{\partial q_{{\rm VaR}^{\alpha}_{1,T}}(x_1)}{\partial x_1}=&-\left(f_{1,x_1}\left(q_{{\rm VaR}^{\alpha}_{1,T}}(x_1)\right)\right)^{-1}\frac{\partial}{\partial x_1}F(y_1,x_1)\bigg|_{y_1=q_{{\rm VaR}^{\alpha}_{1,T}}(x_1)}\\
=&\mathbb{E}\left[S_1(t_{1,T}^*(x_1))\big|Z(x_1)=q_{{\rm VaR}^{\alpha}_{1,T}}(x_1)\right].
\end{align*}

The final result follows by setting \(x_1=1\).
\end{proof}

%\begin{remark}
%It has been pointed out by \cite{Oyama2018} that \(A_T^*(x_i,\omega)\) has a selection \(t_T^*(x_i,\omega)\) continuous at \(x_i\) if \(A_T^*(x_i,\omega)\) is nonempty-valued and upper semi-continuous (which holds true, e.g., when \(\sum_{j\neq i}^dS_j(t,\omega)+x_i S_i(t,\omega)\) is continuous in \((t,x_i)\)) and \(A_T^*(x_1,\omega)\) is a singleton, in which case any selection of \(A_T^*(x_1,\omega)\) is continuous at \(x_i\). See \cite{Oyama2018} (Section 2, 3 \& Appendix) for this proof and its relation to earlier results found in literature. For a finite horizon the maximum of a L\'{e}vy processes excluding compound Poisson processes is obtained at a unique point in time almost surely, i.e.\ \(A_T^*(x_i)\) is a singleton a.s.\ (see \cite{Kyprianou2006} page 158). 
%\end{remark}

\end{appendices}

%\newpage
%\begin{appendices}
%\section{A}\label{appA}
%\end{appendices}

\end{document}